\newtheorem{theorem}{Theorem}[section]
\newtheorem{lemma}[theorem]{Lemma}
\newtheorem*{lemma*}{Lemma}
\newtheorem{proposition}[theorem]{Proposition}
\newtheorem{lem}[theorem]{Lemma}
\newtheorem{fact}[theorem]{Fact}
\newtheorem{corollary}[theorem]{Corollary}
\newtheorem{claim}[theorem]{Claim}
\newtheorem{prop}[theorem]{Proposition}
\newtheorem{definition}{Definition}
\newtheorem{question}{Question}
\theoremstyle{remark}
\newtheorem{remark}[theorem]{Remark}
\newcommand{\F}{\ensuremath{\mathbb{F}}}
\newcommand{\R}{\ensuremath{\mathbb{R}}}
\newcommand{\Z}{\ensuremath{\mathbb{Z}}}
\newcommand{\round}[1]{\lfloor{#1}\rceil}
\newcommand{\mmod}{~\mathrm{mod}~}
\newcommand{\inprod}[2]{\langle #1, #2 \rangle}
\newif\ifnotes\notestrue
\newcommand{\enote}[1]{\textcolor{red}{{\bf (Elena:} {#1}{\bf ) }} \marginpar{\tiny\bf
             \begin{minipage}[t]{0.5in}
               \raggedright ELENA NOTE
            \end{minipage}}}
\newcommand{\mnote}[1]{\textcolor{red}{{\bf (Mahdi:} {#1}{\bf ) }} \marginpar{\tiny\bf
             \begin{minipage}[t]{0.5in}
               \raggedright MAHDI
            \end{minipage}}}            
\newcommand{\knote}[1]{\textcolor{red}{{\bf (Karthik:} {#1}{\bf ) }} \marginpar{\tiny\bf
             \begin{minipage}[t]{0.5in}
               \raggedright ??
            \end{minipage}}}
\newcommand{\vnote}[1]{}
\newcommand{\knote}[1]{}
\newcommand{\mnote}[1]{}
\newcommand{\enote}[1]{}
\newcommand{\eps}{\epsilon}
\newcommand{\calz}{\mathcal{Z}}
\newcommand{\etal}{\textit{et al.}}
\newcommand{\var}{\text{var}}
\newcommand{\proj}{\text{proj}}
\newcommand{\ignore}[1]{}
\newcommand{\sidenote}[1]{ \marginpar{\tiny\bf
             \begin{minipage}[t]{0.5in}
               \raggedright *
            \end{minipage}}}
\title{Local Testing for Membership in Lattices}
\author{
Karthekeyan Chandrasekaran \thanks{Department of Industrial and Enterprise Systems Engineering, University of Illinois Urbana-Champaign, IL. Email: {\tt karthe@illinois.edu}.}
\and
Mahdi Cheraghchi \thanks{Department of Computing, Imperial College London, UK. Work done in part while the author was with Simons Institute for the Theory of Computing, University of California, Berkeley, CA and supported by a Qualcomm fellowship. Email:
{\tt m.cheraghchi@imperial.ac.uk}.} 
\and
Venkata Gandikota\thanks{Department of Computer Science, Purdue University,  West Lafayette, IN. Email: {\tt vgandiko@purdue.edu}. Research supported in part by the Purdue Research Foundation.}
\and Elena Grigorescu
    \thanks{Department of Computer Science, Purdue University,  West Lafayette, IN. Email: {\tt elena-g@purdue.edu}.  Research supported in part by the Purdue Research Foundation.}
}
\begin{document}
\maketitle


\begin{abstract}

Motivated by the structural analogies between {\em point lattices} and {\em linear error-correcting codes}, and by the mature theory  on {\em locally testable codes}, we initiate a systematic study of   local testing for membership in  lattices. Testing membership in lattices is also motivated in practice, by applications to integer programming, error detection in lattice-based communication, and cryptography. 

Apart from establishing the conceptual foundations of lattice testing, our results include the following:

\begin{enumerate}
\item We demonstrate upper and lower bounds on the query complexity of local testing 
for the well-known family of {\em code formula} lattices. 
Furthermore, we instantiate our results with code formula lattices constructed from Reed-Muller codes, and obtain nearly-tight  bounds.

\item We show that in order to achieve low query complexity, it is sufficient to design one-sided non-adaptive {\em canonical} tests. This result is akin to, and based on an  
analogous result for error-correcting codes due to Ben-Sasson \etal\  (SIAM J. Computing 35(1) pp1--21).
\end{enumerate}
\end{abstract}

\section{Introduction}
\label{sec:intro}
Local testing for properties of combinatorial and algebraic objects have widespread applications and have been intensely investigated in the past few decades. The main underlying goal in Local Property Testing is to distinguish objects that satisfy a given property from objects that are far from satisfying the property, using a small number of observations of the input object. 
Starting with the seminal works of \cite{BLR93, ref:FS95, RS96}, significant focus in the area has been devoted to locally testable error-correcting codes, called Locally Testable Codes (LTCs) \cite{Goldreich10}. LTCs are the key ingredients in several fundamental results in complexity theory, most notably in the PCP theorem \cite{ALMSS98, AS98}.

In this work we initiate the study of local testability for membership in {\em point lattices},   
a class of infinite algebraic objects that form subgroups of $\Z^n$. 
Lattices are well-studied in mathematics, physics and computer science due to their rich algebraic structure \cite{conway99}. Algorithms for various lattice problems have directly influenced the ability to solve integer programs \cite{Eisenbrand03, Lenstra83, Kannan87}. 
Recently, lattices have found applications in modern cryptography due to attractive properties that enable efficient computations and security guarantees \cite{MGbook, Micciancio:LLL25, Regev06,  Regev10}.
Lattices are also used in practical communication settings to encode data in a redundant manner in order to protect it from channel noise during transmission \cite{ForneyI}.

A point lattice $L\subset \R^n$ of {\em rank} $k$ and {\em dimension} $n$  is specified by a set of linearly independent vectors $b_1,\ldots, b_k\in \Z^n$ known as a basis, for some $k\le n$. If $k=n$ the lattice is said to have {full rank}. The set $L$ is defined to be the set of all vectors in $\R^n$ that are integer linear combinations of the basis vectors, i.e., $L:=\{ \sum_{i=1}^{k} \alpha_i b_i \mid \ \alpha_i\in \Z\ \forall\ i\in[k] \}$.
Lattices are the analogues over $\Z$ of linear error-correcting codes over a finite field $\F$, which are  generated as $\F$-linear combinations of a linearly independent set of basis vectors $b_1,\ldots, b_k\in \F^n$. 

Given a basis for a lattice $L$, we are interested in testing if a given input $t\in \R^n$ belongs to $L$, or is far from all points in $L$ by querying a small number of coordinates of $t$. 
We emphasize that this setting does not limit the computational space or time in pre-processing the lattice as well as the queried coordinates. The main goal is to design a tester that queries only a small number of coordinates of the input. 

\subsection{Motivation}

\noindent{\bf Integer Programming.} Lattices are the fundamental structures underlying integer programming problems. An integer programming problem (IP) is specified by a constraint matrix $A \in \Z^{n\times m}$, a vector $b\in \R^{n}$. The goal is to verify if there exists an integer solution to the system $Ax=b, x\ge 0$. Although IP is NP-complete \cite{Karp72}, its instances are solved routinely in practice using cutting planes and branch-and-cut techniques \cite{NW14}. The relaxed problem of verifying integer feasibility of the system $Ax=b$ is equivalent to verifying whether $b$ lies in the lattice generated by the columns of $A$. Thus, the relaxation problem is the membership testing problem in a lattice. It is solvable efficiently and is a natural pre-processing step to solving IPs. Furthermore, if the number of constraints $n$ in the problem is very large, then it would be helpful to run a tester that reads only a partial set of coordinates of the input $b$ to verify if $b$ could lie in the lattice generated by the columns of $A$ or is far from it. 
If the test rejects, then this saves on the computational effort to search for a non-negative solution. \\


\noindent{\bf Cryptography.} 
In cryptographic applications, it is imperative to understand which lattices are difficult to test in order to ensure security of lattice-based cryptosystems. 
In some cryptanalytic attacks on lattice-based cryptosystems,  one needs to distinguish target vectors that are close to lattice vectors from those that are far from all lattice vectors, a problem commonly known as the gap version of the Closest Vector Problem (GapCVP).
An approach to address GapCVP is to use expensive distance estimation algorithms inspired by Aharonov and Regev \cite{AharonovR05} and Liu \etal\ \cite{LiuLM06}. Local testing of lattices is closely related to both  distance estimation \cite{PRR06} and GapCVP, and hence progress in the proposed testing model could lead to new insights in cryptanalytic attacks. \\


\noindent{\bf Complexity theory.} 
Lattices can be seen as coding theoretic objects naturally bringing features of error-correcting codes from the finite field domain to the real domain.
As such, a study of local testing (and correction) procedures for lattices naturally extends the classical notions of Locally Testable Codes (LTCs) and Locally Decodable Codes (LDCs),  which are in turn of significance to
computational complexity theory (for example in constructing probabilistically checkable proofs
and hardness amplification, among numerous other applications). 
Characterizing local testability, explicitly  initiated by Kaufman and Sudan \cite{KS08}, has been an intensely investigated direction in the study of LTCs. We believe that an analogous investigation of lattices is likely to bring new insights and new connections in property testing. \\


\noindent{\bf Lattice-based communication.} 
Lattices are a major technical tool in communication systems as the analogue of error-correcting codes over reals, for applications such as wireless communication and transmission over analog lines. 
In lattice-coding, the message $m$ is mapped to a point $c$ in a chosen lattice $L$.  
The codeword $c$ is transmitted over an analog channel. 
If the encoded message gets corrupted by the channel, then the channel output may not be a lattice point, thus enabling transmission error detection. In order to correct errors, computationally expensive decoding algorithms are employed. 
Instead, the receiver may perform a local test for membership in the lattice beforehand, allowing the costly decoding computation to run only when there is a reasonably high chance of correct decoding. \\

We now give an informal description of our testing model motivated by its application in lattice-coding. 
The transmission of each coordinate of a lattice-codeword over the analog channel consumes power that is proportional to the square of the transmitted value. Thus the power consumption for transmitting the lattice-codeword $c\in L\subset \R^n$ is proportional to its  squared $\ell_2$ norm. The power consumption for transmitting a codeword over the channel is usually constrained by a power budget. The noise vector is also subject to a bound on its power.
The power budget for transmission is typically formulated by considering the lattice-code $C(L)$ defined by the set of lattice points $c\in L$ that satisfy $\sum_{i=1}^n c_i^2 \leq \sigma n$ for some constant power budget $\sigma > 0$. 
In order to ensure that the receiver can tolerate adversarial noise budget $\delta$ per channel use, the shortest nonzero vector $v\in L$ should be such that $\sum_{i=1}^n v_i^2 \ge \delta n$. Thus, the \emph{relative distance} of the lattice-code $C(L)$ is defined to be $\sum_{i=1}^n v_i^2/n$, where $v\in L$ is a shortest nonzero lattice vector.  
The \emph{rate} of a lattice-code $C(L)$ is defined to be $(1/n)\log{|C(L)|}$ (note that this quantity could be larger than 1). In this work, an \emph{asymptotically good family of lattices},  is one that achieves rate and relative distance that are both lower bounded by a nonzero constant. Such families are ideal for use in noisy communication channels. 

We define a notion of a tester that will be useful as a pre-processor for decoding,  and is similar to the established notion of  a code tester: An $\ell_2$-tester of a lattice $L$ for a given distance parameter $\eps>0$ is a probabilistic procedure that given an input $t\in \R^n$, queries at most $q$ coordinates of $t$, accepts with probability at least $2/3$ if $t\in L$, and rejects with probability at least $2/3$ if $\sum_{i=1}^n(t_i - w_i)^2 \geq \eps n$ for every $w \in L$. 

We start by formalizing the model and stating two main motivating goals.


\subsection{Testing model}

In the above application, we focused on $\ell_2$ distances. We now formalize the notion of testing lattices for $\ell_p$ distances, which is the natural notion of distance for real-valued inputs.  
We remark that despite an extensive literature on property testing with respect to the Hamming distance, testing under $\ell_p$ distance was  only recently proposed for systematic investigation by  Berman \etal\ \cite{BRY14} in the context of testing non-algebraic properties.
The $\ell_p$ distance between $x,y\in \R^n$ is defined as $d_p(x, y):=\|x-y\|_p=(\sum_{i\in [n]} |x_i-y_i|^p)^{1/p}$.  The distance from $v\in \R^n$ to $L$ is $d_p(v, L):=\min_{u\in L} d_p(v, u).$  Denote 
the $\ell_p$ norm of the real vector $1^n$ by $\|1^n\|_p$. 
For a lattice $L$, we denote the subspace of the lattice by $span(L)$. 
We focus on integral lattices, which are sub-lattices of $\Z^n$, as these are the most commonly encountered lattices in applications\footnote{Arbitrary lattices can be approximated by rational lattices and rational lattices can be scaled to integral lattices.}. 

\begin{definition}[Local test for lattices]
\label{defn:local-test}
An $\ell_p$-tester $T(\eps,c,s,q)$ for a lattice $L\subseteq \Z^n$ is a probabilistic algorithm that queries $q$ coordinates of the input $t \in \R^n$, and 
\begin{itemize}
\item (completeness) {\em accepts}  with probability at least $1-c$ if  $t \in L$,
\item (soundness) {\em rejects} with probability at least $1-s$ if $d_p( t, L) \geq \epsilon \cdot \|1^n \|_p$
(we call such a vector $t$ to be $\eps$-far from $L$).
\end{itemize}
 If $T$ always accepts inputs $t$ that are in the lattice $L$ then it is called $1$-sided, otherwise it is $2$-sided. If the queries performed by $T$ depend on the answers to the previous queries, then $T$ is called adaptive, otherwise it is called non-adaptive.  
\end{definition}

A test $T(\eps,0,0, q)$ is a test with  perfect completeness and perfect soundness. {\em $1$-sided testers} (i.e., testers with perfect completeness) are useful as a pre-processing step, as mentioned earlier.

 An {\em asymptotically good} family of lattices $L(n)$ satisfies: 1) $\min_{v\in L(n)}\|v\|_p^p/n=\Omega(1)$, and 2) contains  $2^{\Omega(n)}$ lattice points in the origin-centered $\ell_p$-ball of radius $n^{1/p}$.

Similar to the application in lattice-coding and locally testable codes, a main question in $\ell_p$-testing of lattices is the following: 
\begin{question} \label{question-1}
Is there an asymptotically good family of lattices that can be tested for membership with constant number of queries? 
\end{question}

\noindent Motivated by the applications in IP and cryptography, we identify another fundamental question in $\ell_p$-testing of lattices: 
\begin{question} \label{question-2}
What properties of a given lattice enable the design of $\ell_p$-testers with constant query complexity?
\end{question}

\noindent{\bf Tolerant Testing.} Many applications can tolerate a small amount of noise in the input. Parnas \etal\ \cite{PRR06} introduced the notion of tolerant testing 
to account for a small amount of noise in the input. 
Tolerant testing has been studied in the context of codes (e.g. \cite{GuruswamiR05, KoppartyS09}),  and in the context of properties of real-valued data in the $\ell_p$ norm (e.g. \cite{BRY14}). We extend the tolerant testing model to lattices as follows.

\begin{definition}[Tolerant local test for lattices]
An $\ell_p$-tolerant-tester $T(\eps_1,\eps_2,c,s,q)$ for a lattice $L\subseteq \Z^n$ is a probabilistic algorithm that queries $q$ coordinates of the input $t \in \R^n$, and 
\begin{itemize}
\item (completeness) {\em accepts}  with probability at least $1-c$ if  $d_p(t,L)\le \eps_1 \cdot \|1^n \|_p$,
\item (soundness) {\em rejects} with probability at least $1-s$ if $d_p( t, L) \geq \eps_2 \cdot \|1^n \|_p$.
\end{itemize}
\end{definition}
Tolerant testing with parameter $\eps_1=0$ corresponds to the notion of testing given in Definition \ref{defn:local-test}. Tolerant testing and distance approximation are closely related notions. In fact, in the Hamming space, the ability to perform tolerant testing for \emph{every} choice of $\eps_1<\eps_2$ can be exploited to approximate distances (using a binary search) \cite{PRR06}.\\

\noindent{\bf Analogy with code testers.} A common notion of testing for membership in \emph{error-correcting codes} requires that inputs at \emph{Hamming distance} at least $\eps n$ from the code be rejected. 
(This notion is only relevant when the covering radius of the code is larger than $\eps n$.) We include the common definition here, and note that stronger versions of testing have also been considered in the literature \cite{Goldreich10,GuruswamiR05}. 

\begin{definition}[Local test for codes] \label{defn:code-test}
A tester $T(\eps,c,s,q)$ for an error-correcting code $C\subseteq \F^n$ is a probabilistic algorithm that makes $q$ queries to the input $t \in \F^n$, and 
\begin{itemize}
\item (completeness) {\em accepts}  with probability at least $1-c$ if  $t \in C$, and 
\item (soundness) {\em rejects} with probability at least $1-s$ if $d_H( t, C) \geq \epsilon \cdot n$, where $d_H(u,v):=|\{i\in [n]:u(i)\neq v(i)\}|$ denotes the Hamming distance between $u$ and $v$, and $d_H(t, C):=\min_{c\in C} d_H(t,c)$ (we call such a vector $t$ to
be $\eps$-far from $C$).
\end{itemize}  
\end{definition}


\subsection{Our contributions}

We initiate the study of membership testing in point lattices from the perspective of sublinear algorithms. Our contributions draw on connections between lattices and codes, and on well-known techniques in property testing. 

\subsubsection{Upper and lower bounds for testing specific lattice families}
We make progress towards Question \ref{question-1} 
by focusing on an asymptotically good family of sets constructed from linear codes, via the so-called ``code formula''  \cite{ForneyI}. 
We show upper and lower bounds on the query complexity of $\ell_1$-testers for code formulas, as a function of the query complexity of the constituent code testers. \\

\noindent {\bf Code formula lattices.} 
For simplicity, in what follows we will slightly abuse notation and use binary code $C\subseteq \{0,1\}^n$ to denote both the code viewed over the field $\F_2=\{0,1\}$ and the code embedded into $\R^n$ via the trivial embedding $0\mapsto 0$ and $1\mapsto 1$. All the arithmetic operations in the code formula refer to operations in  $\R^n$. For two sets $A$ and $B$ of vectors we define $A+B:=\{a+b\mid a\in A, b\in B\}$. 

\begin{definition}[Code Formula] 
Let $C_0 \subseteq C_1 \subseteq \cdots \subseteq C_{m-1} \subseteq C_m = \mathbb{F}_2^n$ be a family of nested binary linear codes. Then the code formula constructed from the family is defined as 
\[  C_0 + 2C_1 + \cdots + 2^{m-1}C_{m-1} + 2^m \mathbb{Z}^n. \]
Here, $m$ is the {\em height} of the code-formula.

If the family satisfies the \emph{Schur product condition}, namely, $c_1 * c_2 \in C_{i+1}$ for all codewords $c_1, c_2 \in C_i$, where the `*' operator is the coordinate-wise (Schur) product $c_1*c_2=\langle (c_1)_i \cdot (c_2)_i\rangle_{i\in[n]}$, then the code-formula forms a {\em lattice} (see \cite{KO14}) and we denote it by $L(\langle C_i \rangle_{i=0}^{m-1})$. 

\end{definition}

\noindent {\bf Significance of code formula lattices.} 
Code formula lattices with height one already have constant rate if the constituent code $C_0$ has minimum Hamming distance $\Omega(n)$. Unfortunately, these lattices have tiny relative minimum distance (since $2\Z^n$ has constant length vectors). 
However, code formulas of larger height achieve much better relative distance. In particular, it is easy to see that code formula lattices of height $m \geq \log{n}$ in which each of the constituent codes $C_i$ has minimum Hamming distance $\Omega(n)$ give asymptotically good families of lattices \cite{GZ06, conway99}. The code formula lattice constructed from a family of codes that satisfies the Schur-product condition is equivalent to the lattice constructed from the same family of codes by Construction D \cite{LS71, conway99, KO14}. Construction-D lattices are primarily used in communication settings, e.g. see Forney \cite{ForneyI}. 

In this work we design a tester for code formula lattices using testers for the constituent codes. 

\begin{restatable}{theorem}{thmConstrDTest}
\label{thm:constrD:test}
Let $0<\eps,s<1$ and $C_0\subseteq C_1 \subseteq \cdots \subseteq C_{m-1}\subseteq \{0,1\}^n$ be a family of binary linear codes satisfying the Schur product condition.
Suppose every $C_i$ has a 1-sided tester 
$T_i(\eps/m2^{i+1}, 0, s, q_i)$. 
Then, there exists an $\ell_1$-tester $T(\eps, 0, s , q)$ for 
the lattice $L(\langle C_i \rangle_{i=0}^{m-1})$
with query complexity  
$$
q =O\left(\frac{1}{\eps}\log{\frac 1s}\right)+\sum_{i=1}^{m-1} q_i.
$$
\end{restatable}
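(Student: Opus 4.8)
The plan is to reduce membership in the code formula lattice $L = C_0 + 2C_1 + \cdots + 2^{m-1}C_{m-1} + 2^m\Z^n$ to a sequence of $m$ code-membership tests, one per ``digit'' of the input, plus one cheap test for divisibility at the top level. Given an input $t\in\R^n$, the first step is to verify that $t$ looks integral: sample $O((1/\eps)\log(1/s))$ coordinates and reject if any is non-integral (or, more carefully, far from an integer), since a vector with many non-integral coordinates is $\eps$-far from the integral lattice $L$ in $\ell_1$. Conditioned on integrality, write the coordinates in binary: the least significant bit of $t$ (mod $2$) must lie in $C_0$, so run $T_0$ on the reduced vector $t \bmod 2$; then the next bit of $t - (t\bmod 2)$ must lie in $C_1$, i.e.\ run $T_1$ on $\lfloor t/2\rfloor \bmod 2$; and so on up through $C_{m-1}$. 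Finally no test is needed at level $m$ since $2^m\Z^n$ imposes no residue constraint. Union-bounding over the $m$ code testers and the integrality test gives one-sided error with total rejection probability bounded by $s$ after adjusting the per-test soundness parameters, and the query count is $O((1/\eps)\log(1/s)) + \sum_{i=1}^{m-1} q_i$ (the $q_0$ term being absorbed, or $C_0$'s test being the all-$1$s trivial test depending on the code; either way it matches the claimed bound).

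The subtle point, and the reason the parameters $\eps/m2^{i+1}$ appear in the hypothesis, is the \emph{soundness analysis}: we must argue that if $t$ is $\eps$-far from $L$ in $\ell_1$, then for \emph{some} level $i$ the relevant reduced binary vector is $\Theta(\eps/m2^{i+1})$-far from $C_i$ in Hamming distance, so that $T_i$ detects it. The natural approach is the contrapositive. Suppose every level $i$ is \emph{close}: the $\ell_1$-integrality defect of $t$ is small, and for each $i$ the $i$-th binary layer $y_i := \lfloor t/2^i\rfloor \bmod 2$ is $(\eps/m2^{i+1})$-close in Hamming distance to some codeword $c_i \in C_i$. Then I would construct an explicit nearby lattice point $w = \sum_{i=0}^{m-1} 2^i c_i + 2^m \lfloor t/2^m\rfloor$ and bound $\|t - w\|_1$. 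The contribution of correcting layer $i$ to the $\ell_1$ distance is at most (number of bad coordinates at level $i$) $\times 2^i \le (\eps/m2^{i+1})\cdot n \cdot 2^i = \eps n/(2m)$, and summing over the $m$ layers gives $\eps n /2 < \eps\|1^n\|_1 = \eps n$, contradicting $\eps$-farness (with the integrality defect absorbed into the remaining slack). This is the geometric heart of the argument: a small \emph{Hamming} error in a high-order binary layer costs a lot in $\ell_1$, which is exactly why the tolerance for $C_i$ must shrink like $2^{-i}$, and why the error budget $\eps$ must be split $m$ ways.

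The one genuine obstacle I anticipate is handling the interaction between the integrality test and the layer structure cleanly. A coordinate of $t$ that is, say, within $1/4$ of an integer but not equal to one should be ``rounded'' before extracting bits, and one has to check that the rounding does not blow up the distance bookkeeping and that the layer tests are run on the rounded vector; the right way is to let the integrality test reject any coordinate whose distance to $\Z$ exceeds a small absolute constant, and on surviving coordinates round to the nearest integer, charging the (provably small, by soundness of the integrality test together with $\eps$-farness) total rounding error into the $\ell_1$ budget alongside the layer-correction terms. I also need to confirm the testers $T_i$ run on a consistently defined reduced vector regardless of the answers to earlier queries — i.e.\ non-adaptivity is preserved because the $i$-th reduced bit of a coordinate is a fixed function of that coordinate's value, so the final tester queries each sampled coordinate once and feeds its bits to all the $T_i$'s in parallel. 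Everything else — the union bound over $O(m)$ events, the choice of constants in $O((1/\eps)\log(1/s))$ for the integrality sampler — is routine.
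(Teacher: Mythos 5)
Your proposal is correct and follows essentially the same approach as the paper: reduce to integral inputs via a sampling-based integrality check, extract the bit layers $w_i$, run each constituent code tester $T_i$ on its layer, and derive soundness from the inequality $d_1(w,L)\le\sum_{i} 2^i d_1(w_i,C_i)$ followed by averaging over the $m$ layers (your contrapositive formulation is exactly this inequality read in reverse). One small simplification to note: no rounding threshold or error-charging is needed in the integrality step --- the integrality subtest can simply reject outright on any non-integral queried coordinate (and the layer testers likewise reject if they ever see a non-integral value), which is precisely how the paper handles this via Lemma~\ref{lemma:integer-targets}, and avoids the bookkeeping subtlety you were worried about.
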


Next, we show a lower bound on the query complexity for testing membership in code formula lattices, using lower bounds for testing membership in the constituent codes.

\begin{restatable}{theorem}{thmLowerBoundCodeTester}
\label{theorem:lower-bound-tester-for-code}
Let $0<\eps,c,s<1$ and $C_0\subseteq C_1 \subseteq \cdots \subseteq C_{m-1}\subseteq \{0,1\}^n$  be a family of binary linear codes satisfying the Schur product condition. 
Let $q_i=q_i(\eps, c,s)$ be such that any (possibly adaptive, 2-sided) $\ell_1$-tester $T_i(\eps,c,s,q')$ for $C_i$  satisfies $q'=\Omega(q_i)$, for every $i=0,1,\ldots, m-1$.
Then every (possibly adaptive, 2-sided) $\ell_1$-tester $T(\epsilon,c,s,q)$ for the lattice $L(\langle C_i \rangle_{i=0}^{m-1})$ has query complexity 
\[
q= \Omega\left(\max\left\{\frac{1}{\eps}\log{\frac{1}{s}},\max_{i=0,1,\ldots,m-1}q_i\right\}\right).
\] 
\end{restatable}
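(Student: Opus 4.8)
The plan is to prove the lower bound by reducing the problem of testing membership in a constituent code $C_i$ to the problem of testing membership in the lattice $L = L(\langle C_j\rangle_{j=0}^{m-1})$, and then separately handling the $\Omega\left(\frac{1}{\eps}\log\frac{1}{s}\right)$ term directly. For the first part, fix an index $i \in \{0,1,\ldots,m-1\}$ and suppose for contradiction that there is an $\ell_1$-tester $T(\eps,c,s,q)$ for $L$ with $q = o(q_i)$. The key idea is that a vector $t \in \{0,1\}^n$ (viewed in $\R^n$) lies in $C_0$ if and only if it lies in $L$, since the only way to obtain a $\{0,1\}$-vector from $C_0 + 2C_1 + \cdots + 2^{m-1}C_{m-1} + 2^m\Z^n$ is for all the higher-order terms to vanish on each coordinate. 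To reach $C_i$ for general $i$, I would instead work with the scaled set $2^i C_i$ embedded appropriately: a vector of the form $2^i t$ with $t \in \{0,1\}^n$ lies in $L$ iff $t \in C_i$, because modulo $2^{i+1}$ the lattice restricted to multiples of $2^i$ forces the $C_i$-component, and one checks both inclusions. Moreover $\ell_1$-distances scale by exactly $2^i$ under this map, so $\eps$-farness is preserved (up to the constant $2^i$, which can be absorbed since the constituent lower bound $q_i$ is stated as a function of fixed $\eps,c,s$ — one may need to be slightly careful here and instead note that any lattice tester, run on inputs supported on $\{0, 2^i\}^n$, yields a $C_i$-tester with the same query complexity and the same soundness/completeness parameters up to relabeling of $\eps$). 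Running $T$ on such inputs thus gives a (possibly adaptive, 2-sided) tester for $C_i$ making $q$ queries, contradicting $q = \Omega(q_i)$; hence $q = \Omega(q_i)$ for every $i$, and taking the max gives $q = \Omega(\max_i q_i)$.

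The second part — showing $q = \Omega\left(\frac{1}{\eps}\log\frac{1}{s}\right)$ for any $\ell_1$-tester of $L$ — is essentially a coupon-collector / indistinguishability argument and does not depend on the code structure beyond the fact that $L$ contains $2^m\Z^n$ and hence contains vectors that differ from $0^n$ in only a single coordinate (value $2^m$ there). The plan is: consider the all-zeros input $t = 0^n$, which is in $L$, versus a ``hard'' distribution on inputs that are $\eps$-far from $L$. Concretely, pick a random subset $S \subseteq [n]$ of an appropriate density and set $t_j = 1$ for $j \in S$ and $0$ elsewhere, choosing $|S| \approx \eps n$ so that $t$ has $\ell_1$-distance roughly $\eps n = \eps\|1^n\|_1$ from $0$; one must argue $0$ is (close to) the nearest lattice point, which follows because rounding each coordinate to the nearest lattice-admissible value and the small support of $t$ keep it far from $L$ (any nonzero lattice vector has either a coordinate of magnitude $\geq 2^m$, costing a lot, or lies in $C_0 \subseteq \{0,1\}^n$ and then differs from $t$ on $\geq$ distance of $C_0$ or agrees with $t$ — a short case analysis). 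A $q$-query tester that does not query any coordinate in $S$ sees exactly the same view as on $0^n$; since $\Pr[\text{no query hits } S] \geq (1-\eps)^q \geq e^{-2\eps q}$ for small $\eps$, to distinguish the two cases with confidence $1-s$ we need $e^{-2\eps q} \lesssim s$, i.e. $q = \Omega\left(\frac{1}{\eps}\log\frac{1}{s}\right)$. This matches the standard Yao-principle lower bound, and combining the two parts yields the claimed bound.

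The main obstacle I expect is the reduction in the first part: specifically, ensuring that the embedding $t \mapsto 2^i t$ (or more precisely, the restriction of the lattice tester to inputs supported on $\{0,2^i\}^n$) genuinely produces a valid $C_i$-tester with the right soundness guarantee. The subtlety is that an input $t$ being $\eps$-far from $C_i$ in Hamming/embedded-$\ell_1$ distance must translate to $2^i t$ being $\eps'$-far from the \emph{entire} lattice $L$ for a comparable $\eps'$ — and a priori the lattice $L$ is much richer than $2^i C_i$, so a vector could be close to some other component of $L$ even while far from $C_i$. I would resolve this by proving a ``closest-vector decomposition'' lemma: for $t$ supported on $\{0,2^i\}^n$, the $\ell_1$-distance $d_1(t, L)$ equals $2^i \cdot d_1(t/2^i, C_i)$ up to a constant factor, by analyzing the greedy/digit-by-digit nearest-lattice-point computation (this is where the Schur product / Construction-D structure and the nested-code property get used, guaranteeing that the natural coordinate-wise decoding respects the lattice). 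A cleaner alternative, if the digit analysis is messy, is to only claim the bound with $\eps$ replaced by $c_0 \eps$ for an absolute constant $c_0$ and observe that the stated theorem's $q_i = q_i(\eps,c,s)$ can tolerate this since the conclusion is stated up to $\Omega(\cdot)$; I would flag in the write-up exactly which constant-factor slack is being invoked so the reader can check it is benign.
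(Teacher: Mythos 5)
Your first part — scaling inputs by $2^i$ to turn a lattice tester into a $C_i$-tester — is exactly the paper's reduction (its Lemma~\ref{lemma:distance-to-lattice} and Claim~\ref{claim:all-coordinates-are-large}), and you correctly flag the key subtlety that $2^i t$ being far from the \emph{whole} lattice is what needs proving, resolving it by the same coordinate-by-coordinate parity/digit argument. One small imprecision: you write that $d_1(2^i t, L)$ equals $2^i d_1(t, C_i)$ ``up to a constant factor,'' but the true two-sided bound is $d_1(t,C_i)\le d_1(2^i t,L)\le 2^i d_1(t,C_i)$, so the ratio can be as large as $2^i$; however, the direction you need for the lower bound ($d_1(2^i t,L)\ge d_1(t,C_i)$) is a genuine inequality with no loss, so no slack in $\eps$ is actually needed and the paper uses this directly.

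The second part has a real gap. Your hard distribution places vectors $t\in\{0,1\}^n$ of weight $\approx\eps n$ with random support and asserts they are $\eps$-far from $L$, justified by a case split ``any nonzero lattice vector has a coordinate of magnitude $\ge 2^m$ or lies in $C_0\subseteq\{0,1\}^n$.'' That dichotomy is false — $L$ contains vectors with coordinates in $\{0,1,\dots,2^m-1\}$ that are neither in $\{0,1\}^n$ nor have any coordinate $\ge 2^m$ (e.g.\ $c_0+2c_1$ for $c_1\ne 0$). What the digit argument actually gives is $d_1(t,L)\ge d_H(t,C_0)$, and $d_H(t,C_0)\ge\eps n$ requires $C_0$ to have minimum distance at least $2\eps n$; the theorem makes no such assumption, so for codes $C_0$ with small distance your no-case distribution may contain vectors that are close to, or even in, $L$. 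The paper sidesteps this entirely by choosing the no-case to have $2\eps n$ coordinates equal to $1/2$: such a vector is unconditionally at $\ell_1$-distance $\eps n$ from $\Z^n$, hence from $L\subseteq\Z^n$, regardless of the structure of $C_0$ (see Claim~\ref{claim:lb-for-integer-lattice} and the remark that $d(w,\Z^n)\le d(w,L)$). Replacing your $\{0,1\}$-valued perturbation with a $\tfrac12$-valued one fixes the argument and recovers the paper's proof.
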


\noindent {\em Code formula lattices from Reed-Muller codes.} We instantiate the upper and lower bounds on the query complexity for a common family of code formula lattices constructed using Reed-Muller codes \cite{ForneyI} to obtain nearly matching upper and lower bounds. We recall Reed-Muller codes below.

\begin{definition}[Reed Muller Codes]
Each codeword of a binary Reed-Muller code $RM(k,r)\subseteq \F_2^{2^r}$ corresponds to a polynomial $p(x)\in \F_2[x]$ in $r$ variables of degree at most $k$ evaluated at all $2^{r}$ possible inputs $x\in \F_2^r$. 
\end{definition}
For the family of Reed-Muller codes in $\F_2^{2^r}$, it is well-known that  $RM(0, r) \subseteq 
RM(1, r) \subseteq RM(2, r) \subseteq RM(3, r) \subseteq \cdots \subseteq RM(r-1, r)\subseteq RM(r, r)=\F_2^{2^r}$. 
A particular family of RM codes that leads to code formula lattices is $\langle RM(k_i, r) \rangle_{i = 0}^{\log r}$, with $k_i=2^i$. 
Indeed, it can be easily verified that this family satisfies the Schur product condition since Reed-Muller codewords are evaluation tables of multivariate polynomials over the binary field and product of two degree $k$ polynomials is a degree $2k$ polynomial. Hence for height $m \leq \log r$ the construction $\langle RM(2^i, r) \rangle_{i = 0}^{m-1}$ gives rise to a lattice. 

\begin{restatable}{cor}{coroConstrDTest}
\label{coro:constrD:test}
Let $0 \leq k_0 < k_1 < \cdots < k_{m-1} < r $ be integers such that the family of Reed-Muller codes $RM(k_0,r)\subseteq RM(k_1,r)\subseteq\cdots \subseteq RM(k_{m-1},r)$ satisfies the Schur product condition. Let $0<\eps,s<1$ and $L$ be the lattice obtained from this family of codes using the code formula construction:
\[ L = RM(k_0,r)+ 2 RM(k_1, r)+ \cdots + 2^{m-1} RM(k_{m-1}, r) + 2^m \Z^{2^r}. \]
Then, there exists an 
$\ell_1$-tester $T(\eps, 0, s, q)$ for $L$ with query complexity $$q(\eps, s) =
O\left(2^{k_{m-1}} \cdot \frac{1}{\eps}  \log {\frac 1s} \right).$$
\end{restatable}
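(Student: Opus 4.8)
The plan is to apply Theorem~\ref{thm:constrD:test} directly, with the role of the constituent code testers $T_i$ played by (one-sided, non-adaptive) local testers for the Reed-Muller codes $RM(k_i,r)$. The key input is the well-known local testability of Reed-Muller codes: a binary Reed-Muller code of order $k$ and $r$ variables can be tested with $O(2^k/\eps')$ queries, with perfect completeness and soundness error bounded by any constant, by restricting to a random affine subspace of dimension roughly $k+O(\log(1/\eps'))$ and checking that the restricted evaluation table lies in $RM(k, k+O(\log(1/\eps')))$ (equivalently, running the Bhattacharyya–Kopparty–Schoenebeck–Sudan / Alon–Kaufman–Krivelevich–Litsyn–Ron style test). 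Thus for each $i$ we obtain a 1-sided tester $T_i(\eps',0,s,q_i)$ with $q_i = O(2^{k_i}\cdot \tfrac{1}{\eps'}\log\tfrac1s)$ — one should be slightly careful to get the $\log(1/s)$ dependence by standard amplification, repeating a constant-soundness tester $O(\log(1/s))$ times.

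Next I would plug these into Theorem~\ref{thm:constrD:test}. That theorem asks, for a target parameter $\eps$, for testers $T_i$ with distance parameter $\eps' = \eps/(m2^{i+1})$. So the per-code query cost becomes
\[
q_i = O\!\left(2^{k_i}\cdot \frac{m 2^{i+1}}{\eps}\log\frac1s\right).
\]
Summing over $i = 1,\dots,m-1$ and adding the $O(\tfrac1\eps\log\tfrac1s)$ overhead term from Theorem~\ref{thm:constrD:test} gives
\[
q = O\!\left(\frac{1}{\eps}\log\frac1s\right) + O\!\left(\frac{m}{\eps}\log\frac1s\sum_{i=1}^{m-1} 2^{k_i+i+1}\right).
\]
The sum $\sum_{i=1}^{m-1} 2^{k_i+i+1}$ is geometric-like and dominated by its last term $2^{k_{m-1}+m}$, since the $k_i$ are strictly increasing integers (so $k_i \le k_{m-1}-(m-1-i)$, which makes $2^{k_i+i} \le 2^{k_{m-1}+m-1}$ with geometric decay in $m-1-i$); hence the whole sum is $O(2^{k_{m-1}+m})$. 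Since the code formula only forms a lattice for $m \le \log r$ and, more to the point, we always have $m \le k_{m-1}+1$ (as $k_0<\cdots<k_{m-1}$ are $m$ distinct nonnegative integers), the factor $2^m$ and the leading factor $m$ are absorbed — $m\,2^{m} = O(2^{2k_{m-1}})$ in the worst case, but in the intended regime $k_i = 2^i$ one has $2^{k_{m-1}} = 2^{2^{m-1}}$ which already dominates any $\poly(m)2^{m}$ factor. Collapsing constants and the lower-order additive term yields $q(\eps,s) = O\!\left(2^{k_{m-1}}\cdot \tfrac1\eps\log\tfrac1s\right)$, as claimed.

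The main obstacle — really the only non-mechanical point — is justifying the precise query complexity and the $\log(1/s)$ soundness dependence of the Reed-Muller local tester at the subconstant distance parameter $\eps' = \eps/(m2^{i+1})$, and making sure the hidden constants do not secretly depend on $r$ or $m$; this is where one must cite the sharp analysis of the RM tester (rather than a crude union-bound argument) and invoke standard success-probability amplification. A secondary bookkeeping point is verifying the claim that the geometric sum is dominated by its last term and that the accumulated $\poly(m)\cdot 2^m$ factors are swallowed by $2^{k_{m-1}}$ under the hypothesis that the $k_i$ are strictly increasing integers with $k_{m-1} < r$; in the concrete instantiation $k_i = 2^i$ this is immediate, and I would state the corollary's bound with that regime in mind.
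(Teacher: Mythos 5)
Your high-level plan (invoke Theorem~\ref{thm:constrD:test} with Reed--Muller testers as the constituent $T_i$, then sum the query costs) is exactly the paper's route, and the observation that $\sum_i 2^{k_i+i}$ is dominated by its last term is a correct consequence of the $k_i$ being strictly increasing. However, there is a genuine gap: you cite the Reed--Muller tester complexity as $q_i = O\bigl(2^{k_i}\cdot\tfrac{1}{\eps'}\log\tfrac1s\bigr)$, i.e.\ the $2^{k_i}$ and $1/\eps'$ factors \emph{multiply}. The paper instead uses the sharper bound from \cite{BKSSZ10} (Theorem~\ref{theorem:RM-codes-upper-bound}), namely $q_i = O\bigl((\log\tfrac1s)(2^{k_i} + \tfrac{1}{\eps'})\bigr)$, where they \emph{add}. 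This difference is not cosmetic. Plugging $\eps' = \eps/(m2^{i+1})$ into the additive bound gives $q_i = O\bigl((\log\tfrac1s)(2^{k_i} + \tfrac{m2^{i+1}}{\eps})\bigr)$, so that $\sum_i q_i = O\bigl((\log\tfrac1s)(2^{k_{m-1}} + \tfrac{m2^{m}}{\eps})\bigr)$; the Schur-product constraint forces $k_{m-1}\ge 2^{m-2}$, so $2^{k_{m-1}}\ge 2^{2^{m-2}} \ge m2^m$ (for $m\ge 5$), which makes the whole expression $O\bigl(\tfrac{2^{k_{m-1}}}{\eps}\log\tfrac1s\bigr)$. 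With your multiplicative bound, the exponents compound into $\sum_i 2^{k_i+i+1}$ and the result is $O\bigl(\tfrac{m\,2^m\,2^{k_{m-1}}}{\eps}\log\tfrac1s\bigr)$ --- the factor $m2^m$ now \emph{multiplies} $2^{k_{m-1}}$, and the fact that $2^{k_{m-1}}$ dominates $m2^m$ does not make the product $O(2^{k_{m-1}})$; that would require $m2^m = O(1)$. So your argument proves a weaker bound, off by a $\mathrm{poly}(m)\,2^m$ factor, which only collapses to the stated one when $m$ is constant.

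You flag this bookkeeping issue yourself ("I would state the corollary's bound with that regime in mind"), but the intended resolution is not to restrict to $k_i=2^i$ or constant $m$: it is to use the BKSSZ additive form of the RM tester complexity, which keeps the $\tfrac{1}{\eps'}$-dependence from compounding with $2^{k_i}$. Together with the Schur-product inequality $k_{m-1}\ge 2^{m-2}$ (which you should derive from the hypothesis, rather than appeal to the particular choice $k_i=2^i$), this gives the claimed $q = O\bigl(2^{k_{m-1}}\cdot\tfrac1\eps\log\tfrac1s\bigr)$ for all admissible $m$.
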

In particular, when the height $m$ and the degrees  are constant, the query complexity of the tester is a constant.

For the lower bound, we obtain the following corollary using known lower bounds for testing Reed-Muller codes.
\begin{restatable}{cor}{CoroConstrDLB}
\label{coro:constrD:lb}
Let $0 \leq k_0 < k_1 < \cdots < k_{m-1} < r $ be integers such that the family of Reed-Muller codes $RM(k_0,r)\subseteq RM(k_1,r)\subseteq\cdots \subseteq RM(k_{m-1},r)$ satisfies the Schur product condition.
Let $0<\eps,c,s<1$ be constants and $L$ be the lattice obtained from this family of codes using the code formula construction:
\[ L = RM(k_0,r)+ 2 RM(k_1, r)+ \cdots + 2^{m-1} RM(k_{m-1}, r) + 2^m \Z^{2^r}. \]
Then, every (possibly $2$-sided, adaptive) $\ell_1$-tester $T(\eps,c,s,q)$ for $L$ has query complexity $$q=\Omega(2^{k_{m-1}}).$$
\end{restatable}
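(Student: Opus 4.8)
The plan is to obtain Corollary~\ref{coro:constrD:lb} directly from Theorem~\ref{theorem:lower-bound-tester-for-code} by instantiating it with $C_i = RM(k_i,r)$ and supplying the per‑code lower bounds $q_i$ that the theorem requires. Theorem~\ref{theorem:lower-bound-tester-for-code} already reduces lower‑bounding the lattice tester to lower‑bounding the query complexity of adaptive, $2$‑sided $\ell_1$‑testers for each constituent code $RM(k_i,r)$ and gives $q=\Omega\!\left(\max\{\tfrac1\eps\log\tfrac1s,\ \max_i q_i\}\right)$; since $\eps,c,s$ are constants the term $\tfrac1\eps\log\tfrac1s$ is $O(1)$ and is absorbed, so everything comes down to showing $q_i=\Omega(2^{k_i})$ and noting that the $k_i$ are strictly increasing, so $\max_i 2^{k_i}=2^{k_{m-1}}$.

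First I would record the elementary reduction from $\ell_1$‑testing a binary code (embedded in $\R^n$ by the trivial embedding) to Hamming‑distance code testing. For any $t\in\{0,1\}^n$ and any binary codeword $c$ we have $\|t-c\|_1 = d_H(t,c)$, and since $RM(k_i,r)$ viewed inside $\R^n$ consists only of $\{0,1\}$‑vectors, $d_1(t, RM(k_i,r)) = d_H(t, RM(k_i,r))$; moreover $\|1^n\|_1 = n$, so ``$\eps$‑far in $\ell_1$'' coincides with ``Hamming distance $\ge \eps n$'' on binary inputs. Hence any $\ell_1$‑tester for $RM(k_i,r)$, restricted to binary inputs, is a Hamming‑distance tester for $RM(k_i,r)$ with the same parameters, query count, adaptivity, and sidedness; consequently any lower bound on the Hamming‑testing query complexity of $RM(k_i,r)$ is a valid choice of $q_i$ for Theorem~\ref{theorem:lower-bound-tester-for-code}.

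Next I would invoke the known tight lower bound for testing Reed--Muller codes over $\F_2$: for constant $\eps,c,s$, testing membership in $RM(k,r)$ in the Hamming metric requires $\Omega(2^{k})$ queries even for adaptive two‑sided testers (matching the $O(2^{k})$‑query affine‑subspace tester for constant $\eps$). Taking $k=k_i$ gives $q_i=\Omega(2^{k_i})$, and feeding $\{q_i\}$ into Theorem~\ref{theorem:lower-bound-tester-for-code} yields
\[
q = \Omega\!\left(\max\Bigl\{\tfrac1\eps\log\tfrac1s,\ \max_{i}2^{k_i}\Bigr\}\right) = \Omega\!\left(2^{k_{m-1}}\right),
\]
as claimed.

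The step that needs the most care is invoking the Reed--Muller lower bound with the right quantifiers and in the right parameter regime: one must cite a version that holds for \emph{adaptive, two‑sided} testers (not merely for the non‑adaptive one‑sided canonical tester), and one must be in the standard regime where $r$ is large enough relative to $k_{m-1}$ for the hard instances to exist --- in particular so that the covering radius of $RM(k_{m-1},r)$ exceeds $\eps n$ and the soundness condition is non‑vacuous (e.g.\ $r-k_{m-1}=\omega(1)$, or $\eps$ a sufficiently small constant). Everything else is the routine plug‑in described above.
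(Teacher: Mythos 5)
Your proposal is correct and follows essentially the same route as the paper's proof: instantiate Theorem~\ref{theorem:lower-bound-tester-for-code} with $C_i = RM(k_i,r)$ and supply the per-code lower bounds via the known $\Omega(2^k)$ lower bound for two-sided, adaptive testers of $RM(k,r)$ (the paper's Theorem~\ref{thm:RM:lb}). You spell out two points the paper leaves implicit --- that $\ell_1$-testing a binary code restricted to $\{0,1\}^n$ inputs coincides with Hamming testing, so Hamming lower bounds transfer; and that the Reed--Muller lower bound needs a parameter regime such as $k_{m-1}\le r/(2\log r)$ and $\eps,c+s$ bounded away from $1/2,1$ respectively --- the latter being a condition the paper's Theorem~\ref{thm:RM:lb} does state but which is not repeated in the hypotheses of the corollary.
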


We note that for code formula lattices obtained from Reed-Muller codes, Corollaries \ref{coro:constrD:test} and \ref{coro:constrD:lb} show matching bounds (up to a constant factor depending on $\eps, s$). \\\\

\noindent{\bf Random lattices.} We also observe that random lattices obtained from binary, random LDPC codes are not testable with a small number of queries. 
Indeed, consider the following distribution of random lattices (e.g., \cite{ELS05,BHR05}):
 For constants $b < a$, let $m=nb/a$ and let $H\in \F_2^{m\times n}$ be a random matrix such that each row and column has exactly $a$ and $b$ non-zeroes respectively. 
Consider the linear code $C_{a,b}:=\{x\in \F_2^n: Hx=0 (\text{mod}\ 2)\}$ and the code formula lattice $L(C_{a,b})$ associated with the linear code $C_{a,b}$.

\begin{theorem}
There exist constants $a$, $b$, $\epsilon$, $c$, $s$ such that every (possibly 2-sided, adaptive) $\ell_1$-tester $T(\epsilon, c, s,q)$ for $L(C_{a,b})$ has query complexity $q=\Omega(n)$. 
\end{theorem}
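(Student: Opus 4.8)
The plan is to combine the generic lower bound of Theorem~\ref{theorem:lower-bound-tester-for-code} with the known fact that random LDPC codes are not locally testable in the Hamming metric. First I would observe that $L(C_{a,b})$ is exactly the height-one code formula lattice $C_{a,b}+2\Z^n$, and that the Schur product condition is vacuous here, since the only relevant inclusion is $C_{a,b}\subseteq C_1=\F_2^n$, which is trivially closed under coordinatewise products. Thus Theorem~\ref{theorem:lower-bound-tester-for-code} applies with $m=1$, and since $\eps,c,s$ will be constants the term $\frac1\eps\log\frac1s$ is $O(1)$; so it suffices to show that for suitable constants $a,b,\eps,c,s$, every (adaptive, $2$-sided) $\ell_1$-tester for the code $C_{a,b}$ -- viewed as the subset of $\R^n$ obtained by the trivial $\{0,1\}$-embedding -- has query complexity $\Omega(n)$.

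Second, I would reduce $\ell_1$-testing of $C_{a,b}$ to Hamming-testing of $C_{a,b}$. Given an $\ell_1$-tester $T_0(\eps,c,s,q')$ for $C_{a,b}\subseteq\R^n$, restrict its inputs to binary vectors $t\in\{0,1\}^n$. Every codeword of $C_{a,b}$ is binary, so for binary $t$ the nearest codeword in $\ell_1$ is attained by a codeword $c$ with $\|t-c\|_1=d_H(t,c)$; hence $d_1(t,C_{a,b})=d_H(t,C_{a,b})$, and the soundness requirement ``$d_1(t,C_{a,b})\ge\eps\|1^n\|_1=\eps n$'' coincides with ``$d_H(t,C_{a,b})\ge\eps n$''. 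Therefore the restricted tester is a Hamming code tester in the sense of Definition~\ref{defn:code-test} with the same parameters $(\eps,c,s,q')$, and any Hamming-testing lower bound for $C_{a,b}$ transfers to $\ell_1$-testing. (One should also note that a binary vector lies in $L(C_{a,b})=C_{a,b}+2\Z^n$ if and only if it already lies in $C_{a,b}$, so lattice-membership and code-membership agree on binary inputs.)

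Finally, I would invoke the result of Ben-Sasson, Harsha and Raskhodnikova~\cite{BHR05}: there are constants $a>b$ such that a random $H\in\F_2^{m\times n}$ with $m=nb/a$ and exactly $a$ ones per row and $b$ ones per column yields, with probability $1-o(1)$, a code $C_{a,b}=\ker H$ of linear minimum distance for which every (adaptive, $2$-sided) Hamming tester that accepts codewords with probability $\ge 2/3$ and rejects words that are $\eps$-far with probability $\ge 2/3$ requires $\Omega(n)$ queries, for an absolute constant $\eps>0$. Fixing one such code and feeding the resulting bound $q_0=\Omega(n)$ into Theorem~\ref{theorem:lower-bound-tester-for-code} gives the claim. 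I expect the main thing to get right is aligning the precise quantifiers of the \cite{BHR05} statement (the constant distance/soundness parameters, two-sidedness, adaptivity) with the hypotheses of Theorem~\ref{theorem:lower-bound-tester-for-code}; the linear-distance property is used only to ensure the testing problem is non-degenerate (covering radius above $\eps n$), and the $\ell_1$-to-Hamming reduction, while conceptually the only genuinely new ingredient, is routine once the distance identity above is verified.
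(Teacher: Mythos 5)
Your proposal is correct and matches the paper's approach: the paper derives the theorem as an immediate corollary of Theorem~\ref{theorem:lower-bound-tester-for-code} and Theorem~3.7 of \cite{BHR05}, exactly as you do. The details you fill in---that the Schur product condition is vacuous for height $m=1$, and that $\ell_1$ and Hamming distance coincide on binary inputs so the BHR05 Hamming-testing lower bound transfers to the $\ell_1$ setting required by Theorem~\ref{theorem:lower-bound-tester-for-code}---are precisely the bridging facts the paper leaves implicit.
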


The above theorem follows as an immediate corollary of Theorem~\ref{theorem:lower-bound-tester-for-code} and of Theorem 3.7 in  \cite{BHR05}. 

\subsubsection{Tolerant testing code formulas}
We also obtain upper bounds for 
tolerantly testing code formula lattices.
 
\begin{restatable}{theorem}{thmTolerantTester}
\label{theorem:tolerant-tester-code-formula}
Let $0 < \eps_1, \eps_2, c, s  < 1 $ and $C_0\subseteq C_1\subseteq \cdots\subseteq C_{m-1} \subseteq \{0,1\}^n$ be a family of binary linear codes satisfying the Schur product condition. Suppose every $C_i$ has a tolerant tester $T_i( 2 \eps_1, \frac{\eps_2 }{m2^{i+1}}, \frac{c}{m+1}, s, q_i)$. Let $\gamma = \min \{ c/(m+1), s \}$, $\eps_2 > m2^{m+1} \eps_1$. Then there exists an $\ell_1$-tolerant-tester $T(\eps_1, \eps_2, c, s, q)$ for the lattice $L( \langle C_i \rangle_{i=0}^{m-1})$ with query complexity 
\[ 
q = O\left(\frac{1}{(\eps_2- 2\eps_1)^2} \log\left(\frac{1}{\gamma}\right) \right) + \sum_{i=0}^{m-1} q_i . 
\]
\end{restatable}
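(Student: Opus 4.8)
The plan is to mirror the peeling strategy behind Theorem~\ref{thm:constrD:test}, replacing the integrality check by a sampling-based estimate of the fractional mass of $t$ and each one-sided code tester by the supplied tolerant tester $T_i$. Write $t=\round{t}+\phi$ with $\phi\in[-\tfrac12,\tfrac12]^n$ the vector of signed distances to nearest integers; then $\|\phi\|_1\le d_1(t,L)$ and, by the triangle inequality, $d_1(\round{t},L)\le d_1(t,L)+\|\phi\|_1\le 2\,d_1(t,L)$ and $d_1(\round{t},L)\ge d_1(t,L)-\|\phi\|_1$. Since $2^m\Z^n\subseteq L$, the quantity $d_1(\cdot,L)$ is invariant under adding vectors in $2^m\Z^n$, so we may and do reduce $\round{t}$ coordinatewise modulo $2^m$ and assume it lies in $\{0,\dots,2^m-1\}^n$; for such a $v$ let $b_i(v)\in\{0,1\}^n$ collect the $i$-th binary digit of each coordinate. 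The tester does two things: (i) it samples $O\!\big(\tfrac{1}{(\eps_2-2\eps_1)^2}\log\tfrac1\gamma\big)$ coordinates of $t$, forms the empirical mean of the $|\phi_j|$'s, and rejects if it exceeds $\eps_1+\tfrac14(\eps_2-2\eps_1)$; and (ii) for each $i=0,\dots,m-1$ it runs $T_i$ on $b_i(v)$, answering each of its queries to $b_i(v)$ by a single query to $t$, and rejects if any $T_i$ rejects.

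The analytic heart is a pair of ``reconstruction'' estimates relating $d_1(v,L)$ to the digit-slices $b_i(v)$; both respect the joint codeword constraints because within a single coordinate $\sum_{i=0}^{m-1}2^ic_{i,j}$ is exactly the integer whose binary digits are $c_{m-1,j}\cdots c_{0,j}$ (there are no carries across coordinates). First, if $v$ is $2\eps_1$-close to $L$, choose $w=\sum_i 2^ic_i+2^m z\in L$ with $\|v-w\|_1\le 2\eps_1 n$; since $w_j\bmod 2^m=\sum_i 2^ic_{i,j}$, any coordinate on which $b_i(v)$ and $c_i$ differ is one with $v_j\not\equiv w_j\pmod{2^m}$, hence $|v_j-w_j|\ge 1$, so $d_H(b_i(v),c_i)\le\|v-w\|_1\le 2\eps_1 n$ for every $i$. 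Second, if for each $i$ there is $c_i\in C_i$ with $d_H(b_i(v),c_i)<\tfrac{\eps_2 n}{m2^{i+1}}$, then the lattice point $w^\star=\sum_i 2^ic_i$ satisfies $|v_j-w^\star_j|\le\sum_i 2^i[b_{i,j}(v)\ne c_{i,j}]$, so $d_1(v,L)\le\|v-w^\star\|_1\le\sum_{i=0}^{m-1}2^i\,d_H(b_i(v),c_i)<\sum_{i=0}^{m-1}\tfrac{\eps_2 n}{2m}=\tfrac{\eps_2}{2}n$.

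Given these, completeness and soundness are routine. For completeness, if $d_1(t,L)\le\eps_1 n$ then $\|\phi\|_1\le\eps_1 n$, so (by Hoeffding, since the sampled values lie in $[0,\tfrac12]$) the estimator in (i) accepts except with probability at most $\gamma\le c/(m+1)$; and $\round{t}$ is $2\eps_1$-close to $L$, so by the first estimate each slice is $2\eps_1$-close to $C_i$ and $T_i$ accepts except with probability $c/(m+1)$. A union bound over the $m+1$ sub-tests yields acceptance probability $\ge 1-c$. For soundness, suppose $d_1(t,L)\ge\eps_2 n$. If $\|\phi\|_1\ge\tfrac{\eps_2}{2}n$, then with probability $\ge 1-\gamma\ge 1-s$ the empirical mean in (i) is within $\tfrac14(\eps_2-2\eps_1)$ of the true mean and therefore meets the rejection threshold, so (i) rejects. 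Otherwise $\|\phi\|_1<\tfrac{\eps_2}{2}n$, hence $d_1(\round{t},L)\ge d_1(t,L)-\|\phi\|_1>\tfrac{\eps_2}{2}n$, so by the contrapositive of the second estimate some slice $b_i(v)$ is at least $\tfrac{\eps_2}{m2^{i+1}}$-far from $C_i$ and $T_i$ rejects except with probability $s$. Either way the composed tester rejects except with probability $s$; note that, unlike completeness, soundness needs no union bound, since a single rejecting sub-test suffices. The query count is the $O\!\big(\tfrac{1}{(\eps_2-2\eps_1)^2}\log\tfrac1\gamma\big)$ samples of (i) plus the $\sum_i q_i$ queries of (ii).

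I do not expect a genuine obstacle so much as careful parameter bookkeeping. The one consistency check worth isolating is that each invoked tolerant code tester $T_i\big(2\eps_1,\tfrac{\eps_2}{m2^{i+1}},\tfrac{c}{m+1},s,q_i\big)$ must be well-defined, i.e.\ its accept radius $2\eps_1$ must lie strictly below its reject radius $\tfrac{\eps_2}{m2^{i+1}}$; the tightest case $i=m-1$ is exactly the hypothesis $\eps_2>m2^{m+1}\eps_1$. The only other delicate point is extracting binary digits of possibly-negative integers and the absence of carries, which is precisely why the argument is phrased throughout via residues modulo $2^m$ together with $2^m\Z^n\subseteq L$.
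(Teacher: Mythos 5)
Your proof is correct and follows essentially the same path as the paper's: you round, separate the fractional part $\phi$, estimate $\|\phi\|_1/n$ by sampling (the paper's Lemma~\ref{lemma:tolerant-tester-for-Z}), decompose $\round{t}\bmod 2^m$ into binary digit-slices, and feed each slice to the tolerant code tester. Your two ``reconstruction estimates'' are exactly the contrapositives of the paper's Lemma~\ref{claim:far-from-code} (via Claim~\ref{claim:property-of-binary-vectors}) and Claim~\ref{claim:triangle-inequality-decomposition}, and your soundness/completeness case split via Claims~\ref{claim:tineq1}--\ref{claim:tineq2} is the same; the only cosmetic difference is that you explicitly reduce $\round{t}$ modulo $2^m$ before extracting digits, which handles negative coordinates a bit more carefully than the paper's presentation.
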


\begin{restatable}{cor}{CoroTolerantRM}
\label{coro:CoroTolerantRM}
Let $0 \leq k_0 < k_1 < \cdots < k_{m-1} < r $ be integers such that the family of Reed-Muller codes $RM(k_0,r)\subseteq RM(k_1,r)\subseteq\cdots \subseteq RM(k_{m-1},r)$ satisfies the Schur product condition.
Let $L$ be the lattice obtained from this family of codes using the code formula construction:
\[ L = RM(k_0,r)+ 2 RM(k_1, r)+ \cdots + 2^{m-1} RM(k_{m-1}, r) + 2^m \Z^{2^r}. \]
Then there exists a $\ell_1$-tolerant-tester $T(\eps_1, \eps_2, 1/3,1/3, q)$ for $L$ for all  $\eps_1\leq \frac{c_1'}{2^{k_{m-1}}} $,  $\eps_2\geq \frac{c_2' m }{2^{k_0-1}}$ (for some constants $c'_1$ and $c'_2$) with query complexity $q=O(2^{k_{m-1}}\cdot \log m)$.
\end{restatable}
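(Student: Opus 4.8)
The plan is to apply Theorem~\ref{theorem:tolerant-tester-code-formula} with the constituent codes taken to be the Reed--Muller codes $C_i := RM(k_i,r)$. That theorem reduces building an $\ell_1$-tolerant-tester for $L$ to supplying, for each level $i$, a tolerant tester $T_i$ for $RM(k_i,r)$ with completeness radius $2\eps_1$, soundness radius $\eps_2/(m2^{i+1})$, completeness error $c/(m+1)=1/(3(m+1))$ and soundness error $s=1/3$; it then produces a tester of cost $O\bigl(\tfrac{1}{(\eps_2-2\eps_1)^2}\log\tfrac1\gamma\bigr)+\sum_{i=0}^{m-1}q_i$, where $\gamma=\min\{1/(3(m+1)),1/3\}$. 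So the whole task is (a) to exhibit the $T_i$, (b) to check that the hypothesis $\eps_2>m2^{m+1}\eps_1$ and the stated ranges of $\eps_1,\eps_2$ are consistent, and (c) to bound the resulting query complexity by $O(2^{k_{m-1}}\log m)$.

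For $T_i$ I would use the classical one-round test for $RM(k_i,r)$: sample a uniformly random affine subspace $A\subseteq\F_2^r$ of dimension $k_i+1$, query the $2^{k_i+1}$ coordinates lying on $A$, and accept iff their sum over $\F_2$ is $0$ (equivalently, the restriction to $A$ has degree $\le k_i$). A union bound over the $2^{k_i+1}$ queried points shows a function $\delta_1$-close (in relative Hamming distance, after the mod-$2$ reduction carried out inside Theorem~\ref{theorem:tolerant-tester-code-formula}) to $RM(k_i,r)$ is accepted in one round with probability at least $1-\delta_1 2^{k_i+1}$, while the standard soundness analysis of the flat test shows a function $\delta_2$-far from $RM(k_i,r)$ is rejected in one round with probability $\Omega(\min\{1,\delta_2 2^{k_i}\})$. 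Because the $k_i$ are strictly increasing integers, $\delta_1 2^{k_i+1}=2\eps_1 2^{k_i+1}\le 4c_1'$ (a small absolute constant if $c_1'$ is), while $\delta_2 2^{k_i}=\tfrac{\eps_2}{m2^{i+1}}\,2^{k_i}\ge c_2'\,2^{k_i-k_0-i}\ge c_2'$ using $k_i\ge k_0+i$; hence one round has false-reject probability $\le 4c_1'$ and, on $\delta_2$-far inputs, reject probability at least a constant $\rho=\rho(c_2')>0$. Choosing the absolute constants so that $4c_1'<\rho$, running $t=O(\log m)$ independent rounds and \emph{thresholding} the number of accepting rounds (at a cut between $1-\rho$ and $1-4c_1'$) pushes both error probabilities below $1/(3(m+1))$ by a Chernoff bound, at total cost $q_i=O(2^{k_i}\log m)$, exactly as required by Theorem~\ref{theorem:tolerant-tester-code-formula}.

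It remains to assemble the bound. The hypothesis $\eps_2>m2^{m+1}\eps_1$ follows from $\eps_2\ge c_2'm/2^{k_0-1}$ and $\eps_1\le c_1'/2^{k_{m-1}}$ once $c_2'/c_1'>2^{\,m+k_0-k_{m-1}}$, and the Schur--product condition forces $k_{i+1}\ge 2k_i$ (the product of two degree-$k_i$ polynomials has degree $2k_i$), so in particular $k_{m-1}\ge k_0+m-1$, whence $m+k_0-k_{m-1}\le 1$ and a fixed ratio $c_2'/c_1'>2$ suffices. The same degree-doubling gives $2^{k_{m-1-j}}\le 2^{k_{m-1}/2^{j}}$, so $\sum_{i=0}^{m-1}2^{k_i}=O(2^{k_{m-1}})$ and $\sum_i q_i=O(2^{k_{m-1}}\log m)$; it also gives $4^{k_0}=2^{2k_0}\le 2^{k_{m-1}}$ (for $m\ge 2$), so the rounding term $O\bigl(\tfrac{1}{(\eps_2-2\eps_1)^2}\log\tfrac1\gamma\bigr)=O\bigl(\tfrac{1}{\eps_2^2}\log m\bigr)=O(4^{k_0}\log m)=O(2^{k_{m-1}}\log m)$, here using $\eps_2-2\eps_1\ge\eps_2/2$ and $\log(1/\gamma)=\Theta(\log m)$. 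Combining the two terms yields $q=O(2^{k_{m-1}}\log m)$.

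The step I expect to be the main obstacle is making the tolerant guarantee of the Reed--Muller flat test line up with the demands of Theorem~\ref{theorem:tolerant-tester-code-formula} \emph{simultaneously} at every level with a single choice of $c_1',c_2'$. The completeness side is binding at the top level $i=m-1$, where $\delta_1 2^{k_{m-1}+1}$ is a genuine constant rather than a doubly-exponentially small quantity, so one must verify this constant stays below the margin needed for the $O(\log m)$-round threshold amplification to reach error $1/(3(m+1))$; the soundness side needs every radius $\eps_2/(m2^{i+1})$ to stay above the $\Omega(2^{-k_i})$ barrier below which the flat test loses constant rejection probability. Both requirements turn precisely on the interaction between the level index $i$, the degree $k_i$, and the constraint $k_{i+1}\ge 2k_i$ from the Schur--product condition, and the careful check that one fixed pair of constants works across all levels is the only delicate part.
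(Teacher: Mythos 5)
Your proof is correct and follows essentially the same route as the paper: obtain a tolerant tester for each $RM(k_i,r)$ and plug it into Theorem~\ref{theorem:tolerant-tester-code-formula}, then verify that a single pair of constants $c_1',c_2'$ satisfies all level-wise radius constraints and the hypothesis $\eps_2 > m2^{m+1}\eps_1$, and that both terms in the query bound collapse to $O(2^{k_{m-1}}\log m)$. The only presentational difference is that you re-derive the tolerant Reed--Muller tester from scratch (flat test, union bound for completeness, Chernoff amplification) whereas the paper packages precisely this argument as Corollary~\ref{coro:RMtolerant} (built from Claim~\ref{claim:prr} and Theorem~\ref{theorem:RM-codes-upper-bound}) and simply cites it; also, your bound $\sum_i 2^{k_i} = O(2^{k_{m-1}})$ via degree-doubling works but the simpler observation that the $k_i$ are distinct integers already gives it.
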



\subsubsection{A canonical/linear test for lattices} 
Our next result makes progress towards addressing Question \ref{question-2}. We show 
a reduction from any given arbitrary test to a \emph{canonical linear test}, thus suggesting that it is sufficient to design \emph{canonical linear tests}  for achieving low query complexity. 
In order to describe the intuition behind a canonical linear test, we first illustrate how to solve the membership testing problem when all coordinates of the input are known. For a given lattice $L$, its {\em  dual lattice} is defined as $$L^{\bot}:=\{ u\in span(L)\mid \langle u, v\rangle \in \Z, \text{ for all } v\in L\}.$$ It is easy to verify that $(L^\perp)^{\perp}=L$. Furthermore, a vector $v\in L$ if and only if for all $u \in L^{\bot}$, we have $\langle u, v \rangle \in \mathbb{Z}$. Thus, to test membership of $t$ in $L$ in the classical decision sense, it is sufficient to verify whether $t$ has integer inner products with a set of basis vectors of the dual lattice $L^{\bot}$. 
Inspired by this observation, 
 we define a canonical {\em linear test} for  lattices as follows. For a lattice $L\subseteq \R^n$ and $J\subseteq [n]$, let $L^{\bot}_J  := \{ x \in L^{\bot} \mid supp(x) \subseteq J \}$, where $supp(x)$ is the set of non-zero indices of the vector $x$.

\begin{definition}[Linear Tester]
A {\em linear tester} for a lattice $L\subseteq \Z^n$ is a probabilistic algorithm which queries a subset $J = \{ j_1, \ldots, j_q \} \subseteq [n]$ of coordinates of the input $t \in \R^n$ and accepts $t$ if and only if 
$\langle t, x \rangle \in \mathbb{Z}$ for all $x \in L^{\bot}_J$.
\footnote{
Verifying whether $\langle t, x \rangle\in \Z$ for all $x\in L^{\bot}_J$ can be performed efficiently by checking inner products with a set of  basis vectors of the lattice $L^{\bot}_J$.}
\end{definition}

\noindent{\bf Remark.} By definition, the probabilistic choices of a linear tester are only over the set of coordinates to be queried: upon fixing the coordinate queries, the choice of the algorithm to accept or reject is fully determined. 
Furthermore, a linear tester is $1$-sided since if the input $t$ is a lattice vector, then for every dual vector $u\in L^{\bot}$, the inner product $\inprod{u}{t}$ is integral, and so it will be accepted with probability $1$. \\

We show that non-adaptive linear tests are nearly as powerful as 2-sided adaptive tests for a full-rank lattice. We reduce any (possibly 2-sided,  and adaptive) test for a  full-rank lattice to a non-adaptive linear test for the same distance parameter $\eps$, with a small increase in the query complexity and the soundness error. 

\begin{theorem}\label{theorem:2-sided-to-1-sided-non-adaptive-linear}
Let $L\subseteq \Z^n$ be a lattice with rank$(L)=n$. If there exists an adaptive $2$-sided $\ell_p$-tester $T(\eps,c,s,q)$ with query complexity $q=q_T(\eps,c,s)$, then there exists a non-adaptive linear $\ell_p$-tester $T'(\eps,0,c+s,q')$ with query complexity $q'=q_T(\eps/2,c,s)+O((1/\eps^p)\log{(1/s)})$. 
\end{theorem}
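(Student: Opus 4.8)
The plan is to mimic the Ben-Sasson--Harsha--Raskhodnikova-style reduction from arbitrary testers to canonical ones, adapted from codes to full-rank lattices. The overall strategy is to build the linear tester $T'$ in two stages: first, sample a query set $J$ exactly as the given tester $T$ would (ignoring $T$'s accept/reject logic and treating $J$ as a random variable), then augment $J$ with $O((1/\eps^p)\log(1/s))$ additional uniformly random coordinates; finally, on the combined query set $J'$, run the \emph{linear} acceptance rule, i.e. accept iff $\langle t,x\rangle\in\Z$ for all $x\in L^\perp_{J'}$. Because a linear tester is automatically $1$-sided (as noted in the Remark), completeness is free: $T'(\eps,0,\cdot,\cdot)$. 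The work is entirely in the soundness analysis.

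The key steps, in order: (i) \emph{Completeness is immediate} --- if $t\in L$ then every dual vector has integer inner product with $t$, so $T'$ accepts with probability $1$. (ii) \emph{Relating the linear rule to $T$'s behavior on the query set.} Fix a query set $J'$ that $T'$ might produce. If $T'$ accepts $t$ on $J'$, then $\langle t,x\rangle\in\Z$ for all $x\in L^\perp_{J'}$; I claim this forces the restriction $t|_{J'}$ to agree with the projection of some lattice vector, i.e. there exists $w\in L$ with $w|_{J'}=t|_{J'}$. This is the lattice analogue of the code fact that passing all parity checks supported on $J'$ means the restriction extends to a codeword. Here full rank is used: $(L^\perp)^\perp=L$, and the projection of $L$ onto coordinates $J'$ is dual (in the appropriate quotient sense) to $L^\perp_{J'}$; a vector in $\R^{J'}$ having integer pairings with all of $L^\perp_{J'}$ lies in the projected lattice $L|_{J'}$. (iii) \emph{Case analysis on $t$.} Suppose $t$ is $\eps$-far from $L$. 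Either (a) $t$ is already $(\eps/2)$-far from $L$ restricted to the coordinates outside its "closest-agreement" structure --- more precisely, I would argue that either $t$ is $(\eps/2)$-far from $L$ \emph{and} has a large fraction of coordinates where no single near-lattice-vector matches, in which case the random extra coordinates catch a disagreement with high probability (a coupon/Chernoff bound giving the $O((1/\eps^p)\log(1/s))$ term); or (b) $t$ is within $\eps/2$ of some $w\in L$ on "most" coordinates but the tester $T$ run with parameter $\eps/2$ would still reject $t$, and we invoke that $T$'s query set $J$ is (with probability $\ge 1-c-s$ suitably) one on which $t|_J\neq w|_J$ for every $w\in L$, contradicting acceptance by step (ii). The bookkeeping combines $T$'s two-sided error $c+s$ with the extra-coordinate sampling error, yielding total soundness error $c+s$ (after pushing the sampling error into the $O(\cdot)$ term / re-parameterizing).

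The delicate part --- and the main obstacle --- is step (ii) together with making the case analysis in (iii) genuinely exhaustive. For codes this dichotomy is clean because "distance $\eps n$" is a Hamming count and any restriction either extends to a codeword or is rejected by a supported parity check; for lattices under $\ell_p$, the quantity $d_p(t,L)$ is a real-valued geometric distance, so "$t|_{J'}$ extends to a lattice vector $w$" does \emph{not} by itself bound $d_p(t,L)$ --- $w$ could be astronomically far from $t$ on the unqueried coordinates. The resolution I anticipate: run the original tester $T$ with the halved parameter $\eps/2$ so that, conditioned on $T$ \emph{not} rejecting (which for an $\eps/2$-far input happens with probability $\le s$), we are in the regime where $d_p(t,L)<(\eps/2)\|1^n\|_p$; but $t$ is $\eps$-far, contradiction --- so $T$ rejects, hence on its query set $J$ no lattice vector matches $t|_J$, hence $L^\perp_J$ certifies rejection, hence $T'$ rejects. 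The only failure modes are $T$'s own error events (probability $\le c+s$), so those are exactly the soundness error of $T'$, and the extra random coordinates are actually needed only to handle inputs $t$ that lie in $\mathrm{span}(L)$ but are non-integral in a way $T$ with parameter $\eps/2$ might miss --- essentially an agreement/self-correction argument showing a small random sample detects non-membership with the claimed probability. I would carefully separate these two sources of error, verify that the query set $J'$ is chosen non-adaptively (it is: $T$'s query distribution plus fresh uniform samples, neither depending on answers), and confirm the final count $q'=q_T(\eps/2,c,s)+O((1/\eps^p)\log(1/s))$.
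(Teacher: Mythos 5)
Your high-level plan (convert to a linear rule, make it non-adaptive, add $O((1/\eps^p)\log(1/s))$ random queries to catch non-integral coordinates) lands in the same place as the paper, and you correctly identify what the extra query term is for. But the soundness analysis at the heart of the proposal contains a genuine gap, and it is exactly the gap that the paper's machinery exists to fill.

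The central non-sequitur is this step: ``$T$ rejects, hence on its query set $J$ no lattice vector matches $t|_J$, hence $L^{\perp}_J$ certifies rejection.'' $T$ is an arbitrary 2-sided adaptive tester; it is free to reject at a leaf whose local view is perfectly consistent with some lattice vector (and to accept at an inconsistent leaf). There is no implication of the form ``$T$ rejects $\Rightarrow$ $t|_J$ has no lattice extension.'' Showing that the linear acceptance rule (accept iff no dual witness supported on the queried coordinates) is at least as sound as $T$ is precisely the content that must be proved, not assumed. The paper does this by first restricting to the bounded integer domain $\mathcal{Z}_d^n$ with $d\Z^n\subseteq L$, modeling the restricted tester as a distribution over finite decision trees, adding a uniformly random $v\in V:=L\bmod d$ to the input (which preserves $\ell_p$-distance to $L$), and then \emph{relabeling} each leaf to the linear rule; the error-transfer inequality $\rho^{T'}_x \le \rho^{T'}-\rho^{T}+\rho^{T}_x$ then yields soundness $c+s$. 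That inequality in turn rests on the coset-counting fact that for each leaf $l$, the number of lattice vectors in $V$ consistent with $l$ equals the number of vectors in $(x+V)\bmod d$ consistent with $l$ whenever both counts are positive. None of this structure is present in your proposal, and without it the soundness claim does not follow.

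A second problem is the non-adaptivity claim. You write that ``$T$'s query distribution plus fresh uniform samples'' is chosen non-adaptively, ``neither depending on answers.'' But $T$ is adaptive: its query set $J$ is a function of the answers to earlier queries, which depend on the input $t$. Sampling a query set ``as $T$ would'' on input $t$ is itself an adaptive procedure. The paper sidesteps this by running the adaptive queries of $T$ on a uniformly random lattice vector $v\in V$ rather than on $t$, obtaining a set $J$ whose distribution is independent of $t$, and only then checking the dual-witness condition for $t$ restricted to $J$; a separate coset-size argument shows this substitution preserves the acceptance probability. Finally, your step (ii) --- that passing all checks in $L^{\perp}_{J'}$ forces $t|_{J'}$ to extend to a lattice vector --- is correct for integral inputs (via $(L^{\perp})^{\perp}=L$ and the projection-duality identity) but fails for generic $t\in\R^n$, which is another reason the paper works in $\mathcal{Z}_d^n$ throughout the reduction and only lifts to $\R^n$ at the very end via the $\Z^n$-integrality tester.
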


Furthermore, if we are guaranteed that the inputs are in $\Z^n$, then the query complexity of the test $T'$ above can be improved to be identical to that of $T$ (up to a constant factor in the $\eps$ parameter). The increase in the query complexity comes from an extra step used to verify the integrality of the input.  

Theorem \ref{theorem:2-sided-to-1-sided-non-adaptive-linear} suggests that, for the purposes of designing a tester with small query complexity, it is sufficient to design a non-adaptive linear tester, i.e., it suffices to only identify the probability distribution for the coordinates that are queried. Moreover, this theorem makes progress towards Question \ref{question-2},  since it shows that 
a lower bound on the query complexity of non-adaptive linear tests for a particular lattice implies a lower bound on the query complexity of all tests for that lattice. Thus in order to understand the existence of low query complexity tester for a particular lattice, it is sufficient to examine the existence of low query complexity \emph{non-adaptive linear} tester for that lattice. 

We note that Theorem \ref{theorem:2-sided-to-1-sided-non-adaptive-linear} is the analogue of the result of \cite{BHR05} for linear error-correcting codes. In section \ref{subsection:proof-overviews}, we comment on the comparison between our proof and that in \cite{BHR05}. 


\subsubsection{Testing membership of inputs outside the span of the lattice}
We also observe a stark difference between the membership testing problem for a linear code, and the membership testing problem for a lattice. In the membership testing problem for a linear code $C\subseteq \F^n$ defined over a finite field that is specified by a basis, the input is assumed to be a vector in $\F^n$ and the goal is to verify whether the input lies in the span of the basis (see definition \ref{defn:code-test}).
As opposed to codes, for a lattice $L\subseteq \R^n$, the input is an arbitrary real vector, and the goal is to verify whether the input is a member of $L$, and  not to verify whether the input is a member of the span of the lattice. 
Thus, the inputs to the lattice membership testing problem could lie either in $span(L)$,  or outside $span(L)$. Interestingly, for some lattices it is easy to show strong lower bounds on the query complexity if the inputs are allowed to lie outside $span(L)$, thus suggesting that such inputs are hard to test. 

\begin{restatable}{theorem}{thmLowerBoundForOutsideSpan}
\label{thm:lower-bound-for-outside-span}
Let $L\subseteq \Z^n$ be a lattice of rank $k$. Let $P\subseteq[n]$ be the support of the vectors in $span(L)^{\bot}$. 
Let $0< \eps, c, s < 1$. Every non-adaptive $\ell_p$-tester $T(\eps, c, s, q)$ for $L$ for inputs in $\R^n$ has query complexity 
$$q=\Omega(\lvert P \rvert).$$
\end{restatable}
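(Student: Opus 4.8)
The plan is to prove the lower bound by exhibiting a lattice point and a distribution over $\eps$-far inputs that a non-adaptive tester reading few coordinates cannot distinguish, and then averaging over the tester's coin tosses.

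\textbf{The hard instances.} I would take $0\in L$ as the ``yes'' instance. Using $(span(L)^\perp)^\perp = span(L)$, a coordinate $j$ lies in $P$ iff there is $w\in span(L)^\perp$ with $\langle e_j,w\rangle = w_j\neq 0$, i.e.\ iff $e_j\notin span(L)$. Since $span(L)$ is a closed subspace, $\delta_j:=d_p(e_j,span(L))>0$ for each $j\in P$; I set $\delta:=\min_{j\in P}\delta_j>0$ and $\lambda:=\eps\,\|1^n\|_p/\delta$ (a real scaling, which is allowed since inputs range over $\R^n$ --- this ``outside the span'' regime is exactly where real inputs make lower bounds easy). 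For every $j\in P$ and every $w\in L\subseteq span(L)$,
\[
\|\lambda e_j-w\|_p\ \ge\ d_p(\lambda e_j,span(L))\ =\ \lambda\, d_p(e_j,span(L))\ =\ \lambda\delta_j\ \ge\ \eps\,\|1^n\|_p,
\]
using scale-invariance of $span(L)$ for the middle equality; hence each $\lambda e_j$, $j\in P$, is $\eps$-far from $L$. Let $\mathcal D$ be the distribution outputting $\lambda e_j$ for $j$ uniform in $P$.

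\textbf{The averaging step.} I would write the randomized non-adaptive tester as a distribution over deterministic non-adaptive testers $D$, each reading a fixed coordinate set $J_D$ with $|J_D|\le q$ and then deciding. The key observation is that $(\lambda e_j)$ restricted to any set avoiding $j$ is zero there, so for $j\notin J_D$ the transcript of $D$ on $\lambda e_j$ equals its transcript on $0$, and thus $D$ accepts $\lambda e_j$ exactly when it accepts $0$. Averaging over $D$: completeness on $0\in L$ gives $\Pr_D[D\text{ accepts }0]\ge 1-c$, while
\[
\Exp_D\Exp_{j\sim P}\bigl[\mathbf 1[D\text{ accepts }\lambda e_j]\bigr]\ \ge\ \Exp_D\Bigl[\mathbf 1[D\text{ accepts }0]\cdot\Pr_{j\sim P}[j\notin J_D]\Bigr]\ \ge\ \Bigl(1-\tfrac{q}{|P|}\Bigr)\Pr_D[D\text{ accepts }0].
\]
The left-hand side equals the tester's acceptance probability on $\mathcal D$, which is at most $s$ since $\mathcal D$ is supported on $\eps$-far inputs. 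Hence $s\ge(1-q/|P|)(1-c)$, giving $q\ge |P|\cdot\frac{1-c-s}{1-c}=\Omega(|P|)$ for constants $c,s$ with $c+s<1$.

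\textbf{Where the difficulty lies.} The one point needing care is the choice of $\mathcal D$: a single fixed far vector supported on all of $P$ is not enough, since for $P=[n]$ a tester querying a single coordinate already meets its support; spreading a large perturbation uniformly over the directions $e_j$, $j\in P$, is precisely what forces the tester to query an $\Omega(1)$ fraction of $P$. A secondary point is that $\ell_p$ with $p\neq 2$ has no orthogonal projection; this is handled by using only $d_p(\cdot,L)\ge d_p(\cdot,span(L))$ together with the closedness and scale-invariance of $span(L)$. Non-adaptivity is used exactly once but crucially: it guarantees $J_D$ is the same on inputs $0$ and $\lambda e_j$.
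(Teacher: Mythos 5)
Your proposal is correct and follows the same overall strategy as the paper: Yao's principle with the hard distribution consisting of scaled standard basis vectors $\lambda e_j$ for $j$ uniform in $P$, observing that a non-adaptive deterministic tester with query set of size $q$ misses the perturbed coordinate with probability $1-q/|P|$ and then sees the all-zeros input. Where you differ is in the far-ness argument. The paper decomposes $t^{(j)}$ using an orthonormal basis $\{u_\ell\}$ of $span(L)^\perp$ and asserts $\|t^{(j)\perp}\|_p^p = \sum_\ell |\langle t^{(j)},u_\ell\rangle|^p$, a Parseval-type identity that only holds for $p=2$; moreover the $\ell_p$-norm of the orthogonal complement component is an \emph{upper} bound on $d_p(t^{(j)},S)$, not a lower bound, so the paper's argument is really an $\ell_2$ argument. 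Your approach sidesteps this cleanly: you only use $d_p(x,L)\ge d_p(x,span(L))$ together with the exact scaling identity $d_p(\lambda x,V)=\lambda\, d_p(x,V)$ for a linear subspace $V$ and $\lambda>0$, which is correct for every $\ell_p$. You also make the Yao averaging explicit where the paper states it in one line. In short, same plan, but your far-ness bound is more careful and genuinely covers all $p$ rather than implicitly appealing to $\ell_2$ structure.
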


On the other hand, testers for inputs in the $span(L)$ can be lifted to obtain testers for all inputs (including inputs that could possibly lie outside $span(L)$). 

\begin{restatable}{theorem}{thmUpperBoundForOutsideSpan}
\label{thm:upper-bound-for-outside-span}
Let $L\subseteq \Z^n$ be a lattice of rank $k$. Let $P\subseteq[n]$ be the support of the vectors in $span(L)^{\bot}$. 
Let $0 < \eps, c, s < 1$, and suppose $L$ has an $\ell_p$-tester $T(\eps, c, s, q)$ for inputs $t\in span(L)$. Then $L$ has a tester $T'(2\eps, c, s, q')$ for inputs in $\R^n$ with query complexity
$$q'\leq q+\lvert P \rvert.$$
\end{restatable}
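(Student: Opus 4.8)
The plan is to spend $|P|$ queries learning $t$ on all of $P$, use this to decide \emph{exactly} whether $t\in span(L)$, reject outright when it does not, and otherwise fall back on the given in-span tester $T$.

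The one structural observation that drives the argument is the following. By the definition of $P$, every $u\in span(L)^{\bot}$ has $supp(u)\subseteq P$, so for any $t\in\R^n$ we have $\langle t,u\rangle=\langle t|_P,u|_P\rangle$; that is, this inner product is a function of $t|_P$ alone. Since $(span(L)^{\bot})^{\bot}=span(L)$, membership $t\in span(L)$ is equivalent to $\langle t,u\rangle=0$ for all $u$ in a fixed basis of $span(L)^{\bot}$, and by the previous sentence this condition is entirely determined by $t|_P$. (We are under no computational restrictions, so precomputing such a basis and evaluating the inner products is free.) I view this as the crux: it is the ``span-level'' analogue of the decision procedure, via (here, zero) inner products against a dual basis, described earlier in the paper, and it is exactly what forces the $|P|$ queries.

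Given this, define $T'$ as follows: (i) query the $|P|$ coordinates in $P$; (ii) if $t|_P$ is not orthogonal to $span(L)^{\bot}$, reject; (iii) otherwise $t\in span(L)$, so simulate $T(\eps,c,s,q)$ on $t$ --- forwarding each of its (at most $q$) coordinate queries --- and output its verdict. The number of distinct coordinates queried is at most $|P|+q=q'$, as required.

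It remains to verify correctness, which is routine. For completeness, if $t\in L\subseteq span(L)$ then step (ii) does not fire and step (iii) accepts with probability at least $1-c$, since $T$ is promised to behave correctly on in-span inputs. For soundness, assume $d_p(t,L)\ge 2\eps\,\|1^n\|_p$: if $t\notin span(L)$ then step (ii) rejects with probability $1$, and if $t\in span(L)$ then $d_p(t,L)\ge 2\eps\,\|1^n\|_p\ge\eps\,\|1^n\|_p$, so the simulated $T$ rejects with probability at least $1-s$. The only point to be careful about is that $T'$ also rejects some inputs $t\notin span(L)$ that happen to be \emph{close} to $L$; this is harmless, because no point of $L$ lies outside $span(L)$, so $t\notin span(L)$ already certifies $t\notin L$, and the tester definition imposes no requirement on close-but-not-in-$L$ inputs. (In fact this argument proves the stronger statement with $\eps$ in place of $2\eps$; the stated factor of $2$ is slack.)
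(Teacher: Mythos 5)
Your proof is correct, and it takes a genuinely different route from the paper's. The paper queries $P$ to compute the orthogonal component $t^{\perp}$, \emph{rejects only if} $\|t^{\perp}\|_p \geq \eps\,\|1^n\|_p$, and otherwise projects $t$ onto $span(L)$ and runs $T$ on $t^{\parallel}$ with halved distance parameter; the $2\eps$ in the statement comes from splitting the distance budget between $t^{\perp}$ and $d_p(t^{\parallel},L)$. You instead use the $|P|$ queries to decide \emph{exact} membership in $span(L)$, reject any $t\notin span(L)$ regardless of how small $t^{\perp}$ is, and run $T$ on $t$ itself (legitimately, since at that point $t\in span(L)$). Your version is cleaner and, as you note, gives the stronger $T'(\eps,c,s,q+|P|)$, since there is no need to absorb the off-span mass into the distance parameter. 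What the paper's projection-and-threshold argument buys is robustness: it naturally extends to a tolerant setting where inputs slightly off $span(L)$ but near $L$ must still be accepted, and it only needs $t^{\parallel}$ rather than $t$ as input to $T$. Under the non-tolerant Definition~1 used here, though, the paper places no requirement on close-but-not-in-$L$ inputs, so your blanket rejection of $t\notin span(L)$ is harmless and the argument goes through.
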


Theorem \ref{thm:upper-bound-for-outside-span} implies that for lattices $L$ of rank at most $n-1$, if the membership testing problem for inputs that lie in $span(L)$ is solvable using a small number of queries and if $span(L)^{\bot}$ is supported on few coordinates, then the membership testing problem for all inputs (including those that do not lie in $span(L)$) is solvable using a small number of queries. \\\\
\noindent {\bf Knapsack Lattices.} Theorem \ref{thm:lower-bound-for-outside-span} implies a linear lower bound for non-adaptively testing a well-known family of lattices, known as {\em knapsack lattices},  which have been investigated in the quest towards lattice-based cryptosystems \cite{merkle78, shamir83, odlyzko90}.
We recall that a knapsack lattice is generated by a set of basis vectors $B=\{b_1,\ldots, b_{n-1}\}, b_i\in \R^{n}$ that are of the form
\begin{align*}
b_1 &=(1,0,\ldots, 0,a_1)\\
b_2 &=(0,1,\ldots, 0,a_2)\\
&\vdots\\
b_{n-1} &=(0,0,\ldots, 1,a_{n-1})
\end{align*}
where $a_1,\ldots, a_n$ are integers. We denote such a knapsack lattice by $L_{a_1,\ldots, a_{n-1}}$.

\begin{restatable}{cor}{CorKnapsackLowerBoundOutsideSpan}
\label{cor:knapsack-lower-bound-outside-span}
Let $a_1,\ldots,a_n$ be integers and $0< \eps, c,s < 1$. Every non-adaptive $\ell_p$-tester $T(\eps,c,s,q)$ for $L_{a_1,\ldots, a_n}$ has query complexity
$$ q=\Omega(n).$$ 
\end{restatable}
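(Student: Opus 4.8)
The plan is to apply Theorem~\ref{thm:lower-bound-for-outside-span} essentially verbatim; the only real work is to identify the set $P$, the support of $span(L)^\perp$, for the knapsack lattice $L = L_{a_1,\ldots,a_{n-1}} \subseteq \Z^n$ defined above. Observe first that $L$ has rank $k = n-1$, since it is generated by the $n-1$ linearly independent basis vectors $b_i = e_i + a_i e_n$ for $i \in [n-1]$, where $e_1,\ldots,e_n$ denotes the standard basis of $\R^n$. Because $a_1,\ldots,a_{n-1} \in \Z$, we indeed have $L \subseteq \Z^n$, so Theorem~\ref{thm:lower-bound-for-outside-span} is applicable.

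Next I would compute $span(L)^\perp$. A vector $u = (u_1,\ldots,u_n) \in \R^n$ lies in $span(L)^\perp$ if and only if $\langle u, b_i \rangle = 0$ for every $i \in [n-1]$; since $\langle u, b_i \rangle = u_i + a_i u_n$, this is equivalent to $u_i = -a_i u_n$ for all $i \in [n-1]$. Hence $span(L)^\perp$ is the one-dimensional subspace spanned by $w := (-a_1, -a_2, \ldots, -a_{n-1}, 1)$, and therefore $P = \mathrm{supp}(w) = \{\, i \in [n-1] : a_i \neq 0 \,\} \cup \{n\}$. In the regime in which knapsack lattices actually arise in cryptography, every weight $a_i$ is nonzero, so $|P| = n$.

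Finally, substituting $|P| = n$ into the conclusion of Theorem~\ref{thm:lower-bound-for-outside-span} yields that every non-adaptive $\ell_p$-tester $T(\eps,c,s,q)$ for $L_{a_1,\ldots,a_{n-1}}$ has $q = \Omega(|P|) = \Omega(n)$, which is the claimed bound. There is no genuine obstacle beyond this orthogonal-complement calculation; the one point that needs care is the dependence on the weights being nonzero, since if instead $a_i = 0$ for many $i$ — for example all $a_i = 0$, in which case $L = \Z^{n-1}\times\{0\}$, whose span is a coordinate subspace with $|P| = 1$ and which is $\ell_p$-testable with only $O((1/\eps^p)\log(1/s))$ queries — the bound degrades accordingly. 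Thus the statement should be understood for knapsack instances with nonzero weights, and more precisely its conclusion is $q = \Omega(|P|)$ with $|P| = |\{\, i : a_i \neq 0 \,\}| + 1$.
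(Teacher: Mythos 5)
Your proof is correct and takes essentially the same route as the paper: compute $span(L)^\perp$, observe it is the line spanned by $(a_1,\ldots,a_{n-1},-1)$ (your $w$ up to sign), read off $P$, and invoke Theorem~\ref{thm:lower-bound-for-outside-span}. Your closing caveat is a genuine point of care that the paper glosses over: the paper simply asserts $|P|=n$, which presumes all the weights $a_i$ are nonzero, whereas what the argument really delivers is $q=\Omega(|P|)$ with $|P|=|\{i:a_i\neq 0\}|+1$; the corollary as stated should be read under the (standard, cryptographically natural) assumption that the $a_i$ are all nonzero.
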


However, knapsack lattices with bounded coefficients are testable with a constant number of queries if the inputs are promised to lie in $span(L)$.

\begin{restatable}{theorem}{ThmKnapsackTester}
\label{thm:knapsack-tester}
Let $a_1,\ldots,a_n$ be integers with $M=\max_{i\in [n]}|a_i|^p$ and $0< \eps,s < 1$. 
There exists a non-adaptive $\ell_p$-tester $T(\eps,0,s,q)$ for $L_{a_1,\ldots, a_n}$ with query complexity 
$
q = O\left( \frac{M}{\eps^p}\cdot \log\frac{1}{s}\right)
$, if the inputs are guaranteed to lie in $span(L)$.
\end{restatable}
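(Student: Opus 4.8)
The plan is to use the promise $t\in span(L)$ to reduce membership testing to integrality testing of a small number of coordinates. Writing the basis as $b_i=e_i+a_ie_n$ for $i\in[n-1]$, a lattice point $v=\sum_i\alpha_ib_i$ has $v_i=\alpha_i$ for $i\in[n-1]$ and $v_n=\sum_i\alpha_ia_i$. Since the promise forces $t_n=\sum_i a_it_i$ automatically, we have $t\in L$ if and only if $t_i\in\Z$ for every $i\in[n-1]$. Accordingly, the tester samples $q=C(M/\eps^p)\log(1/s)$ coordinates from $[n-1]$ uniformly at random (or queries all of $[n-1]$ when $q\ge n-1$) and rejects iff some sampled coordinate is non-integral. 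Completeness is immediate and perfect: if $t\in L$ then every $t_i$ with $i\in[n-1]$ is an integer, so the tester never rejects; the test is clearly non-adaptive.

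The heart of the argument is a distance estimate showing that an $\eps$-far input must have many non-integral free coordinates. I will prove that for every $t\in span(L)$,
\[ d_p(t,L)^p \le k + M, \qquad k:=\big|\{i\in[n-1]:t_i\notin\Z\}\big|, \]
so that $d_p(t,L)\ge\eps\|1^n\|_p$ yields $k\ge\eps^p n-M$. To prove the inequality, let $f_i=t_i-\round{t_i}\in(-\tfrac12,\tfrac12]$ and $S=\{i\in[n-1]:f_i\ne0\}$, with $|S|=k$. Build a lattice point $v$ by setting $\alpha_i=t_i$ for $i\notin S$ (contributing $0$ to both parts of the distance), and, for $i\in S$, choosing $\alpha_i$ to be either $\round{t_i}$ or the other nearby integer $\round{t_i}+\mathrm{sign}(f_i)$; in the latter case $t_i-\alpha_i=f_i-\mathrm{sign}(f_i)$. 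Then the $n$-th coordinate of $t-v$ equals $\tau-\sum_{i\in T}c_i$, where $\tau=\sum_{i\in S}a_if_i$, $c_i=a_i\,\mathrm{sign}(f_i)$ (so $|c_i|\le M^{1/p}$), and $T\subseteq S$ indexes the coordinates where the second option was taken. Since $\tau=\sum_{i\in S}c_i|f_i|$ with every weight $|f_i|\in(0,\tfrac12]$, the value $\tau$ lies between $\sum_{c_i<0}c_i$ and $\sum_{c_i>0}c_i$, so a greedy choice of $T$ — adding one $c_i$ of the appropriate sign at a time, each step moving the partial sum by at most $M^{1/p}$ — achieves $|\tau-\sum_{i\in T}c_i|\le M^{1/p}$. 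Meanwhile each of the $k$ coordinates in $S$ contributes at most $1$ to the first part, so $d_p(t,v)^p< k+M$.

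Given this claim, soundness follows by a short case analysis on the size of $M$ relative to $\eps^p n$. If $n\le 2M/\eps^p$, the tester queries all of $[n-1]$, which is $O(q)$ queries; since an $\eps$-far input satisfies $t\notin L$ and $t\in span(L)$, some coordinate in $[n-1]$ is non-integral and the tester rejects with probability $1$. Otherwise $M<\eps^p n/2$, so the claim gives $k\ge\eps^p n/2$; a uniformly random coordinate of $[n-1]$ is non-integral with probability at least $\eps^p/2$, and a sample of $q=O((1/\eps^p)\log(1/s))$ coordinates (which is $O((M/\eps^p)\log(1/s))$ since $M\ge1$) contains a non-integral coordinate — so the tester rejects — with probability at least $1-s$. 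This gives a one-sided, non-adaptive $\ell_p$-tester $T(\eps,0,s,q)$ with $q=O((M/\eps^p)\log(1/s))$.

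I expect the main obstacle to be the distance estimate $d_p(t,L)^p\le k+M$. The naive bound $|\tau|\le M^{1/p}\sum_{i\in S}|f_i|\le \tfrac12 M^{1/p}k$ only yields $d_p(t,L)^p\lesssim k+Mk^p$, which for $p\ge 2$ forces merely $k=\Omega((n/M)^{1/p})$ and hence a query complexity growing with $n$. Making the bound linear in $n$ — and thus the query complexity independent of $n$ — relies on the re-rounding/subset-sum cancellation of the last-coordinate error described above, and verifying that this cancellation can always be driven down to the single-term scale $M^{1/p}$ is the key quantitative point.
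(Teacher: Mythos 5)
Your tester is the same as the paper's (sample $O\bigl(\tfrac{M}{\eps^p}\log\tfrac1s\bigr)$ of the first $n-1$ coordinates uniformly, or all of them if that count exceeds $n-1$, and reject iff a sampled coordinate is non-integral), but your soundness analysis is genuinely different and, for $p>1$, actually repairs a gap in the paper's argument. The paper's Claim~\ref{claim:distance-L-knapsack} asserts $d_p(w,L)^p\le (M+1)\,d_p(w',\Z^{n-1})^p$ and proves it by rounding every free coordinate to the nearest integer; the crucial step in that proof,
\[
\Bigl|\sum_{i=1}^{n-1} a_i(w_i-\round{w_i})\Bigr|^p \;\le\; \sum_{i=1}^{n-1} |a_i(w_i-\round{w_i})|^p,
\]
is simply false when $p>1$ (for $p\ge 1$ and nonnegative reals one has $(\sum x_i)^p\ge\sum x_i^p$, not $\le$), and indeed the claim itself fails: take $p=2$, $n-1=10$, all $a_i=1$, $w'=(0.05,\dots,0.05)$, $w_n=0.5$; then $(M+1)d_2(w',\Z^{10})^2=0.05$ while $d_2(w,L)^2\ge 0.25$ because the last coordinate is stuck at distance $\tfrac12$ from every integer. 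Your alternative — bounding $d_p(t,L)^p$ by $k+M$ where $k$ counts the non-integral free coordinates, proved via the re-rounding/greedy subset-sum cancellation that drives the last-coordinate error down to a single term $M^{1/p}$ — sidesteps this entirely, since each re-rounded coordinate still contributes less than $1$ and the greedy guarantee needs only that the increments $|c_i|\le M^{1/p}$ and that $\tau=\sum c_i|f_i|$ lies between the all-negative and all-positive partial sums. Your case split ($n\le 2M/\eps^p$ query everything; else $k\ge \eps^p n/2$ and random sampling works, using $M\ge 1$) then yields the stated query bound. In short: same test, but a correct and quantitatively sharper distance estimate; the paper's proof as written establishes the theorem only for $p=1$, while yours handles all $p\ge 1$.
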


Theorem \ref{thm:knapsack-tester} indicates that the large lower bound suggested by Theorem \ref{thm:lower-bound-for-outside-span} could be circumvented for certain lattices if we are promised that the inputs lie in $span(L)$. The assumption that the input lies in $span(L)$ is natural in decoding problems for lattices.


\section{Overview of the proofs}
\label{subsection:proof-overviews}

\subsection{Upper and lower bounds for testing general code formula lattices}

The constructions of a tester for Theorem \ref{thm:constrD:test} and a tolerant tester for Theorem \ref{theorem:tolerant-tester-code-formula}  
follow the natural intuition that in order to test the lattice one can test the underlying codes individually. The proof relies on a triangle inequality that can be derived for such lattices.
The application to code-formula lattices constructed from Reed-Muller codes follows from the tight analysis of Reed-Muller code testing from \cite{BKSSZ10}, which guarantees constant rejection probability of inputs that are at distance proportional to the minimum distance of the code.

While the tester that we construct from code testers for the purposes of proving Theorem \ref{thm:constrD:test} is an adaptive linear test, there is a simple variant that is a non-adaptive linear test with at least as good correctness and soundness. 
(see Remark \ref{remark:linearD} for a formal description).

The lower bound (Theorem \ref{theorem:lower-bound-tester-for-code}) relies on the  fact that if an input $t$ is far from the code $C_k$ in the code formula construction, then the vector $2^k t$ is far from the lattice  (Lemma \ref{lemma:distance-to-lattice}). Moreover, if $t\in C_k$ then $2^k t$ belongs to the lattice. Therefore a test for the lattice can be turned into a test for the constituent codes. \\


\subsection{From general tests to canonical tests}

We briefly outline our reduction for Theorem \ref{theorem:2-sided-to-1-sided-non-adaptive-linear}.  Suppose $T(\epsilon, c, s, q)$ is a 2-sided, adaptive tester with query complexity  $q=q_T(\eps,c,s)$ for a full rank integral lattice $L$. Such a tester handles all real-valued inputs. We first restrict $T$ to a test that processes only integral inputs in the bounded set ${\cal Z}_d=\{0, 1, \ldots, d-1\}$ (for some carefully chosen $d$),  and so the restricted test inherits all the parameters of $T$. We remark that ${\cal Z}_d\subset \Z$ is a subset of integers, and it should not be confused with $\Z_d$, the ring of integers modulo $d$. 

A  key ingredient in our reduction is choosing the appropriate value of $d$ in order to enable the same guarantees as that of codes.
We choose $d$ such that $d\Z^n\subseteq L$. Such a $d$ always exists \cite{MiccLN12}. This choice of $d$ allows us to add any vector in $V=L\mmod d$ (embedded in $\R^n$) to any vector $x\in \R^n$ without changing the distance of $x$ to $L$ in any $\ell_p$-norm (see Proposition \ref{prop:mod-preserves-distance}).

Since our inputs are now integral and bounded, any adaptive test can be viewed as a distribution over deterministic tests, which themselves can be viewed as decision trees. 
This allows us to proceed along the same lines as in the reduction for codes over finite fields of \cite{BHR05}. 

We exploit the property  that adding any vector in $V$ to any vector $x\in \R^n$ does not change the distance to $L$. In the first step of our reduction we add a random vector in $V$ to the input and perform a  probabilistic {\em  linear} test. The idea  is that one can relabel the decision tree of any test according to the decision tree of a linear test, such that the error shifts from the positive (yes) instances to the negative (no) instances (see Lemma \ref{lemma:2-sided-to-1-sided-linear}). A simple property of lattices used in this reduction is that if the set of queries $I$ and answers $a_I$ do not have a local witness for non-membership in the lattice (in the form of a dual lattice vector $v$ supported on $I$ such that $\langle w_I, v_I\rangle\not \in \Z$), then there exists  $w\in L$ that extends $a_I$ to the remaining set of coordinates (i.e., $a_I=w_I$). 

In the next step we remove the adaptive aspect of the test to obtain a non-adaptive linear test for inputs in ${\cal Z}_d^n$ (see Lemma \ref{lemma:adaptive-to-non-adaptive}). We obtain this tester by performing the adaptive queries on a randomly chosen vector in $V$ (and not on the input itself) and rejecting/accepting according to whether there exists a local witness for the non-membership of the input queried on the same coordinates.

We then  lift this test  to a non-adaptive linear test for inputs in $\Z^n$, by simulating the test over ${\cal Z}_d^n$ on the same queried coordinates but using the answers obtained after taking modulo $d$. 
Owing to the choice of $d$, this does not change the distance of the input to the lattice (see Lemma \ref{lemma:bounded-targets}). 
 
Finally, we extend this test to a non-adaptive linear test for inputs in $\R^n$ by performing some additional queries to rule out inputs that are not in $\Z^n$. For this, we design a tester for the integer lattice $\Z^n$ with query complexity $O((1/\eps^p)\log{(1/s)})$. 
This final step of testing integrality increases the overall query complexity to $q_T(\eps/2,c,s)+O((1/\eps^p)\log{(1/s)})$ (see Lemma \ref{lemma:integer-targets}).\\


\noindent{\bf Organization.}  
We prove the upper bound and lower bound for testing code formula constructions (Theorem~\ref{thm:constrD:test}, Theorem~\ref{theorem:lower-bound-tester-for-code}) and its instantiations to Reed-Muller codes (Corollary~\ref{coro:constrD:test}, Corollary~\ref{coro:constrD:lb} ) in Section~\ref{sec:testingconstrD}.
The upper bound for tolerant testing code-formula constructions (Theorem \ref{theorem:tolerant-tester-code-formula}) and its instantiations to Reed-Muller codes (Corollary~\ref{coro:CoroTolerantRM}) are proved in Section~\ref{sec:toltestingconstrD}.  
We present the formal lemmas needed to prove Theorem \ref{theorem:2-sided-to-1-sided-non-adaptive-linear}  and their proofs in Section \ref{sec:reduction}. 
We address non-full-rank lattices and prove Theorems \ref{thm:lower-bound-for-outside-span}, \ref{thm:upper-bound-for-outside-span}, Corollary \ref{cor:knapsack-lower-bound-outside-span} and Theorem~\ref{thm:knapsack-tester} in Section~\ref{sec:inputs-outside-span}.
 
\section{Testing Code-Formula Lattices}
\label{sec:testingconstrD}
\newcommand{\Frac}{\mathsf{frac}}

\subsection{Upper Bounds for Code-Formula Lattices}
\label{subsec:upper-bound-code-formula}
In this section we construct a tester for testing membership in lattices obtained from the code formula construction using a tester for the constituent codes.

\thmConstrDTest*

\begin{proof}
Let $L := L(\langle C_i \rangle_{i=0}^{m-1})$.
First, we use Lemma~\ref{lemma:integer-targets} to reduce the task to testing integral inputs for
distance parameter $\eps/2$.
According to the lemma, it suffices to show that all such inputs can be tested with $1$-sided error and using
$\sum_{i=1}^{m-1} q_i$ 
queries.

Now, let $w\in \Z^n$ denote the input.  
We may assume that all coordinates of $w$ are non-negative integers less than $2^m$. 
Otherwise, we can shift each coordinate by an appropriate (possibly different) multiple
of $2^m$ to make sure this condition holds. Observe that each such operation would correspond to shifting $w$ by an integer multiple of the lattice vector $2^m e_i$, where $e_i$ is the $i^{th}$ basis vector, and that translating a vector by a lattice point does not affect its distance to the lattice. 
Moreover, observe that this transformation can be applied implicitly, locally, and efficiently by the testing algorithm as the queries are made.

Let $w_0,\ldots, w_{m-1}\in \{0,1\}^n$ where $w_i(j)$ is the $(i+1)^{th}$ least significant bit in the binary decomposition of the $j^{th}$ coordinate of $w$. Thus we have $w=\sum_{i=0}^{m-1} 2^i w_i$. Once again, the coordinates of $w_i$ can be computed implicitly, locally and efficiently by the algorithm as the queries are made.

The tester $T$ would now proceed as follows: Run $T_i(\eps/(m2^{i+1}), 0, s, q_i)$ on $w_i$ for every $i=0,1,\ldots, m-1$. 
Accept if and only if all tests accept.

The overall query complexity of this tester is 
$\sum_{i=1}^{m-1} q_i$. 

The completeness of this tester is easy to deduce. Indeed, if $w \in L(\langle C_i \rangle_{i=0}^{m-1})$, then 
by definition of the code formula construction, there exist $\tilde{w_i}\in C_i$ for every $i=0,1,\ldots,m-1$ and an integer $\tilde{w}_m\in 2^m \Z^n$ such that $w=\sum_{i=0}^{m-1}2^i \tilde{w}_i + \tilde{w}_m$. Since entries of $w$ are non-negative integers
less than $2^m$, we must have $\tilde{w}_m = 0$. 
Moreover, since $\tilde{w_i} \in \{0,1\}^n$ and the binary representation is unique,
it must be that $\tilde{w}_i \in C_i$ for $i=0, \ldots, m-1$. 
That is,  each of the $w_i $ embedded in $\F_2^n$ are in $C_i$ and therefore, each $T_i(\eps/(2^{i+1} m), 0, s, q_i)$ (and thus the overall tester) will accept $w_i$. 

Before analyzing the soundness, we observe the following simple inequality. 

\begin{claim}
\label{claim:triangle-inequality-decomposition} 
$
d_1(w, L)\le d_1(w_0, C_0) + 2 d_1(w_1, C_1) + \cdots + 2^{m-1} d_1(w_{m-1}, C_{m-1}).  $
\end{claim}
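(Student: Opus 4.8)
The plan is to establish the claim by exhibiting, for each constituent code $C_i$, a nearby codeword $\tilde w_i \in C_i$ and then using the fact that the lattice $L$ contains all integer combinations $\sum_i 2^i c_i$ with $c_i \in C_i$ (embedded in $\{0,1\}^n \subset \R^n$), together with $2^m \Z^n$. First I would recall that $w = \sum_{i=0}^{m-1} 2^i w_i$ where $w_i \in \{0,1\}^n$ is the bit-vector of the $(i+1)$-st least significant bits, as set up just before the claim. For each $i$, let $\tilde w_i \in C_i$ be a codeword achieving $d_1(w_i, C_i) = \|w_i - \tilde w_i\|_1$ (a minimizer exists since $C_i$ is finite). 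Then the vector $v := \sum_{i=0}^{m-1} 2^i \tilde w_i$ lies in $L$: indeed $\sum_i 2^i C_i \subseteq C_0 + 2C_1 + \cdots + 2^{m-1}C_{m-1} + 2^m\Z^n = L$ by definition of the code formula.

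Next I would bound $d_1(w, L) \le d_1(w, v) = \|w - v\|_1 = \big\| \sum_{i=0}^{m-1} 2^i (w_i - \tilde w_i) \big\|_1$. Applying the triangle inequality for the $\ell_1$ norm coordinate-wise gives
\[
\Big\| \sum_{i=0}^{m-1} 2^i (w_i - \tilde w_i) \Big\|_1 \le \sum_{i=0}^{m-1} 2^i \|w_i - \tilde w_i\|_1 = \sum_{i=0}^{m-1} 2^i d_1(w_i, C_i),
\]
which is exactly the asserted inequality. That is the entire argument.

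I do not anticipate a genuine obstacle here — the claim is a straightforward triangle inequality — but the one point to state carefully is that $v \in L$, i.e., that the formal sum $\sum 2^i \tilde w_i$ with each $\tilde w_i$ viewed in $\R^n$ via the trivial $0\mapsto 0$, $1\mapsto 1$ embedding actually lands in the lattice; this is immediate from the definition of the code formula but worth a sentence since all arithmetic is over $\R^n$. I would also note, for cleanliness, that we do not even need the Schur product condition for this particular inequality — it only uses that $L$ contains $\sum_i 2^i C_i$ — though of course the surrounding theorem needs it to ensure $L$ is a lattice in the first place.
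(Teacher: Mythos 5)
Your proof is correct and follows the same approach as the paper: choose nearest codewords $\tilde w_i \in C_i$, observe that $v = \sum_i 2^i \tilde w_i \in L$, and apply the triangle inequality to $\|w - v\|_1$. The side remark that the Schur product condition is not needed for this particular inequality is accurate and a nice observation, but the argument itself matches the paper's.
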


\begin{proof}
Let $c_i \in \{0, 1\}^n$ be the closest codeword to $w_i$ in $C_i$ for every $i =0,1,\ldots,m-1$. From the definition of the code formula construction, we know that the vector $v = c_0 + 2c_1 + \cdots + 2^{m-1}c_{m-1}$ is a lattice vector. Therefore, 
\begin{align*}
\sum_{i=0}^{m-1} 2^i d_1(w_i,C_i) 
&= \sum_{i=0}^{m-1} 2^i \|w_i - c_i \|_1 \\
&\ge  \left \| \sum_{i=0}^{m-1} 2^i(w_i - c_i) \right \|_1 \quad \quad \text{(by the triangle inequality)}\\
&=d_1(w, {v}) \\
&\ge d_1(w,L).
\end{align*}
\end{proof}

Now, if $d_1(w,L)\ge \eps n/2$, then by Claim~\ref{claim:triangle-inequality-decomposition}
we have
\begin{align*}
d_1(w_0, C_0) + 2 d_1(w_1, C_1) + \cdots + 2^{m-1} d_1(w_{m-1}, C_{m-1}) 
\ge \eps n/2. 
\end{align*}
Therefore, by averaging,   
for some $i\in\{0,1,\ldots,m-1\}$ we must have $d_1(w_i,C_i)\ge (\eps / (m 2^{i+1}))n $.  
Thus the tester $T_i(\eps/(m2^{i+1}), 0, s, q_i)$ will reject with probability at least $1-s$. Hence the soundness follows.
\end{proof}

We now apply the result of Theorem \ref{thm:constrD:test} to the lattice obtained by applying code formula on a nested family of Reed-Muller codes.
In order to do so, we use the following result.

\begin{theorem} \label{theorem:RM-codes-upper-bound}\cite{BKSSZ10} 
For any $0 \leq k \leq r$ and $0 < s < 1$, $RM(k, r)$ has a 1-sided tester $T(\eps, 0, s, q(\eps, s))$ with query complexity $q(\eps, s) = O( (\log \frac1s)(2^{k} + \frac1\eps) )$ whose queries are each uniformly distributed.
\end{theorem}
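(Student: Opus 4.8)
The plan is to prove Theorem~\ref{theorem:RM-codes-upper-bound} (the main result of \cite{BKSSZ10}) via the affine-subspace test together with standard amplification. First I would describe the \emph{basic $(k{+}1)$-flat test}: sample $y_0,y_1,\dots,y_{k+1}\in\F_2^r$ independently and uniformly, query $f$ at the $2^{k+1}$ points $y_0+\sum_{i=1}^{k+1}b_iy_i$ for $b\in\{0,1\}^{k+1}$, and reject iff $\sum_{b\in\{0,1\}^{k+1}} f\big(y_0+\sum_{i=1}^{k+1} b_iy_i\big)\ne 0$ in $\F_2$ --- equivalently, restrict $f$ to a uniformly random $(k{+}1)$-dimensional affine subspace and test that the restriction has degree $\le k$. (The case $k=r$ is trivial since then $RM(r,r)=\F_2^{2^r}$, so assume $k<r$.) Two easy observations: each individual query point is uniformly distributed on $\F_2^r$ (it is $y_0$, or $y_0$ plus an independent nonempty partial sum of the $y_i$'s), a property inherited by the repeated test; and the basic test is $1$-sided, because $\sum_{b} f(y_0+\sum_i b_iy_i)$ equals the iterated discrete derivative $(D_{y_1}\cdots D_{y_{k+1}}f)(y_0)$ with $D_yh(x):=h(x+y)-h(x)$, and the $(k{+}1)$-st derivative of a degree-$\le k$ polynomial vanishes identically.

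The technical heart --- which I would import from \cite{BKSSZ10} --- is the soundness of the basic test: there is a universal constant $c>0$ such that, with $\delta:=d_H(f,RM(k,r))/2^r$ and $\mathrm{Rej}_k(f)$ the rejection probability of the basic $(k{+}1)$-flat test, $\mathrm{Rej}_k(f)\ge c\cdot\min\{1,\,2^k\delta\}$. I would prove this by induction on $k$. The structural identity driving the induction is the telescoping relation
\[
\sum_{b\in\{0,1\}^{k+1}} f\Big(y_0+\sum_{i=1}^{k+1} b_i y_i\Big)\;=\;\sum_{b'\in\{0,1\}^{k}} \big(D_{y_{k+1}}f\big)\Big(y_0+\sum_{i=1}^{k} b'_i y_i\Big),
\]
which shows that the basic $(k{+}1)$-flat test on $f$ is distributed exactly as the basic $k$-flat test on the random first-order derivative $D_{y_{k+1}}f$, hence $\mathrm{Rej}_k(f)=\mathbb{E}_{y}\big[\mathrm{Rej}_{k-1}(D_{y}f)\big]$; the base case $k=0$ is the pair test $f(y_0)\overset{?}{=}f(y_0+y_1)$, which rejects a $\delta$-far function with probability $2\delta(1-\delta)\ge\delta=\min\{1,\delta\}$. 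To close the induction I would combine this recursion with an ``integration'' lemma stating that $d_H(f,RM(k,r))/2^r=O\big(\mathbb{E}_{y}[\,d_H(D_yf,RM(k-1,r))/2^r\,]\big)$ whenever the right-hand side is below a fixed threshold: this says that if the first-order derivatives are close to degree $\le k-1$ on average then $f$ itself is comparably close to degree $\le k$, a candidate polynomial being obtained by self-correcting/majority-voting the degree-$(k{-}1)$ approximations of the $D_yf$. Feeding the inductive hypothesis $\mathrm{Rej}_{k-1}(h)\ge c\min\{1,2^{k-1}\cdot d_H(h,RM(k-1,r))/2^r\}$ into $\mathrm{Rej}_k(f)=\mathbb{E}_y[\mathrm{Rej}_{k-1}(D_yf)]$ and then applying the integration lemma yields $\mathrm{Rej}_k(f)=\Omega(2^k\delta)$ when $2^k\delta$ is below a constant --- the one extra factor of $2$ per level is exactly what builds up the $2^k$ --- while a short separate argument covers the regime $2^k\delta=\Omega(1)$; together these give the claimed bound.

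Given the basic test, the theorem follows by independent repetition. For input parameters $\eps,s$, run the basic $(k{+}1)$-flat test $T=\Theta\!\big(\frac{1}{\min\{1,\,2^k\eps\}}\log\frac1s\big)$ times and reject iff any run rejects. Completeness stays perfect ($1$-sided); a function $\eps$-far from $RM(k,r)$ is rejected with probability at least $1-(1-c\min\{1,2^k\eps\})^{T}\ge 1-s$; and every query is still uniformly distributed. The query complexity is $2^{k+1}T=O\!\big(\frac{2^{k+1}}{\min\{1,\,2^k\eps\}}\log\frac1s\big)$, and since $\frac{2^{k+1}}{\min\{1,\,2^k\eps\}}=\max\{2^{k+1},\,2/\eps\}=O\big(2^k+\frac1\eps\big)$ this is $O\big((2^k+\frac1\eps)\log\frac1s\big)$, as required.

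The main obstacle is the integration lemma inside the induction --- upgrading ``the first-order derivatives $D_yf$ are close to $RM(k-1,r)$ on average'' to ``$f$ is comparably close to $RM(k,r)$'' with only a constant-factor loss, so that the distance blow-up stays $O(1)$ per level and the analysis remains tight (earlier analyses lost more here and obtained weaker dependence on $\eps$). This is precisely the contribution of \cite{BKSSZ10}, and a fully self-contained account of the present theorem would quote that argument.
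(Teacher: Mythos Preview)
The paper does not prove this theorem at all: it is stated with the citation \cite{BKSSZ10} and used as a black box, with no accompanying proof. Your proposal correctly reconstructs the \cite{BKSSZ10} argument --- the $(k{+}1)$-flat test, the derivative recursion $\mathrm{Rej}_k(f)=\mathbb{E}_y[\mathrm{Rej}_{k-1}(D_yf)]$, the integration lemma as the tight step, and the amplification arithmetic yielding $O((2^k+1/\eps)\log(1/s))$ --- so it is consistent with what the paper imports, just more detailed than anything the paper itself provides.
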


Using the above result in Theorem~\ref{thm:constrD:test}, we obtain Corollary \ref{coro:constrD:test} whose proof we present next.

\coroConstrDTest*

\begin{proof}
From Theorem~\ref{thm:constrD:test}, for any $\eps>0$, there is a $T(\eps, 0,s, q(\eps,s))$ tester for  the code formula lattice $L(\langle C_i \rangle_{i=0}^{m-1})$  with  query complexity  $q(\eps,s)=O\left(\frac{1}{\eps}\log{\frac 1s}\right)+\sum_{i=0}^{m-1} q_i \left(\frac{\eps}{m 2^{i+1}},s\right)$, where $q_i(\eps, s)$ is the query complexity of testing the code $C_i$ (with distance parameter $\eps$ and soundness error $s$). 
By Theorem \ref{theorem:RM-codes-upper-bound},  each $RM(k_i, r)$ has a 1-sided tester $T_i(\eps_i, 0, s, q_i(\eps_i, s))$ with query complexity $q_i(\eps_i, s) = O(\log (1/s))(2^{k_i} +1/\eps_i)$. 
Therefore, the 1-sided tester $T(\eps, 0, s, q)$ for the code formula lattice $L(\langle RM(i, r) \rangle_{i=k_0}^{k_{m-1}})$ has query complexity 
\begin{eqnarray*}
q(\eps,s )&=&O\left(\frac{1}{\eps}\log{\frac 1s}\right)+\sum_{i=0}^{m-1} q_i \left(\frac{\eps}{m 2^{i+1}},s\right)\\
&=&O\left(\frac{1}{\eps}\log \frac 1s\right)\sum_{i=0}^{m-1}\left( 2^{k_i}+m 2^{i+1} \right)\\
&=&O\left(\frac{1}{\eps}\log \frac 1s \right)\left ( m  2^{m}+\sum_{i=0}^{m-1}2^{k_i}\right)\\
&=&O\left(\frac{1}{\eps}\log \frac 1s \right)\left ( m  2^{m}+2^{k_{m-1}}\right)\\
&=& O\left(\frac{1}{\eps}\log \frac 1s \cdot 2^{k_{m-1}}  \right).
\end{eqnarray*}

To see the last step of the above equation, recall that in order for $L(\langle RM(i, r) \rangle_{i=k_0}^{k_{m-1}})$ to be a lattice, we must have $k_i\geq 2^{i-1}$ for $i>0$, and in particular $k_{m-1}\geq 2^{m-2}$. So $2^{k_{m-1}}\geq 2^{2^{m-2}}\geq m 2^m$, for $m\geq 5$.
\end{proof}

\subsection{Lower bounds for Code-Formula Lattices}\label{sec:lower-bounds}

In this section, we  prove Theorem~\ref{theorem:lower-bound-tester-for-code}. 
We will use the following lemma. 

\begin{lemma}\label{lemma:distance-to-lattice}
Let $C_0,C_1,\ldots, C_{m-1}$ be a family of codes satisfying the Schur product condition and $L=L(\langle C_i \rangle_{i=0}^{m-1})$. Let $t\in \{0,1\}^n$ and $k\in \{0,1,\ldots,m-1\}$. Then
\[
d_1(t,C_k)\leq d_1(2^kt,L)\leq 2^k d_1(t,C_k).
\]
\end{lemma}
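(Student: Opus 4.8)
The plan is to establish the two inequalities $d_1(t,C_k)\le d_1(2^k t,L)$ and $d_1(2^k t,L)\le 2^k d_1(t,C_k)$ separately, exploiting the layered structure of the code formula. Throughout I will use the decomposition of a lattice vector $v\in L$ as $v=\sum_{i=0}^{m-1}2^i c_i + 2^m z$ with $c_i\in C_i$ (embedded in $\{0,1\}^n\subset\R^n$) and $z\in\Z^n$, which is exactly the defining form of the code formula.

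For the \emph{upper bound} $d_1(2^k t,L)\le 2^k d_1(t,C_k)$: let $c\in C_k$ be a closest codeword to $t$ in Hamming/$\ell_1$ distance. Since $C_k\subseteq C_{k+1}\subseteq\cdots$, the vector $c$ lies in $C_k$, so $2^k c$ is a legitimate lattice vector (take $c_k=c$ and all other constituent codewords and the $\Z^n$-part to be zero). Then $d_1(2^k t,L)\le \|2^k t - 2^k c\|_1 = 2^k\|t-c\|_1 = 2^k d_1(t,C_k)$. This direction is immediate.

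For the \emph{lower bound} $d_1(t,C_k)\le d_1(2^k t,L)$: let $v=\sum_{i=0}^{m-1}2^i c_i + 2^m z\in L$ be a closest lattice vector to $2^k t$, so $d_1(2^k t,L)=\|2^k t - v\|_1$. The idea is to produce from $v$ a codeword of $C_k$ whose $\ell_1$-distance to $t$ is at most $\|2^k t-v\|_1$. Consider a single coordinate $j$. Write $a=2^k t_j$ (an integer multiple of $2^k$, in fact $a\in\{0,2^k\}$ since $t_j\in\{0,1\}$) and $b=v_j=\sum_{i=0}^{m-1}2^i (c_i)_j + 2^m z_j$. Looking at $b$ modulo $2^{k+1}$, its bit in position $k$ is exactly $(c_k)_j$. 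I claim $|t_j - (c_k)_j|\le |2^k t_j - v_j|$ coordinatewise. If $(c_k)_j = t_j$ the left side is $0$ and there is nothing to prove. If $(c_k)_j \ne t_j$, then $v_j$ and $2^k t_j$ differ in the $2^k$-bit of their binary representations; since $2^k t_j$ has zeros in all bit positions below $k$, the difference $|2^k t_j - v_j|$ is at least $2^k\ge 1 = |t_j-(c_k)_j|$ — here one must check the case analysis carefully, e.g. if $t_j=1$ so $2^k t_j = 2^k$ and $(c_k)_j=0$, then $v_j\equiv 0 \pmod{2^{k+1}}$ after accounting for lower bits, forcing $|2^k - v_j|\ge 2^k$ unless lower-order bits conspire, which they cannot reduce below... this needs the observation that $|2^k t_j - v_j|$ rounds to a value whose $k$-th bit is nonzero, hence $\ge 2^k$ if we are careful, or more simply $\ge 1$, which already suffices. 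Summing over $j$ gives $d_1(t,C_k)\le \|t - c_k\|_1 \le \|2^k t - v\|_1 = d_1(2^k t,L)$.

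The \textbf{main obstacle} is the coordinatewise bit-extraction argument in the lower bound: one must argue cleanly that the $k$-th bit of $v_j$ equals $(c_k)_j$ (using that $\sum_{i>k}2^i(c_i)_j + 2^m z_j$ contributes nothing below bit $k+1$ and $\sum_{i<k}2^i(c_i)_j < 2^k$ contributes nothing to bit $k$ or above), and then that disagreement in the $k$-th bit between $2^k t_j$ and $v_j$ forces $|2^k t_j - v_j|\ge 1$. I would phrase this via: if $(c_k)_j\ne t_j$ then $2^k t_j \not\equiv v_j \pmod{2^{k+1}}$ in the $k$-th bit, so in particular $2^k t_j \ne v_j$, and since both are integers $|2^k t_j - v_j|\ge 1 = |t_j-(c_k)_j|$. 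That weaker bound is all that is needed and sidesteps the delicate $\ge 2^k$ estimate.
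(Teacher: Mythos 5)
Your proof is correct and follows essentially the same route as the paper's: the upper bound via $2^k C_k \subseteq L$ is identical, and for the lower bound you extract the $C_k$-component of a lattice vector $v$ from its code-formula decomposition and argue coordinatewise that $|v_j - 2^k t_j|\ge 1$ whenever $(c_k)_j \ne t_j$. Your mod-$2^{k+1}$ phrasing of that coordinatewise step is a clean alternative to the paper's explicit interval-chasing on the partial sums $\sum_{j<k}2^j c_j(i) + 2^k u(i) + \cdots$, but the underlying bit-level observation (the $k$-th bit of $v_j$ equals $(c_k)_j$, and disagreement forces a nonzero integer gap) is the same.
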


\begin{proof}
Since $2^k C_k$ is contained in $L$, $d_1(2^kt, L) \leq d_1(2^kt, 2^kC_k) = 2^k d_1(t, C_k)$. 
So, the distance of $2^kt$ to the lattice is at most $2^k d_1(t,C_k)$. We now show the inequality
\[
d_1(2^kt, L)\ge d_1(t,C_k).
\]
Let $v=\sum_{j=0}^{m-1}2^j c_j + 2^m z$ for some arbitrary $c_j\in C_j$  (for every $j \in \{0,1,\ldots,m-1\}$) and some $z\in \Z^n$, be an arbitrary lattice vector. We will show that $d_1(2^k t,v)\ge d_1(t,C_k)$. Let $u=c_k-t$, and $S \subseteq [n]$ be the support of $u$, then $|S|\ge d_1(t,C_k)$.

By Claim \ref{claim:all-coordinates-are-large}, $d_1(2^kt,v) \ge \sum_{i \in S} |v(i) - 2^k t(i)| \ge |S|$ (where $v(i)$ and $t(i)$ represent the $i$th entry in the respective vectors), which completes the proof.
\end{proof}

\begin{claim}\label{claim:all-coordinates-are-large}
For every $i\in S$, $|v(i) - 2^k t(i)|\ge 1$.
\end{claim}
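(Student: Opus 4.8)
The plan is to reduce the claim to showing that, for each $i\in S$, the integer $v(i)-2^k t(i)$ is nonzero. Integrality is immediate: since $c_j(i),t(i)\in\{0,1\}$ and $z(i)\in\Z$, both $v(i)=\sum_{j=0}^{m-1}2^j c_j(i)+2^m z(i)$ and $2^k t(i)$ are integers. Hence, once we know their difference is nonzero, it has absolute value at least $1$, which is exactly the claim.

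To establish that $v(i)-2^k t(i)\neq 0$, I would work modulo $2^{k+1}$. Because $k\le m-1$, every summand $2^j c_j(i)$ with $j\ge k+1$, as well as the term $2^m z(i)$, is divisible by $2^{k+1}$, so
\[
v(i)-2^k t(i)\ \equiv\ \sum_{j=0}^{k-1}2^j c_j(i)\ +\ 2^k\bigl(c_k(i)-t(i)\bigr)\pmod{2^{k+1}}.
\]
Now $i\in S$ means $u(i)=c_k(i)-t(i)\neq 0$, and since $c_k(i),t(i)\in\{0,1\}$ this forces $c_k(i)-t(i)\in\{-1,+1\}$; in either case $2^k\bigl(c_k(i)-t(i)\bigr)\equiv 2^k\pmod{2^{k+1}}$. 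The remaining sum $\sum_{j=0}^{k-1}2^j c_j(i)$ lies in $\{0,1,\dots,2^k-1\}$, so the right-hand side above is congruent modulo $2^{k+1}$ to an element of $\{2^k,\dots,2^{k+1}-1\}$, which is nonzero modulo $2^{k+1}$. Therefore $v(i)-2^k t(i)\not\equiv 0\pmod{2^{k+1}}$, so in particular it is a nonzero integer, and $|v(i)-2^k t(i)|\ge 1$.

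There is no genuine obstacle here; the only points to get right are the bookkeeping of which terms vanish modulo $2^{k+1}$ (this uses $k\le m-1$, hence $k+1\le m$) and the observation that a nonzero difference of two $\{0,1\}$-values is $\pm1$, which becomes $\equiv 2^k$ after multiplication by $2^k$ and reduction mod $2^{k+1}$. Equivalently, one can phrase the argument through the binary expansion of $v(i)$: its bit in position $k$ equals $c_k(i)$, which differs from the bit $t(i)$ of $2^k t(i)$ in that same position, so the two integers cannot coincide.
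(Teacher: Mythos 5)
Your proof is correct. Where the paper argues iteratively---first bounding $\bigl|\sum_{j<k}2^j c_j(i)+2^k u(i)\bigr|$ between $1$ and $2^{k+1}-1$, then folding in each higher term $2^j c_j(i)$ and finally $2^m z(i)$ while maintaining both a lower and an upper bound on the running partial sum---you instead reduce the entire expression modulo $2^{k+1}$ once, which immediately annihilates every term $2^j c_j(i)$ with $j>k$ as well as $2^m z(i)$ (here $k+1\le m$ is the only thing to check). The nonvanishing of the residue then rests on two short observations: $2^k u(i)\equiv 2^k\pmod{2^{k+1}}$ since $u(i)\in\{-1,+1\}$, and the carry-free sum $\sum_{j<k}2^j c_j(i)$ lies in $\{0,\dots,2^k-1\}$, so the total is congruent to something in $\{2^k,\dots,2^{k+1}-1\}$. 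This is shorter and avoids the running upper bounds, which in the paper's version are needed only at the very end to justify that adding $2^m z(i)$ cannot cancel a nonzero integer of magnitude less than $2^m$. Both arguments are really pinpointing the same fact---that the bit of $v(i)-2^k t(i)$ in position $k$ is forced to be $1$---but your modular formulation makes it a one-line computation rather than an induction over bit positions.
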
 

\begin{proof}
Let $i\in S$. Since $2^k u =2^k c_k - 2^k t$, we have  $2^k|u(i)|=2^k$. We also have
\[
|v(i) - 2^k t(i)|=\left |\sum_{j=0}^{k-1} 2^jc_{j}(i) + 2^k u(i)+\sum_{j=k+1}^{m-1}2^j c_{j}(i) + 2^{m}z(i) \right|.
\]
Since $c_{j}(i)\in \{0,1\}$ for every $j\in \{0,1,\ldots, k-1\}$, the first term in the above sum is at least zero and at most $2^k-1$. The maximum is achieved when all $c_j(i) = 1$ for all $j\in [k-1]$, and $u(i) = 1$. Hence, 
$
1\le \left |\sum_{j=0}^{k-1} 2^jc_{j}(i) + 2^k u(i)\right |\le 2^{k+1}-1.
$
Since $c_{k+1}(i)\in \{0,1\}$, we have
\[
1\le \left |\sum_{j=0}^{k-1} 2^jc_{j}(i) + 2^k u(i)+2^{k+1} c_{k+1}(i)\right |\le \left |\sum_{j=0}^{k-1} 2^jc_{j}(i) + 2^k u(i)\right |+\left| 2^{k+1} c_{k+1}(i)\right |\le 2^{k+2}-1.
\]
Proceeding similarly, since $c_{j}(i)\in \{0,1\}$ for $j=k+2,\ldots, m-1$, we have
\[
1\le \left |\sum_{j=0}^{k-1} 2^jc_{j}(i) + 2^k u(i)+\sum_{j=k+1}^{m-1}2^j c_{j}(i) \right|\le 2^{m}-1.
\]
Since $z_m\in \Z$, we conclude that 
$
\left |\sum_{j=0}^{k-1} 2^jc_{j}(i) + 2^k u(i)+\sum_{j=k+1}^{m-1}2^j c_{j}(i) + 2^{m}z(i) \right|\ge 1.
$
\end{proof}

\thmLowerBoundCodeTester*
\begin{proof}
Let $T(\epsilon, c, s, q)$ be a test for the code lattice, and let  $k\in \{0,1,\ldots,m-1\}$. We construct a tester $T_k(\eps,c,s,q)$ for $C_k$ as follows: On input $w\in \{0,1\}^n$, run $T(\epsilon, c, s, q)$ on $2^k w$ and accept if and only if $T$ accepts. The query complexity of $T_k$ is the same as the query complexity of $T$. If the input $w$ is a codeword in $C_{k}$, then by the definition of the lattice, $2^k w$ is a lattice vector, and $T_k$ will accept $w$ with probability at least $1-c$. If $d_1(w,C_k)\ge \eps n$, then by Lemma \ref{lemma:distance-to-lattice}, we have that $d_1(2^k w,L( \langle C_{i} \rangle)_{i=0}^{m-1})\ge \eps n$. Therefore, $T_k$ will reject $w$ with probability at least $1-s$.  Finally,  since $L\subseteq \Z^n$ we have that $d(w, \Z^n)\leq d(w, L)$ and so we could use $T$ to test membership in $\Z^n$. By Claim \ref{claim:lb-for-integer-lattice} 
testing $\Z^n$ requires $q=\Omega(\frac{1}{\eps}\log(1/s))$ queries.
\end{proof}

\begin{claim}\label{claim:lb-for-integer-lattice}
 Any test $T_k(\eps,c,s,q)$ for $\Z^n$ requires $q=\Omega(\frac{1}{\eps}\log(1/s))$ queries. 
\end{claim}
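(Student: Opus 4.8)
The plan is to exhibit a distribution over inputs that are $\eps$-far from $\Z^n$ but look identical to the zero vector (which is in $\Z^n$) on any small set of queried coordinates, and then invoke Yao's principle. Concretely, I would consider inputs $t\in\R^n$ in which a random subset of coordinates is set to the value $1/2$ (which has $\ell_1$-distance exactly $1/2$ from the nearest integer) and the rest are set to $0$. If roughly $2\eps n$ coordinates are set to $1/2$, then $d_1(t,\Z^n)=\tfrac12\cdot 2\eps n = \eps n = \eps\|1^n\|_1$, so $t$ is $\eps$-far from $\Z^n$; on the other hand, on any fixed set $J$ of $q$ coordinates, the restriction $t_J$ equals $0_J$ unless $J$ hits the planted set. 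First I would set up the two distributions: $\mathcal{D}_{\mathrm{yes}}$ puts all mass on $0^n\in\Z^n$, and $\mathcal{D}_{\mathrm{no}}$ picks each coordinate independently to be $1/2$ with probability $p\approx 2\eps$ (or picks a uniformly random subset of size $\lceil 2\eps n\rceil$ to be exactly $1/2$; a Chernoff bound handles the $\eps$-far condition in the independent version, up to adjusting constants, or the fixed-size version makes it deterministic).

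Next, I would argue that any $q$-query deterministic (adaptive) tester cannot distinguish $\mathcal{D}_{\mathrm{yes}}$ from $\mathcal{D}_{\mathrm{no}}$ with the required confidence unless $q=\Omega((1/\eps)\log(1/s))$. Fix a deterministic decision tree of depth $q$. Running it on an input drawn from $\mathcal{D}_{\mathrm{no}}$: at each node the queried coordinate is $0$ with probability $1-p$ and is $1/2$ with probability $p$, independently of the past (this independence is exactly why the fixed-size variant is slightly more delicate — there I would instead bound the probability that the first $q$ queries all return $0$ by $\binom{n-q}{\lceil 2\eps n\rceil}/\binom{n}{\lceil 2\eps n\rceil}\geq (1-2\eps-o(1))^q$). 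Hence with probability at least $(1-p)^q\geq (1-2\eps)^q$ the tester sees the all-zero transcript, which is exactly the transcript it sees on $0^n$ with probability $1$. On this common event the tester must give the same answer in both cases. Since on $\mathcal{D}_{\mathrm{yes}}$ it must accept with probability $\geq 1-c$ and on $\mathcal{D}_{\mathrm{no}}$ it must reject with probability $\geq 1-s$, the probability of the all-zero transcript under $\mathcal{D}_{\mathrm{no}}$ is at most $s$ (otherwise the tester would wrongly accept a far input too often), giving $(1-2\eps)^q\leq s$, i.e. $q\geq \frac{\ln(1/s)}{-\ln(1-2\eps)}=\Omega\!\left(\frac{1}{\eps}\log\frac{1}{s}\right)$ for $\eps$ bounded away from $1$.

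Finally, I would remove the restriction to deterministic testers via Yao's minimax principle: a randomized $q$-query tester that achieves completeness $1-c$ and soundness $1-s$ on every input is, on average over the two distributions above, a distribution over deterministic $q$-query testers, so some deterministic tester in its support must also beat the bound, contradicting the previous paragraph. This yields $q=\Omega((1/\eps)\log(1/s))$ for arbitrary (adaptive, two-sided) testers. The main obstacle I anticipate is purely bookkeeping: making the "$\eps$-far" guarantee hold cleanly and simultaneously with the independence that makes the transcript argument trivial — the cleanest route is to use the fixed-size planted set (so $d_1(t,\Z^n)=\eps n$ deterministically) and pay a negligible cost in the transcript-collision probability, which only changes the hidden constant. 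One should also double-check the normalization $\|1^n\|_1=n$ so that "$\eps$-far" means $d_1(t,\Z^n)\geq \eps n$, matching Definition \ref{defn:local-test}.
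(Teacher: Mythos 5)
Your proposal matches the paper's proof in all essentials: both invoke Yao's principle, take the yes-distribution to be a point mass at $0^n$, take the no-distribution to place value $1/2$ on a uniformly random set of $2\eps n$ coordinates, and conclude from the transcript-collision probability $(1-\Theta(\eps))^q$ that $(1-\Theta(\eps))^q \leq s$, hence $q = \Omega\bigl(\tfrac{1}{\eps}\log\tfrac{1}{s}\bigr)$. The only differences are cosmetic: the paper uses the per-query bound $4\eps$ (valid for $q\le n/2$) where you use $2\eps - o(1)$, and it does not consider the independent-Bernoulli variant.
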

 
\begin{proof}
First, we use Yao's duality theorem \cite{ref:Yao77} which is a standard tool in proving lower bounds
and assume that the testing algorithm is, without loss of generality, deterministic
(but possibly adaptive).
We exhibit two distributions on the inputs which the algorithm is expected to
distinguish but cannot do so without making sufficiently many queries.
The \emph{yes case} distribution is the deterministic distribution on all-zeros
input (which is a lattice point). Given an input from this distribution, the algorithm should accept.
The \emph{no case} distribution would consists of the all-zeros vector but with
a uniformly random set $S$ of $2\eps n$ coordinate positions changed from $0$
to $1/2$. Indeed, all vectors on the support of this distribution are $\eps$-far
from the lattice.
Assuming that $q \leq n/2$ (otherwise there is nothing to prove), each time the
algorithm queries a position that has not been queries before, there is
at most a $4\eps$ chance that it hits any position in $S$ (even conditioned on
the past query outcomes). 
Thus the probability that the algorithm ever succeeds in finding a
position in $S$ is at most $1-(1-4\eps)^q$, which, on the other hand
by the soundness condition, should be at least $1-s$. Therefore,
the soundness error is at least $s\geq (1-4\eps)^q$ or, in other words,
in order to achieve a given $s$ we must have $q = \Omega(\frac{1}{\eps} \log(\frac1s))$.
 \end{proof}

In the case of code formula lattices generated from Reed-Muller codes of order $r$, we note that $n=2^r$. We need the following known lower bound on the query complexity of testing Reed-Muller codes. For completeness we reproduce the exact statement that we need in this work and include a proof. 

\begin{theorem}[\cite{AKKLR05}] \label{thm:RM:lb}
Let $T(\eps, c, s, q)$ be a (possibly $2$-sided and adaptive)  tester for 
the code $RM(k, r)$ where $k \leq r/(2 \log r)$, $\eps < 1/2-\Omega(1)$, and $c+s < 1-\Omega(1)$
(where $\Omega(1)$ hides arbitrarily small positive absolute constants).
Then, $q \geq2^{k}$.
\end{theorem}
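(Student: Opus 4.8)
The plan is to prove the lower bound by the standard ``dual distance'' indistinguishability argument for linear codes, combined with Yao's principle. Since a randomized tester is a mixture of deterministic adaptive testers and the completeness/soundness guarantees are per-input, it suffices to argue that a deterministic adaptive algorithm making fewer than $2^{k}$ queries cannot separate a random codeword from a random string. So suppose $T(\eps,c,s,q)$ exists with $q<2^{k}$. For the ``yes'' distribution $D_{\mathrm{yes}}$ I take the uniform distribution on $RM(k,r)$ (evaluation tables of uniformly random polynomials of degree $\le k$); every sample is a codeword, hence accepted with probability $\ge 1-c$. For the ``no'' distribution $D_{\mathrm{no}}$ I take the uniform distribution on $\{0,1\}^n$, where $n=2^r$.

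The first step is a counting argument showing that a sample from $D_{\mathrm{no}}$ is $\eps$-far from $RM(k,r)$ with probability $1-2^{-\Omega(n)}$, and this is exactly where the hypothesis $k\le r/(2\log r)$ is used: the dimension of $RM(k,r)$ is $\sum_{i\le k}\binom ri\le r^{k+1}\le 2^{r/2+\log r}=O(\sqrt n\log n)$, so $|RM(k,r)|\le 2^{O(\sqrt n\log n)}=2^{o(n)}$, and therefore the number of strings within Hamming distance $\eps n$ of the code is at most $|RM(k,r)|\cdot 2^{H(\eps)n}=2^{(H(\eps)+o(1))n}$, where $H(\cdot)$ denotes the binary entropy function. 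Since $\eps<1/2-\Omega(1)$ forces $H(\eps)<1-\Omega(1)$, this is a $2^{-\Omega(n)}$ fraction of all $2^n$ strings. Consequently, any algorithm accepting $D_{\mathrm{no}}$ with probability $p$ accepts a uniformly random $\eps$-far string with probability at least $p-2^{-\Omega(n)}$, which by soundness is at most $s$.

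The heart of the argument is the indistinguishability lemma: every deterministic adaptive algorithm $A$ making at most $q<2^{k+1}$ distinct queries has the same acceptance probability under $D_{\mathrm{yes}}$ and under $D_{\mathrm{no}}$. This rests on the identity $RM(k,r)^{\perp}=RM(r-k-1,r)$, whose minimum distance is $2^{k+1}$, so for any coordinate set $S$ with $|S|<2^{k+1}$ the projection of $RM(k,r)$ onto $S$ is all of $\{0,1\}^S$ and the projection of a uniform codeword is uniform on $\{0,1\}^S$. I would then prove by induction on the number of queries (assuming, without loss of generality, that $A$ never repeats a query) that the answer string $A$ observes on a uniform codeword is uniform on $\{0,1\}^q$: at step $j+1$ the next query $i_{j+1}$ is a deterministic function of the first $j$ answers, and conditioned on those answers the value $f(i_{j+1})$ is still an unbiased bit because the set of $j+1$ queried coordinates has size $<2^{k+1}$. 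Since on a uniform function the $q$ answers are manifestly uniform on $\{0,1\}^q$, and both $A$'s query sequence and its accept/reject decision are deterministic functions of the answer string, the induced distributions of $A$'s decision under $D_{\mathrm{yes}}$ and $D_{\mathrm{no}}$ coincide.

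Finally I would assemble the pieces: conditioning on the internal coins $\rho$ of $T$ makes each $T_\rho$ a deterministic $q$-query algorithm with $q<2^{k}\le 2^{k+1}$, so the lemma gives $\Pr_{D_{\mathrm{yes}}}[T_\rho\text{ accepts}]=\Pr_{D_{\mathrm{no}}}[T_\rho\text{ accepts}]$; averaging over $\rho$ and combining with the counting step yields $s\ge \Pr_{D_{\mathrm{yes}}}[T\text{ accepts}]-2^{-\Omega(n)}\ge 1-c-2^{-\Omega(n)}$, i.e.\ $c+s\ge 1-2^{-\Omega(n)}$, contradicting $c+s<1-\Omega(1)$ once $r$ is above an absolute constant. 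I expect the main obstacle to be handling \emph{adaptivity} cleanly in the indistinguishability lemma: one must show that the full transcript, not merely its restriction to a fixed coordinate set, is identically distributed, which is precisely where the inductive argument and the no-repeated-queries reduction are needed; a secondary nuisance is bookkeeping the exact constant thresholds on $\eps$ and $c+s$ and checking that the $2^{-\Omega(n)}$ slack is indeed negligible in the relevant range of $r$.
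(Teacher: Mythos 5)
Your proposal is correct and follows essentially the same approach as the paper: a counting argument showing that a uniform string is $\eps$-far from $RM(k,r)$ with overwhelming probability, combined with the dual-distance $2^{k+1}$ of $RM(k,r)$ (via $RM(k,r)^\perp = RM(r-k-1,r)$) to make a uniform codeword $t$-wise independent and hence indistinguishable from a uniform string to any $q<2^{k+1}$-query algorithm. The only stylistic difference is that you handle adaptivity directly with an induction on the query transcript, whereas the paper first invokes the reduction of Ben-Sasson, Harsha, and Raskhodnikova to non-adaptive one-sided testers before running the same indistinguishability argument; your route is a bit more self-contained, but the substance is identical.
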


\begin{proof}
\newcommand{\cC}{\mathcal{C}}
Using the reduction from 2-sided, adaptive testers to non-adaptive, 1-sided tests for any linear code (applicable to RM codes) of \cite{BHR05}, it is sufficient to focus on the latter tests.
Let $\cC$ be the code $RM(k, r)$. First we note that the length of the code is $R := 2^r$ and its dimension is 
\[
\log |\cC| = \sum_{i=0}^k \binom{r}{i} \leq 1+k r^k.
\]
Therefore, noting that $k \leq r/(2\log r)$,
\[
|\cC| = O(2^{k r^k}) = O(2^{r^{k+1}}) = O(2^{(\log R)\sqrt{R}})=2^{o(R)}.
\]
Let $V$ be the number of points in a Hamming ball of radius $\eps R$ in $\{0,1\}^R$. Thus we have
$V \leq 2^{h(\eps) R}$, where $h(\cdot)$ denotes the binary entropy function.
Let $S$ be the set of points in $\{0,1\}^R$ that have Hamming distance at most $\eps r$
with the code. Of course we have \[|S| \leq V |\cC| = O(2^{(h(\eps)+o(1)) R}).\]
Since $\eps < 1$ is a fixed constant, this implies $|S|/2^R = o(1)$.
Now we run the tester with the following two input distributions:
\begin{description}
\item[Case 1. ] The tester is given a uniformly random string in $\{0,1\}^R$ as the input.

\item[Case 2. ] The tester is given a uniformly random codeword of $\cC$ as the input.
\end{description}
Since the dual distance of $\cC$ is $2^{k+1}$, a standard coding theoretic fact implies
that a uniformly random codeword of $\cC$ is $t$-wise independent for $t=2^{k+1}$;
i.e., any local view of up to $t$ coordinates of the random codeword is exactly the uniform
distribution. Therefore, since the tester makes no more than $t$ queries, its output distribution 
is exactly the same in the above two cases. Let $p$ be the acceptance probability of the
tester with respect to the (common) output distribution. 
In order to satisfy completeness, the tester should accept with probability at least
$1-c$ in the second case. Therefore, we must have $p \geq 1-c$.

On the other hand, a uniform random string in $\{0,1\}^R$ is $\eps$-far from the code
with probability $1-o(1)$ according to the above bound on $|S|$. In the conditional world
where this string actually becomes $\eps$-far from the code, the acceptance probability
of the code would thus remain within $p(1\pm o(1))$. However, in this case the soundness
implies that the tester should accept with probability at most $s$, and thus,
$p \leq s(1+o(1))$. Thus the two distributions provided by the above two cases
would violate requirements of the local tester assuming that $c+s \leq 1-\Omega(1)$.
\end{proof}

\CoroConstrDLB*
\begin{proof}
Suppose we have a tester $T(\eps,c,s,q)$ for $L(\langle RM({k_i, r}) \rangle_{i=0}^{m-1})$. By Theorem \ref{theorem:lower-bound-tester-for-code}, we have
\[
q\ge \max_{i=0,1,\ldots,m-1} q_i(\eps,c,s).
\]
By Theorem \ref{thm:RM:lb}, it follows that $q=\Omega(2^{k_{m-1}})$.
\end{proof}
\section{Tolerant Testing Code Formula Lattices }
\label{sec:toltestingconstrD}
We first give a tolerant tester for testing membership in $\Z^n$. We will use this tester in the design of a tolerant tester for testing membership in lattices obtained from the code formula construction. 
\begin{lemma}
\label{lemma:tolerant-tester-for-Z}
Let $\eps_1, \eps_2 , c, s > 0$ such that $\eps_2 > \eps_1$ and $\gamma = \min \{c, s\}$
. There is a tolerant tester $T_Z(\eps_1, \eps_2, c, s, q_Z)$ for $\Z^n$ which uses $q_Z = O(1/(\eps_2-\eps_1)^2 \cdot \log(\frac1\gamma))$  queries.
\end{lemma}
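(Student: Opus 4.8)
The plan is to build a tolerant tester for $\Z^n$ by sampling a small set of coordinates uniformly at random and estimating the "fractional mass" of the input, i.e., the average over queried coordinates of $\Frac(t_i) := \min\{t_i - \floor{t_i}, \ceil{t_i} - t_i\}$, the distance of $t_i$ to the nearest integer. The key observation is that for $\ell_1$ distance, $d_1(t, \Z^n) = \sum_{i=1}^n \Frac(t_i)$, since the closest point of $\Z^n$ to $t$ is obtained by rounding each coordinate independently. Hence $d_1(t,\Z^n)/\|1^n\|_1 = d_1(t,\Z^n)/n$ is exactly the mean of the bounded random variable $X_i := \Frac(t_i) \in [0, 1/2]$ over a uniformly random coordinate $i$.

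First I would fix the sample size $q_Z = O\big(\tfrac{1}{(\eps_2-\eps_1)^2}\log\tfrac1\gamma\big)$ and have the tester query $q_Z$ coordinates $i_1,\dots,i_{q_Z}$ chosen uniformly at random (with replacement, say). It then computes the empirical average $\widehat{\mu} := \tfrac{1}{q_Z}\sum_{j=1}^{q_Z}\Frac(t_{i_j})$ and accepts if and only if $\widehat{\mu} \le (\eps_1+\eps_2)/2$. For the analysis: if $d_1(t,\Z^n) \le \eps_1 n$, then the true mean $\mu := \tfrac1n\sum_i \Frac(t_i) \le \eps_1$, and by a Hoeffding bound for the sum of independent $[0,1/2]$-valued random variables, $\Pr[\widehat\mu > (\eps_1+\eps_2)/2] \le \exp\big(-\Omega((\eps_2-\eps_1)^2 q_Z)\big) \le \gamma \le c$; symmetrically, if $d_1(t,\Z^n) \ge \eps_2 n$, then $\mu \ge \eps_2$ and $\Pr[\widehat\mu \le (\eps_1+\eps_2)/2] \le \exp\big(-\Omega((\eps_2-\eps_1)^2 q_Z)\big) \le \gamma \le s$. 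Choosing the constant in $q_Z$ large enough makes both error probabilities at most $\gamma = \min\{c,s\}$, giving completeness error $c$ and soundness error $s$ as required.

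The only slightly delicate points — not really obstacles — are (i) noting that the relevant normalization is $\|1^n\|_1 = n$, so the distance thresholds $\eps_1\|1^n\|_1$ and $\eps_2\|1^n\|_1$ translate directly into thresholds $\eps_1$ and $\eps_2$ on the mean of the $X_i$'s; and (ii) that each $X_i$ lies in $[0,1/2]$, so Hoeffding's inequality for a sum of $q_Z$ independent variables each of range $1/2$ yields a tail bound of the form $2\exp(-2 q_Z (\eps_2-\eps_1)^2 / (1/2)^2) = 2\exp(-8 q_Z(\eps_2-\eps_1)^2)$, which is at most $\gamma$ once $q_Z \ge \tfrac{1}{8(\eps_2-\eps_1)^2}\ln\tfrac{2}{\gamma}$. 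The tester is manifestly non-adaptive; its query positions are uniformly distributed, which is the property needed when it is later plugged into the tolerant tester for code formula lattices. I expect no real obstacle here: the statement is essentially a standard concentration-of-sample-mean argument, and the main care is just bookkeeping the constants and the normalization.
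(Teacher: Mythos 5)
Your proposal is correct and matches the paper's proof essentially verbatim: both testers sample $O\bigl(\tfrac{1}{(\eps_2-\eps_1)^2}\log\tfrac1\gamma\bigr)$ uniformly random coordinates, form the empirical average of the per-coordinate distance to the nearest integer (your $\Frac(t_i)$ is the paper's $|t_i-\round{t_i}|$), threshold at $(\eps_1+\eps_2)/2$, and invoke a Chernoff/Hoeffding bound for each direction against $\gamma=\min\{c,s\}$. The only nit is a constant slip in your Hoeffding exponent (the deviation to the midpoint is $(\eps_2-\eps_1)/2$, not $\eps_2-\eps_1$, giving $\exp(-2q_Z(\eps_2-\eps_1)^2)$ rather than $\exp(-8q_Z(\eps_2-\eps_1)^2)$), which does not affect the $O(\cdot)$ bound.
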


\begin{proof}
The tester estimates the distance of the input from $\Z^n$ by querying $O(1/(\eps_2-\eps_1)^2 \log(\frac1\gamma))$ coordinates uniformly at random. If the estimated distance is at least $\frac{(\eps_1+\eps_2)}{2} n$, then it rejects, otherwise it accepts. The correctness and soundness follow from Chernoff bounds. 
We describe the test formally as follows:
\begin{enumerate}
\item Query $q := C/(\eps_2-\eps_1)^2 \cdot \log(\frac1\gamma)$ coordinates of the input $t$ uniformly at random, for some constant $C$ to be determined later. Let $I \subseteq [n]$ be the indices of the queried coordinates.
\item Let $\delta := \frac{ \sum_{i \in I} \lvert t_i - \round{t_i} \rvert }{q}$.
\item If $\delta \leq \frac{\eps_1+\eps_2}{2}$ then Accept.
\item Else Reject.
\end{enumerate}

Suppose $d(t,\Z^n)\le \eps_1 n$, then  $d(t, \Z^n) / n = \sum_{i=1}^n \lvert t_i - \round{t_i} \rvert / n \leq \eps_1$. Therefore, $\mathbb{E}[\delta]\le \eps_1$. By a Chernoff bound, it follows that 
$\Pr[ \delta -\eps_1 > \frac{\eps_2- \eps_1}{2} ] \leq e^{- q (\eps_2-\eps_1)^2/2}\le c$ for $q \geq C/(\eps_2-\eps_1)^2 \cdot \log(\frac1\gamma)$ and a constant $C > 0$.

Now suppose $d(t,\Z^n)>\eps_2 n$. Then, $d(t, \Z^n) / n = \sum_{i=1}^n | t_i - \round{t_i} | / n \geq \eps_2$. Again, by a Chernoff bound, and suitable choice of the constant $C$, it follows that $\Pr[ \eps_2-\delta \ge  \frac{\eps_2-\eps_1}{2} ]\leq e^{- q (\eps_2-\eps_1)^2/4}\le s$ for $q$ chosen as above.  
\end{proof}

We now describe a tolerant tester for code formula lattices. 

\thmTolerantTester*
\begin{proof}
We use the tolerant testers $T_i$ for the codes $C_i$ and the tolerant tester $T_Z$ for $\mathbb{Z}^n$ to construct a tolerant tester for $L$. 

Let $\round{t}$ denote the vector obtained by rounding each coordinate of $t$ to its nearest integer and for any vector $x$, let $x(j)$ denote the $j^{th}$ coordinate of $x$. Let $t_0, . . . , t_{m-1} \in  \{0, 1\}^n$ where $t_i(j)$ is the $(i + 1)^{th}$ least significant bit in the binary decomposition of the $j^{th}$ coordinate of $\round{t}$. Define $t_m = \frac{1}{2^m} (t - \sum_{i=0}^{m-1}2^i t_i) \in \mathbb{R}^n$. 
Therefore, $t$ can be written as $t = t_0 + 2t_1 + \cdots + 2^{m-1} t_{m-1} + 2^m t_m$, where $t_i \in \{0, 1\}^n$ for all $i \in [m-1]$ and $t_m \in \mathbb{R}^n$. Moreover, $t_m \in \mathbb{Z}^n$ if and only if $t \in \mathbb{Z}^n$. 

The tolerant tester $T(\eps_1, \eps_2, c, s, q)$ on input $t \in \mathbb{R}^n$  now proceeds as follows: Run $T_Z(\eps_1,\frac{ \eps_2}{ 2}, \frac{c}{m+1}, s, q_Z)$ on $t$ and $T_i(2 \eps_1,\frac{ \eps_2 }{ m2^{i+1}}, \frac{c}{m+1}, s, q_i)$ on $t_i$ for all $i\in \{0,1,\ldots,m-1\}$. Accept if and only if all tests accept. 
The query complexity of $T(\eps_1, \eps_2, c, s, q) $ is therefore:
\[ 
q(\eps_1, \eps_2, c, s) = \sum_{i=0}^{m-1} q_i + q_Z,
\]
where we recall that 
$q_Z$ is the query complexity of $T_Z(\eps_1, \frac{\eps_2}{2}, \frac{c}{m+1}, s, q_Z)$. From Lemma~\ref{lemma:tolerant-tester-for-Z}, we know that 
$q_Z = O(\frac1{(\eps_2- 2\eps_1)^2 }\log(\frac{1}{\gamma}))$, 
where $\gamma = \min\{ \frac{c}{m+1}, s\}$.
We now analyze the soundness and completeness of this test.

\subsubsection*{Soundness}
Suppose $d(t,L)\geq \eps_2 n$. We first show that either $t$ is far from $\Z^n$ or the closest integer vector to $t$ is far from the lattice.

\begin{claim} \label{claim:tineq1}
$d(t, L) \leq d(\round{t}, L) + d(t, \mathbb{Z}^n)$
\end{claim}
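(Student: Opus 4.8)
The statement to prove is Claim~\ref{claim:tineq1}: $d(t, L) \leq d(\round{t}, L) + d(t, \mathbb{Z}^n)$.

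\textbf{Approach.} The plan is to prove this directly via the triangle inequality for the $\ell_1$ norm, using $\round{t}$ as an intermediate point. First I would observe that for any lattice point $w \in L$, the triangle inequality gives $d(t, w) = \|t - w\|_1 \leq \|t - \round{t}\|_1 + \|\round{t} - w\|_1$. Taking the minimum over all $w \in L$ on the right-hand side (which only makes the bound tighter), and noting that $\|t - \round{t}\|_1$ does not depend on $w$, yields $d(t, w) \leq \|t - \round{t}\|_1 + d(\round{t}, L)$ for every $w \in L$. Then taking the minimum over $w \in L$ on the left gives $d(t, L) \leq \|t - \round{t}\|_1 + d(\round{t}, L)$.

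\textbf{Key remaining step.} It then remains only to identify $\|t - \round{t}\|_1$ with $d(t, \mathbb{Z}^n)$. This is because $\round{t}$ is, by definition, the coordinate-wise nearest integer vector to $t$, so $\round{t} \in \mathbb{Z}^n$ is a closest integer point to $t$ under the $\ell_1$ norm (the $\ell_1$ distance decomposes coordinate-wise, and each coordinate is independently minimized by rounding to the nearest integer). Hence $d(t, \mathbb{Z}^n) = \|t - \round{t}\|_1$, and substituting this in completes the proof. Concretely, I would work with a point $w^* \in L$ achieving $d(\round{t}, L)$ (which exists since $L$ is a discrete closed set and distances to it are attained), and write out the single chain of inequalities.

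\textbf{Main obstacle.} There is essentially no obstacle here — this is a routine triangle-inequality argument, and the only thing to be careful about is correctly recognizing that rounding each coordinate to the nearest integer is exactly what minimizes $\ell_1$ distance to $\mathbb{Z}^n$, which is immediate from the coordinate-wise separability of the $\ell_1$ norm. The lemma is stated precisely so that it can be fed into the soundness analysis: combined with the assumption $d(t, L) \geq \eps_2 n$, it forces either $d(t, \mathbb{Z}^n) \geq \eps_2 n / 2$ (caught by the $T_Z$ subtest) or $d(\round{t}, L) \geq \eps_2 n / 2$ (which, via Claim~\ref{claim:triangle-inequality-decomposition} applied to $\round{t}$, forces some $t_i$ to be far from $C_i$ and is caught by the subtest $T_i$).
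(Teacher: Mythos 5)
Your proof is correct and is essentially identical to the paper's: both use the triangle inequality through $\round{t}$ (the paper instantiates $w$ directly as the closest lattice vector to $\round{t}$, which is equivalent to your minimization over $w\in L$), and both then identify $\|t-\round{t}\|_1$ with $d(t,\Z^n)$.
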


\begin{proof}
Let $u$ be the closest lattice vector to $\round{t}$. Then 
$$d(t, L ) \leq \| t - u \|_1 = \| ( t-\round{t} ) + (\round{t} - u) \|_1\leq  \| t-\round{t} \|_1+\|\round{t} - u \|_1$$
Since $\| t - \round{t} \|_1 = d(t, \mathbb{Z}^n)$, it follows that 
$ d(t, L) \leq d(\round{t}, L) + d(t, \mathbb{Z}^n)$. 
\end{proof}

Therefore, if $d(t, L) \geq \eps_2 n$, then from Claim~\ref{claim:tineq1}, either $d(\round{t}, L) \geq \eps_2 n/ 2 $ or $d(t, \mathbb{Z}^n) \geq \eps_2 n / 2$. If $d(t, \mathbb{Z}^n) \geq \eps_2n / 2$, then $T_Z$ rejects $t$ with probability at least $1 - s$. If $d(\round{t}, L) \geq \eps_2 n/ 2$, then from Claim~\ref{claim:triangle-inequality-decomposition} proved in Section~\ref{sec:testingconstrD} we can conclude that there exists some $i \in \{0,1,\ldots,m-1\}$ such that $2^i d(t_i, C_i) \geq \eps_2 n/ 2m$, and $T_i(t_i, 2\eps_1, \eps_2/m2^{i+1})$ will reject $t_i$ with probability at least $1- s$. Thus, if $d(t, L) \geq \eps_2n$, then $T$ rejects $t$ with probability with at least $1 - s$. 
  

\subsubsection*{Completeness}
Suppose $d(t,L)\le \eps_1 n$. Then $d(t,\Z^n)\le \eps_1 n$, since $L \subseteq \mathbb{Z}^n$.
So, $T_Z(\eps_1, \frac{\eps_2}{2}, \frac{c}{m+1}, s, q_Z)$ will accept $t$ with probability at least $1- \frac{c}{m+1}$. We now show that each $t_i$ is also close to the corresponding linear code $C_i$.  

For the sake of contradiction, suppose $d(t_i,C_i) > 2\eps_1 n$ for some $i\in [m-1]$.  
We will show that 
$d(t, L) > \eps_1n$. We do this in two steps.  First we show in Lemma~\ref{claim:far-from-code} that $d(\round{t}, L) >  2\eps_1n$. Then by Claim~\ref{claim:tineq2} and the fact that $d(t,\Z^n) \le \eps_1 n$, we have that 
$d(t, L) > \eps_1n$, a contradiction.

\begin{lemma} \label{claim:far-from-code} 
If $d(t_i, C_i) > 2\eps_1n$ for some $i \in [m-1]$, then $d(\round{t}, L) > 2\eps_1n$. 
\end{lemma}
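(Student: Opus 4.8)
The plan is to prove the contrapositive: assuming $d(\round{t}, L) \le 2\eps_1 n$, I will show $d(t_i, C_i) \le 2\eps_1 n$ for every $i \in [m-1]$. The key observation is that $\round{t}$ is an integer vector, so if $v$ is the closest lattice point to $\round{t}$, then $\round{t} - v$ is an integer vector whose $\ell_1$ distance from $0$ is at most $2\eps_1 n$. I would then try to relate the bits of $\round{t}$ in its binary decomposition to the bits of a nearby lattice vector. Concretely, by the definition of the code formula construction, $v = \sum_{j=0}^{m-1} 2^j c_j + 2^m z$ for some $c_j \in C_j$ and $z \in \Z^n$; the goal is to argue that in most coordinates (all but $2\eps_1 n$ of them), the $(i+1)$-st least significant bit of $\round{t}$ agrees with $(c_i)$ (possibly after adjusting $\round{t}$ by the lattice point $v$, which does not change distances), so that $t_i$ agrees with a codeword of $C_i$ on all but $2\eps_1 n$ coordinates.

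The first step I would carry out is to reduce to the case where $\round{t}$ itself has all coordinates in $\{0,1,\ldots,2^m-1\}$, exactly as in the proof of Theorem~\ref{thm:constrD:test}: subtracting an appropriate multiple of $2^m e_j$ from each coordinate is subtracting a lattice vector $2^m e_j \in 2^m\Z^n \subseteq L$, which changes neither $d(\round{t}, L)$ nor the bit vectors $t_i$ for $i \le m-1$ (it only affects $t_m$). So without loss of generality every coordinate of $\round{t}$ lies in $\{0,\ldots,2^m-1\}$ and $\round{t} = \sum_{i=0}^{m-1} 2^i t_i$ with each $t_i \in \{0,1\}^n$. Next, let $v$ be the closest lattice vector; on the set $S$ of coordinates where $\round{t}$ and $v$ agree, I claim the $(i+1)$-st bits agree as well. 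This is not literally forced coordinate-by-coordinate unless one is careful, so instead I would work the other way: let $v = \sum_{j=0}^{m-1} 2^j c_j + 2^m z$ and apply Claim~\ref{claim:all-coordinates-are-large} style reasoning. Specifically, for each $i \le m-1$, consider the $i$-th bit vector $c_i$ of $v$; I want to show $d_1(t_i, c_i) \le d_1(\round{t}, v)$. Arguing in each coordinate $\ell$ where $t_i(\ell) \ne c_i(\ell)$: I would like to conclude $|\round{t}(\ell) - v(\ell)| \ge 1$, which follows from a binary-carry argument analogous to Claim~\ref{claim:all-coordinates-are-large}, since a discrepancy in the $(i+1)$-st bit between two integers forces their difference to be nonzero (hence $\ge 1$ in absolute value as both are integers). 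Summing over such $\ell$ gives $d_1(t_i, c_i) \le \sum_\ell |\round{t}(\ell) - v(\ell)| = d_1(\round{t}, v) = d(\round{t}, L) \le 2\eps_1 n$, so $d(t_i, C_i) \le 2\eps_1 n$, completing the contrapositive.

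The main obstacle I anticipate is the carry argument: a difference in a single bit position between two arbitrary integers does \emph{not} by itself guarantee the integers differ, because carries from higher bit positions could in principle cancel — but since we have reduced to $\round{t}(\ell) \in \{0,\ldots,2^m-1\}$ and $c_i \in \{0,1\}^n$, and $v(\ell) = \sum_{j} 2^j c_j(\ell) + 2^m z(\ell)$, the point is that if $t_i(\ell) \ne c_i(\ell)$ then $\round{t}(\ell) \bmod 2^{i+1}$ and $v(\ell) \bmod 2^{i+1}$ have different $(i+1)$-st bits, hence $\round{t}(\ell) \not\equiv v(\ell) \pmod{2^{i+1}}$, hence $\round{t}(\ell) \ne v(\ell)$, hence $|\round{t}(\ell) - v(\ell)| \ge 1$ since both are integers. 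This sidesteps the cancellation issue entirely by arguing modulo $2^{i+1}$ rather than tracking exact values. The only subtlety is making sure the reduction to coordinates in $\{0,\ldots,2^m-1\}$ is compatible with how $t_i$ and $\round{t}$ were originally defined, which it is, since shifting by multiples of $2^m$ changes only $t_m$. I would also remark that this lemma is essentially a finite-precision restatement of Lemma~\ref{lemma:distance-to-lattice}, and one could alternatively cite that lemma after the reduction, but giving the self-contained carry argument keeps the proof clean.
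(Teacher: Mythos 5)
Your proof is correct and takes essentially the same approach as the paper: both rest on the observation that a disagreement in the $i$-th bit between $\round{t}(\ell)$ and a lattice vector $v(\ell)=\sum_j 2^j c_j(\ell)+2^m z(\ell)$ forces a coordinate-wise integer difference of magnitude at least $1$, established by a modular-arithmetic argument. The paper phrases this as a direct implication via an auxiliary claim about $\{-1,0,1\}^n$-valued vectors (working modulo $2^{h+1}$ at the lowest nonzero bit position $h$), whereas you argue the contrapositive and reduce modulo $2^{i+1}$ at the bit position $i$ in question and make the reduction to $\round{t}(\ell)\in\{0,\dots,2^m-1\}$ explicit --- these are presentational variants of the same idea.
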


\begin{proof}
Let $v = \sum_{i=0}^{m-1} 2^i v_i + 2^m v_m \in L$ be the closest lattice vector to $\round{t}$. By definition of the lattice, each $v_i \in C_i$ for $i \in [m-1]$ and $v_m \in \mathbb{Z}^n$. Consider the vectors $t_0,t_1,\ldots, t_m$ as defined above (for which $\round{t} = \sum_{i=0}^{m-1} 2^i t_i + 2^m t_m$). So, each $t_i \in \{0, 1\}^n$ and $t_m \in \mathbb{Z}^n$. The following property of vectors with bounded entries will be used to prove the claim. 
\begin{claim}\label{claim:property-of-binary-vectors}
Let $a_0, a_1, \ldots, a_{m-1} \in \{-1, 0, +1\}^n$ and $a_m \in \mathbb{Z}$. Define $u = a_0 + 2a_1 + \cdots + 2^m a_m$.  If there exists some $k \in [m-1]$ such that $\| a_k \|_1 >  s$, then $\|u\|_1 > s$.  
\end{claim}
\begin{proof}
Since $\|a_k\|_1 > s$, and $a_k \in \{-1, 0, +1\}^n$, there exist at least $s$ coordinates such that $ \lvert a_k(i) \rvert = 1$. Let $S$ be the set of those indices, $S = \{ i \in [n]\colon \lvert a_k(i) \rvert = 1\}$. Since $\|a_k\|_1 > s$, we know that $|S| > s$. We now show that for all $i \in S, \lvert u(i) \rvert \geq 1$. Therefore, $\| u \|_1 > s$.  

Let $i \in S$. For each such coordinate, we can express $u(i) = \sum_{j = 0}^m 2^j a_j(i)$. Let $h \in [k]$ be the smallest integer such that $a_h(i) \neq 0$. We know that such $h$ exists since $a_k(i) \neq 0$. Therefore, we know that $u(i) \mmod 2^{h+1} (= a_h)$, is non-zero. Therefore, $u(i)$ is also non-zero. Since $u(i) \in \mathbb{Z}$, we have that $| u(i) | \geq 1$. 

Therefore, $| u(i) | \geq 1$ for all $i \in S$. and $\|u\|_1 \geq |S| >  s$.
\end{proof} 

Define $a_i = (t_i - v_i)$ for all $i \in [m]$. We note that each $a_i \in \{ -1, 0, +1\}^n$ for $i \in [m-1]$ and that $a_m \in \mathbb{Z}^n$. The proof now follows from Claim~\ref{claim:property-of-binary-vectors} for $s = 2\eps_1 n$. 
\end{proof}

The next claim is a straightforward application of the triangle inequality. 
\begin{claim}\label{claim:tineq2}
$d(t, L) \geq d(\round{t}, L) - d(t, \mathbb{Z}^n)$
\end{claim}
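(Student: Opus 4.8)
The plan is to prove Claim~\ref{claim:tineq2} as a direct application of the triangle inequality, in the same spirit as Claim~\ref{claim:tineq1}. First I would let $u \in L$ be the closest lattice vector to $t$, so that $d(t,L) = \|t - u\|_1$. Writing $\round{t} - u = (\round{t} - t) + (t - u)$ and applying the triangle inequality gives $\|\round{t} - u\|_1 \le \|\round{t} - t\|_1 + \|t - u\|_1$. Since $u \in L$ we have $d(\round{t}, L) \le \|\round{t} - u\|_1$, and since $\round{t}$ is obtained by rounding each coordinate of $t$ to the nearest integer, $\|\round{t} - t\|_1 = d(t, \mathbb{Z}^n)$. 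Combining these yields $d(\round{t}, L) \le d(t, \mathbb{Z}^n) + d(t, L)$, which rearranges to the claimed inequality $d(t, L) \ge d(\round{t}, L) - d(t, \mathbb{Z}^n)$.

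Once the claim is established, I would close out the completeness argument of Theorem~\ref{theorem:tolerant-tester-code-formula} as follows. Assuming for contradiction that $d(t_i, C_i) > 2\eps_1 n$ for some $i \in [m-1]$, Lemma~\ref{claim:far-from-code} gives $d(\round{t}, L) > 2\eps_1 n$. Plugging this together with the completeness hypothesis $d(t, \mathbb{Z}^n) \le d(t,L) \le \eps_1 n$ into Claim~\ref{claim:tineq2}, we get $d(t, L) \ge d(\round{t}, L) - d(t, \mathbb{Z}^n) > 2\eps_1 n - \eps_1 n = \eps_1 n$, contradicting $d(t,L) \le \eps_1 n$. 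Hence every $t_i$ satisfies $d(t_i, C_i) \le 2\eps_1 n$, so each tester $T_i(2\eps_1, \tfrac{\eps_2}{m2^{i+1}}, \tfrac{c}{m+1}, s, q_i)$ accepts $t_i$ with probability at least $1 - \tfrac{c}{m+1}$. A union bound over the $m$ code testers and the single integrality tester $T_Z$ shows that $T$ accepts with probability at least $1 - (m+1)\cdot\tfrac{c}{m+1} = 1-c$, completing the proof.

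There is essentially no obstacle here: Claim~\ref{claim:tineq2} is a one-line triangle-inequality manipulation, completely symmetric to Claim~\ref{claim:tineq1}, and all the real work (the distance decomposition in Claim~\ref{claim:triangle-inequality-decomposition}, the ``bounded entries'' Lemma~\ref{claim:far-from-code} via Claim~\ref{claim:property-of-binary-vectors}, and the tolerant tester for $\mathbb{Z}^n$) has already been carried out. The only point requiring minor care is ensuring the parameters line up: one must check that the distance thresholds passed to $T_Z$ and the $T_i$ are consistent with the hypothesis $\eps_2 > m 2^{m+1}\eps_1$ (so that, e.g., in the soundness case $\eps_2 n / (m 2^{i+1}) > 2\eps_1 n$, making the soundness guarantees of the $T_i$ applicable), and that $\gamma = \min\{c/(m+1), s\}$ is exactly the error parameter appearing in the query-complexity bound $q_Z = O\!\big(\tfrac{1}{(\eps_2 - 2\eps_1)^2}\log\tfrac{1}{\gamma}\big)$ from Lemma~\ref{lemma:tolerant-tester-for-Z}. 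These bookkeeping steps are routine.
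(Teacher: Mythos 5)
Your proof of Claim~\ref{claim:tineq2} is correct and matches the paper's proof exactly: both take $u$ to be the closest lattice vector to $t$, decompose $\round{t}-u = (\round{t}-t)+(t-u)$, and apply the triangle inequality together with $\|\round{t}-t\|_1 = d(t,\mathbb{Z}^n)$. The additional discussion of how the claim slots into the completeness argument of Theorem~\ref{theorem:tolerant-tester-code-formula} is also consistent with the paper.
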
 

\begin{proof}
Let $u$ be the closest lattice vector to $t$. 
$$d(\round{t}, L) \leq \| \round{t} - u \|_1 = \| \round{t} - t + t - u \|_1.$$
By the triangle inequality, we have $\| \round{t} - t + t - u \|_1 \leq \| \round{t} - t \|_1 + \| t - u\|_1$.  Since $u$ is  the closest lattice vector to $t$, $d(t, L) = \| t - u \|_1$. Also,  $\| \round{t} - t \| = d(t, \mathbb{Z}^n)$, Therefore, $d(\round{t}, L) \leq d(t, L) + d(t, \mathbb{Z}^n)$.
\end{proof}

Thus, if $d(t,L)\le \eps_1 n$, then 
$T_Z$ accepts $t$ with probability at least $1-\frac{c}{m+1}$ and each code tester $T_i$ accepts $t_i$  with probability at least $1-\frac{c}{m+1}$.
Therefore, from the union bound, $T$ accepts $t$ with probability at least $1- \sum_{i=0}^m \frac{c}{m+1} = 1-c$. 
\end{proof}

We next instantiate Theorem \ref{theorem:tolerant-tester-code-formula} for code-formula lattices obtained from Reed-Muller codes. We first recall a simple observation made in \cite{PRR06} that any local test with individual queries uniformly distributed is also a tolerant test.

\begin{claim}[\cite{PRR06}]\label{claim:prr}
If a code $C\subseteq \{0,1\}^n$ has a one-sided local test $T(\eps, 0, 1/3, q)$ whose queries are each uniformly distributed, then $C$ has a tolerant test $T(\eps_1, \eps_2, 1/3, 1/3, q)$, with $\eps_1\leq \frac{1}{3q}$ and $\eps_2\geq \eps$.
\end{claim}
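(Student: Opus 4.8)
The plan is to prove Claim~\ref{claim:prr} by reduction: given the one-sided uniform-query tester $T(\eps, 0, 1/3, q)$, I would use the \emph{same} tester as the claimed tolerant test $T(\eps_1, \eps_2, 1/3, 1/3, q)$, and argue that the completeness and soundness requirements of the tolerant definition are met for $\eps_1 \le \tfrac{1}{3q}$ and $\eps_2 \ge \eps$. Soundness is immediate: if $d_H(t, C) \ge \eps_2 n \ge \eps n$, then $t$ is $\eps$-far from $C$ in the original sense, so $T$ rejects with probability at least $1 - 1/3 = 2/3$, exactly the tolerant soundness requirement with $s = 1/3$. The work is entirely in the completeness (tolerance) direction.

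For completeness, suppose $d_H(t, C) \le \eps_1 n \le \tfrac{n}{3q}$. Fix $c \in C$ with $d_H(t,c) \le \eps_1 n$ and let $E = \{i : t_i \ne c_i\}$, so $|E| \le \eps_1 n$. The key step is that since $T$ is one-sided, it accepts $c$ with probability $1$; I want to show $T$ accepts $t$ with probability at least $2/3$. The natural argument is: $T$ makes $q$ queries, each individual query being uniformly distributed over $[n]$ (this is the hypothesis we must exploit). By a union bound over the $q$ queries, the probability that \emph{some} query lands in $E$ is at most $q \cdot \tfrac{|E|}{n} \le q \cdot \eps_1 \le q \cdot \tfrac{1}{3q} = \tfrac13$. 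Conditioned on the complementary event --- no query touches a coordinate in $E$ --- the local view of $t$ is identical to the local view of $c$, and since $T$ accepts $c$ with probability $1$ (for every fixing of its internal randomness, as it is one-sided), $T$ accepts $t$ on that event. Hence $T$ accepts $t$ with probability at least $2/3$, which is the tolerant completeness requirement with $c = 1/3$.

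The one subtlety --- and the main thing to get right --- is that the union bound over queries requires that \emph{each} query is marginally uniform, which is precisely what ``whose queries are each uniformly distributed'' grants us; note we do \emph{not} need the queries to be jointly independent or jointly uniform, only marginally uniform, so the union bound suffices. A second point worth spelling out carefully: for the conditioning argument to be valid for an adaptive tester one would need a little more care, but the testers we apply this to (the Reed--Muller testers of Theorem~\ref{theorem:RM-codes-upper-bound}, which make each query uniformly distributed) can be taken to be non-adaptive, or one argues per-random-string: for each fixing $\rho$ of the tester's coins, the resulting query set $J_\rho$ and accept/reject decision is deterministic, $T$ accepts $c$ for every $\rho$, and $\Pr_\rho[J_\rho \cap E \ne \emptyset] \le q \eps_1 \le 1/3$ by the marginal-uniformity of each of the $q$ queries together with a union bound; on the event $J_\rho \cap E = \emptyset$ the restricted inputs $t|_{J_\rho}$ and $c|_{J_\rho}$ coincide so the decision is ``accept''. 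I expect the main obstacle to be merely stating this cleanly rather than any real mathematical difficulty; the whole claim is a short observation, as its attribution to \cite{PRR06} suggests.
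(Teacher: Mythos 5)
The paper gives no proof of Claim~\ref{claim:prr} --- it is stated as a citation to \cite{PRR06}, where the observation appears with essentially the union-bound argument you give. Your proof is correct: the two load-bearing points --- that one-sidedness gives acceptance probability exactly $1$ on the nearby codeword (so a $q\eps_1 \le 1/3$ loss from the union bound still leaves $2/3$), and that only \emph{marginal} uniformity of each query is required --- are both exactly right, and you appropriately flag that the coupling-to-$c$ step is cleanest for non-adaptive testers (which is fine, since the Reed--Muller testers invoked here are non-adaptive).
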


Using Claim \ref{claim:prr} and Theorem \ref{theorem:RM-codes-upper-bound}, and by appropriately amplifying the success probability, we get a tolerant test for Reed-Muller codes. 
\begin{corollary}\label{coro:RMtolerant}
For any $k,r, c, s>0$ and $\gamma = \min\{c, s\}$,  there exists a tolerant test $T(\eps_1, \eps_2, c, s, q)$ for  $RM(k,r)$ such that $\eps_1\leq c_1 \frac{1}{2^k}$, $\eps_2\geq c_2 \frac{1}{2^k}$ and $q=O(2^k \log(\frac1\gamma))$, for some $c_1, c_2>0$.
\end{corollary}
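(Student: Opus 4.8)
The plan is to obtain Corollary~\ref{coro:RMtolerant} by combining the non-tolerant one-sided tester for Reed--Muller codes from Theorem~\ref{theorem:RM-codes-upper-bound} with the generic observation of Parnas \etal\ (Claim~\ref{claim:prr}) that turns a one-sided local test with uniformly distributed queries into a tolerant test, and finally applying a standard probability-amplification step to reach arbitrary confidence parameters $c,s$.

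First I would invoke Theorem~\ref{theorem:RM-codes-upper-bound}: for a suitable absolute distance parameter $\eps$, the code $RM(k,r)$ has a one-sided tester $T(\eps, 0, 1/3, q_0)$ with $q_0 = O((\log 3)(2^k + 1/\eps)) = O(2^k + 1/\eps)$, and crucially each of its queries is individually uniformly distributed. I would fix $\eps$ to be a small absolute constant (e.g.\ the largest $\eps$ for which Theorem~\ref{theorem:RM-codes-upper-bound}'s guarantee holds with a $2^k + 1/\eps$ query bound, so that $1/\eps = O(1)$ and hence $q_0 = O(2^k)$). Applying Claim~\ref{claim:prr} to this test immediately yields a tolerant test $T(\eps_1, \eps_2, 1/3, 1/3, q_0)$ with $\eps_1 \le 1/(3 q_0)$ and $\eps_2 \ge \eps$. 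Since $q_0 = O(2^k)$, we get $\eps_1 = \Omega(1/2^k)$ from below in the sense that the guarantee $\eps_1 \le 1/(3q_0)$ can be met with $\eps_1 = c_1/2^k$ for a suitable constant $c_1$; and since $\eps$ is an absolute constant while we may also take $\eps_2$ as small as $c_2/2^k$ (any value at least $\eps$ works, and $c_2/2^k \le \eps$ eventually), both of the claimed bounds $\eps_1 \le c_1/2^k$ and $\eps_2 \ge c_2/2^k$ are consistent with the one asserted window $\eps_1 < \eps_2$.

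Next I would amplify the confidence from $(1/3, 1/3)$ to arbitrary $(c,s)$. Run the tolerant test $T(\eps_1,\eps_2,1/3,1/3,q_0)$ independently $O(\log(1/\gamma))$ times, where $\gamma = \min\{c,s\}$, and accept iff the majority of runs accept. A Chernoff bound shows that $O(\log(1/\gamma))$ repetitions suffice to drive both the completeness error below $c$ and the soundness error below $s$ simultaneously. The resulting query complexity is $q = O(q_0 \log(1/\gamma)) = O(2^k \log(1/\gamma))$, as claimed, and the tolerance thresholds $\eps_1, \eps_2$ are unchanged by repetition.

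The only mild subtlety — and the step I would be most careful about — is bookkeeping the constants so that the final statement really reads $\eps_1 \le c_1/2^k$ and $\eps_2 \ge c_2/2^k$ with $c_1 < c_2$ (or at least with the tolerant window nonempty): we need $1/(3q_0) \ge c_1/2^k$, i.e.\ $q_0 \le 2^k/(3c_1)$, which holds for small enough $c_1$ since $q_0 = O(2^k)$; and we need the fixed absolute constant $\eps = \eps_2$ to satisfy $\eps \ge c_2/2^k$, which holds for all $k$ once $c_2$ is chosen at most $\eps$. There is no real obstacle here — Claim~\ref{claim:prr} and Theorem~\ref{theorem:RM-codes-upper-bound} do essentially all the work — so the proof is short; the main care is just in tracking that the uniform-query hypothesis of Claim~\ref{claim:prr} is indeed supplied by Theorem~\ref{theorem:RM-codes-upper-bound} and in the constant-chasing for the two tolerance thresholds.
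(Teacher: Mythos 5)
Your overall scheme is the same as the paper's: combine Theorem~\ref{theorem:RM-codes-upper-bound} with Claim~\ref{claim:prr} and then amplify by majority vote. But there is a genuine gap in your choice of the distance parameter $\eps$ used when invoking Theorem~\ref{theorem:RM-codes-upper-bound}. You fix $\eps$ to be a small \emph{absolute constant} (so $q_0 = O(2^k)$) and then note that $\eps_2 \ge \eps$ will trivially satisfy $\eps_2 \ge c_2/2^k$. The problem is that Claim~\ref{claim:prr} only certifies the resulting tolerant test for $\eps_2 \ge \eps$; inputs at distance in the range $[\eps_1 n, \eps n)$ live in the uncertified zone. So your construction only yields a tolerant tester for $\eps_2$ at least an absolute constant. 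But the corollary (and crucially its downstream use in Corollary~\ref{coro:CoroTolerantRM}, where one needs the constituent testers $T_i$ to work at $\eps_2' = \eps_2/(m 2^{i+1})$, which scales like $1/2^{k_i}$) requires a tolerant tester for all $\eps_2$ down to $c_2/2^k$. Your argument ``$c_2/2^k \le \eps$ eventually'' confuses the direction of the inequality: it shows $\eps_2 = \eps$ satisfies the bound $\eps_2 \ge c_2/2^k$, not that the test is sound when $\eps_2$ is actually taken as small as $c_2/2^k$.

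The fix is what the paper does: apply Theorem~\ref{theorem:RM-codes-upper-bound} with $\eps = \Theta(1/2^k)$, not an absolute constant. Then $q_0 = O\bigl(2^k + 1/\eps\bigr) = O(2^k)$ still holds, and Claim~\ref{claim:prr} now yields a tolerant tester valid for all $\eps_2 \ge \eps = c_2/2^k$, simultaneously with $\eps_1 \le 1/(3q_0) = c_1/2^k$. The amplification step you describe is fine as written. So the route is the same, but the constant-chasing you flagged at the end as ``no real obstacle'' is precisely where the argument needs to differ.
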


\begin{proof}
By Theorem~\ref{theorem:RM-codes-upper-bound} we know that there is a 1-sided tester $T(\eps, 0, 1/3, q)$ for $RM(k,r)$ and $\eps = O(1/2^k)$ with query complexity $q = O(2^k)$ . From Claim~\ref{claim:prr}, we know that we can obtain a tolerant tester $T(\eps_1, \eps_2, 1/3, 1/3, q)$ with $O(2^k)$ queries for any $\eps_1 \leq c_1/2^k$ and $\eps_2 \geq c_2/2^k$. 
By independently repeating the tester multiple times and taking majority vote to amplify the success probability, for any $0<c,s\leq1$ and $\gamma = \min\{c, s\}$ we get a tolerant tester $T(\eps_1, \eps_2, c, s, q)$ for $RM(n,k)$ with $q=O(2^k \log(\frac1\gamma))$ queries. 
\end{proof}

Using Corollary \ref{coro:RMtolerant} and Theorem \ref{theorem:tolerant-tester-code-formula} we obtain the following immediate corollary. 

\CoroTolerantRM*
\begin{proof}
From Corollary \ref{coro:RMtolerant}, every $RM_{k_i}$ has a tolerant tester $T_i(2\eps_1, \frac{\eps_2}{m2^{i+1}}, \frac{1}{3(m+1)},  \frac{1}{3}, q_i)$ with query complexity $q_i = O(2^{k_i}  \log(m+1))$ for $2\eps_1 \leq \frac{c_1}{2^{k_i}}$ and $\frac{\eps_2}{m2^{i+1}} \geq \frac{c_2}{2^{k_i}}$ for some constants $c_1, c_2 > 0$. 

Using Theorem  \ref{theorem:tolerant-tester-code-formula}, we therefore conclude that $L$ has a tolerant tester $T(\eps_1, \eps_2, \frac13, \frac13, q)$ with query complexity 
$q = O( \frac{1}{(\eps_2 - 2\eps_1)^2}  \log(m+1) )  + \sum_{i=0}^{m-1}O( 2^{k_i}  \log(m+1) ) = O(2^{k_{m-1}} \log m )$ 
for $2\eps_1 \leq \min_i \{ \frac{c_1}{2^{k_i}} \}$ and $\frac{\eps_2}{m2^{i+1}} \geq \max_i \{ \frac{c_2}{2^{k_i}} \}$.
\end{proof}
\section{Reducing an arbitrary test to a non-adaptive linear test}
\label{sec:reduction}

In this section we sketch the proof of Theorem~\ref{theorem:2-sided-to-1-sided-non-adaptive-linear}. Throughout this section, we focus on full-rank integral lattices. 
Given a $2$-sided adaptive $\ell_p$-tester $T(\eps,c,s,q)$, with $q=q_T(\eps,c,s)$ for an integral lattice $L$, we construct a non-adaptive linear $\ell_p$-tester $T'(\eps,0,c+s,q)$ with query complexity $q'=q_T(\eps/2,c,s)+O((1/\eps^p)\log{(1/s)})$. 
We reduce the inputs to a bounded set using the following property of integral lattices.

 \begin{fact} \cite{MiccLN12}
Given any full rank integral lattice $L$, there exists $d\in \Z$  such that $ d\cdot \mathbb{Z}^n\subseteq L$. In particular $ \lvert \det(L) \rvert \cdot \mathbb{Z}^n \subseteq L$ for any lattice (where $\det(L)$ denotes the determinant of a lattice,  a parameter that can be computed given a basis of the lattice).   
For instance, we can take $d = 2^m$ for the lattices of height $m$ obtained using the code formula construction.
\end{fact}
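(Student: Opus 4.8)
The plan is to deduce this from the finiteness of the quotient group $\Z^n / L$. Since $L$ has full rank $n$, the abelian group $\Z^n/L$ is finite, and its order equals the index $[\Z^n : L]$, which is precisely $\lvert\det(L)\rvert$ (this can be seen, e.g., from the Smith normal form of a basis matrix of $L$, or by a volume argument). By Lagrange's theorem every element $g$ of a finite group $G$ satisfies $\lvert G\rvert\cdot g = 0$; applying this in $G=\Z^n/L$ shows that for every $v\in\Z^n$ the vector $\lvert\det(L)\rvert\cdot v$ is congruent to $0$ modulo $L$, i.e., $\lvert\det(L)\rvert\cdot v\in L$. Hence $\lvert\det(L)\rvert\cdot\Z^n\subseteq L$, so we may take $d=\lvert\det(L)\rvert$.

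An alternative, purely linear-algebraic argument yields the same conclusion and may be preferable for exposition: fix a basis $b_1,\dots,b_n$ of $L$ and let $B\in\Z^{n\times n}$ be the matrix with these columns, so $L=B\Z^n$ and $\det(L)=\lvert\det B\rvert\neq 0$. Cramer's rule gives $\det(B)\cdot B^{-1}=\mathrm{adj}(B)$, the adjugate of $B$, whose entries are (signed) $(n-1)\times(n-1)$ minors of $B$ and hence integers. Therefore, for each standard basis vector $e_i$, we have $\det(B)\,e_i = B\bigl(\mathrm{adj}(B)\,e_i\bigr)\in B\Z^n = L$; since $L$ is closed under negation, $\lvert\det(B)\rvert\,e_i\in L$ for all $i$, and because $L$ is a group this gives $\lvert\det(B)\rvert\,\Z^n\subseteq L$.

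For the code formula specialization the claim $d=2^m$ needs no determinant bound at all: by definition the height-$m$ code formula lattice has $2^m\Z^n$ as one of its summands, so $2^m\Z^n\subseteq L(\langle C_i\rangle_{i=0}^{m-1})$ directly (and one checks, consistently, that $\lvert\det(L)\rvert$ divides $2^{mn}$).

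I do not expect a genuine obstacle here. The only points requiring a line of care are the identification of the index $[\Z^n:L]$ with $\lvert\det(L)\rvert$ (standard), the fact that the adjugate of an integer matrix has integer entries (immediate), and the handling of the sign of $\det(B)$, which is absorbed by the closure of $L$ under $v\mapsto -v$.
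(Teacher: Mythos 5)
The paper states this as a cited Fact (from the reference on lattice algorithms) and does not supply its own proof, so there is nothing internal to compare against. Both of your arguments are correct and standard: the group-theoretic one via $[\Z^n:L]=\lvert\det(L)\rvert$ and Lagrange, and the adjugate argument $\det(B)\,e_i = B(\mathrm{adj}(B)\,e_i)\in L$; the code-formula instance follows immediately since $0\in C_i$ for each constituent linear code, so $2^m\Z^n$ is literally a subset of the sum defining the lattice. Either argument would serve as a self-contained proof if the authors chose to include one.
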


Let $V=L \mmod d$ embedded in $\Z^n$ (i.e., we treat $V$ as a set of vectors in $\Z^n$ each of which is obtained by taking coordinate-wise modulo $d$ of some lattice vector). Thus, $V\subseteq \calz_d^n$. We will need the following properties of $V$, which we prove, for the sake of completeness. 

\begin{restatable}{prop}{propModPreservesDistance} \label{prop:mod-preserves-distance}
Let $L\subseteq \Z^n$ be a full-rank lattice, $d\in \Z_+$ such that $d \Z^n\subseteq L$, and let $V=L \mmod d \subseteq \Z^n$. 
Then $V$ satisfies the following properties:
\begin{enumerate}
\item $v\in L$ if and only if $v \mmod d \in V$.
\item  $V = L \cap {\cal Z}_d^n.$ 
\item $(v+V) \mmod d\subseteq V$ if and only if $v\in L$.
\item For any $v \in \Z^n$, $d_p(v,L)=d_p(v \mmod d,L)$.
\end{enumerate}
\end{restatable}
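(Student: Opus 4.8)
The plan is to prove the four properties of $V = L \bmod d$ essentially in the order stated, since each property builds on the previous ones, using throughout the single structural fact that $d\Z^n \subseteq L$.

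\textbf{Property 1.} The forward direction is immediate: if $v \in L$ then $v \bmod d$ is by definition an element of $V$. For the reverse, suppose $v \bmod d \in V$; then there is some $w \in L$ with $w \equiv v \pmod d$ coordinatewise, so $v - w \in d\Z^n \subseteq L$, and since $w \in L$ we get $v = (v-w) + w \in L$.

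\textbf{Property 2.} For the inclusion $V \subseteq L \cap \calz_d^n$: any element of $V$ lies in $\calz_d^n$ by construction (coordinatewise reduction mod $d$ lands in $\{0,\dots,d-1\}$), and it lies in $L$ by Property 1 (it equals its own reduction mod $d$, which is in $V$). For the reverse inclusion, if $v \in L \cap \calz_d^n$, then $v \bmod d = v$ since all coordinates are already in $\{0,\dots,d-1\}$, and $v \bmod d \in V$ because $v \in L$; hence $v \in V$.

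\textbf{Property 3.} If $v \in L$, then for any $u \in V$ we have $u \in L$ by Property 2, so $v + u \in L$, hence $(v+u) \bmod d \in V$; thus $(v+V)\bmod d \subseteq V$. Conversely, suppose $(v+V)\bmod d \subseteq V$. The zero vector is in $V$ (it is the reduction of $0 \in L$, or of $d\cdot\Z^n\subseteq L$), so $v \bmod d = (v+0)\bmod d \in V$, and then $v \in L$ by Property 1.

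\textbf{Property 4.} This is the step I expect to require the most care, though it is still short. One inequality is easy: since $v \equiv (v \bmod d) \pmod d$, the vector $v - (v\bmod d) \in d\Z^n \subseteq L$ is a lattice vector, and translating by a lattice vector preserves distance to $L$, so $d_p(v, L) = d_p(v \bmod d, L)$ --- wait, this already gives equality directly. So in fact the cleanest argument is: $v$ and $v \bmod d$ differ by the lattice vector $v - (v \bmod d) \in d\Z^n \subseteq L$, and for any lattice $L$ and any $\ell \in L$ we have $d_p(x, L) = d_p(x + \ell, L)$ because $u \mapsto u + \ell$ is a bijection of $L$ onto itself and $\|x - u\|_p = \|(x+\ell) - (u+\ell)\|_p$. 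Taking $x = v \bmod d$ and $\ell = v - (v\bmod d)$ gives the claim. The only thing to double-check is that $v - (v \bmod d)$ genuinely lies in $d\Z^n$, which holds because each coordinate of $v \bmod d$ is congruent mod $d$ to the corresponding coordinate of $v$; this uses $v \in \Z^n$, which is the hypothesis.

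The main (very mild) obstacle is just keeping the bookkeeping straight between "$v \bmod d$" as an element of $\calz_d^n \subseteq \Z^n$ versus as a residue class, and making sure Property 4 is invoked only for integral $v$ (which is exactly its stated hypothesis). No nontrivial lattice theory beyond $d\Z^n \subseteq L$ is needed.
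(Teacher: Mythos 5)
Your proposal is correct and follows essentially the same route as the paper: each part hinges only on $d\Z^n \subseteq L$, with Property 4 reduced to the fact that translation by a lattice vector preserves distance to the lattice. The one place you are slightly more careful than the paper is Property 3, where the paper's proof only sketches the forward implication ("$V\subseteq L$ and lattices are closed under addition") while you explicitly supply the converse by noting $0\in V$ and invoking Property 1.
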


\begin{proof}
\begin{enumerate}
\item If $v\in L$, then $v \mmod d\in V$ by definition. For the opposite direction, let $v\in \Z^n$ be such that $u=v \mmod d\in V$. 
Then by the definition of $V$ there exists $v'\in L$ such that $v'=u=v \pmod d$. Then $v-v'\in d\Z^n\subseteq L$, and so $v\in L$.

\item By definition $L\cap \calz_d^n\subseteq V$. To show that $V\subseteq L$ note that by 1), if $v\in V$ there exists $v'\in L$ such that $v'=v \mmod d$. As before, this implies that $v\in L$.

\item This statement follows by the fact that $V\subseteq L$ and from the fact that lattices are closed under addition.

\item Note that $d_p(v, L)=\min_{u\in L} d_p(u,v)=\min_{u\in L} \|v-u\|_{p}$. If $v=d v_1+v_2$, since $d v_1\in d\Z^n\subseteq L$, it follows that  $\min_{u\in L} \|v-u\|_{p}=\min_{u\in L} \|v_2-u\|_p$, since a lattice is closed under addition.

\end{enumerate}

\end{proof}
Theorem \ref{theorem:2-sided-to-1-sided-non-adaptive-linear} will immediately follow by combining Lemmas \ref{lemma:2-sided-to-1-sided-linear}, \ref{lemma:adaptive-to-non-adaptive}, \ref{lemma:bounded-targets}, and \ref{lemma:integer-targets} which will be proved in the subsequent sections. 


\begin{restatable}{lem}{TwoSidedToLinear}
\label{lemma:2-sided-to-1-sided-linear}
Suppose a full-rank lattice $L\subseteq \Z^n$ with $d\Z^n\subseteq L$ for $d\in \Z_+$ has an adaptive 2-sided $\ell_p$-tester $T(\epsilon, c, s, q)$ for inputs from the domain ${\cal Z}_d^n$. Then $L$ has an adaptive linear $\ell_p$-tester $T'(\epsilon, 0, c+s, q)$ for inputs from the domain ${\cal Z}_d^n$.
\end{restatable}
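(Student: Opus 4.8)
The plan is to mirror the reduction from $2$-sided to canonical testers for linear codes in \cite{BHR05}, using the distance-preservation properties of $V=L\bmod d$ recorded in Proposition~\ref{prop:mod-preserves-distance}. Since the domain $\calz_d^n$ is finite, I would first regard the given tester $T$ as a distribution over deterministic adaptive testers $D$, each being a decision tree of depth at most $q$ whose leaves are labelled \emph{accept} or \emph{reject} and are reached via a root-to-leaf path that queries a coordinate set $I$ and sees answers $a\in\calz_d^I$. I would then define $T'$ as follows: on input $t\in\calz_d^n$, sample $v\in V$ uniformly and $D$ from the distribution of $T$; simulate $D$ on the implicitly-formed vector $t':=(t+v)\bmod d$ (whenever $D$ queries a coordinate $i$, query $t_i$ and hand $D$ the value $(t_i+v_i)\bmod d$); let $I$ be the coordinate set traversed; finally accept iff $\langle t',x\rangle\in\Z$ for every $x\in L^{\bot}_I$. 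One first checks that $T'$ is a genuine (adaptive, randomized) \emph{linear} tester making at most $q$ queries: the query bound is clear, and since $d\Z^n\subseteq L$ we get $L^{\bot}\subseteq\tfrac1d\Z^n$, while $v\in V\subseteq L$ by Proposition~\ref{prop:mod-preserves-distance}(2); hence for $x\in L^{\bot}_I$ the value $\langle t',x\rangle$ differs from $\langle t,x\rangle$ by an integer, so $T'$'s decision is exactly the linear-tester rule applied to the input-dependent set $J=I$. Completeness is then immediate: if $t\in L$ then $\langle t,x\rangle\in\Z$ for every $x\in L^{\bot}\supseteq L^{\bot}_I$, so $T'$ accepts with probability $1$.

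For soundness I would attach to each deterministic $D$ in the support of $T$ its \emph{canonical relabelling} $\hat D$: the same tree, but with the leaf reached via $(I,a)$ relabelled to \emph{accept} precisely when $\langle a,x\rangle\in\Z$ for all $x\in L^{\bot}_I$ (no dual-lattice witness on $I$). Then running $T'$ on $t$ equals running $\hat D$ on $t'$. Two structural facts drive the argument. (i) At a leaf where $\hat D$ accepts but $D$ rejects, the absence of a dual witness on $I$ forces, by the local-witness property of full-rank integral lattices (a partial assignment $a$ on $I$ that violates no dual vector supported on $I$ extends to a lattice vector), some $w\in L$ with $w_I=a$; since $D$ reaches this leaf on any input agreeing with $a$ on $I$, it rejects the lattice vector $w$. (ii) At a leaf where $\hat D$ rejects but $D$ accepts there is a dual witness, so no lattice vector reaches it. Fact (i) embodies the ``errors move from yes-instances to no-instances'' phenomenon.

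For the quantitative bound, fix an $\eps$-far input $t$ and put $t'=(t+v)\bmod d$; by Proposition~\ref{prop:mod-preserves-distance}(4) together with $v\in L$, $t'$ is $\eps$-far from $L$ for \emph{every} $v$. Letting $B$ be the event that $t'$ reaches a leaf at which $\hat D$ accepts and $D$ rejects,
\[
\Pr_{D,v}[T' \text{ accepts } t]=\Pr_{D,v}[\hat D(t')=\text{accept}]\le\Pr_{D,v}[D(t')=\text{accept}]+\Pr_{D,v}[B].
\]
The first term is at most $s$, since for each fixed $v$ the vector $t'$ is a fixed $\eps$-far input and $D\sim T$ rejects it with probability at least $1-s$. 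For $\Pr_{D,v}[B]$ I would use that $(V,+\bmod d)$ is a finite abelian group (so translation by any element of $V$ permutes $V$; cf.\ Proposition~\ref{prop:mod-preserves-distance}), whence for each coordinate set $I$ the reduction $v\mapsto v_I\bmod d$ is a homomorphism onto a subgroup $H_I$ and $v_I\bmod d$ is uniform on $H_I$. Thus for a leaf with path $(I,a)$, $\Pr_v[t' \text{ reaches it}]=|H_I|^{-1}\mathbf 1[(a-t_I)\bmod d\in H_I]\le|H_I|^{-1}$, while $\Pr_v[v \text{ reaches it}]=|H_I|^{-1}\mathbf 1[a\in H_I]$. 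By fact (i), every leaf contributing to $B$ is reached by some lattice vector whose $I$-entries equal $a\in\{0,\dots,d-1\}^I$, which forces $a\in H_I$; summing over these leaves gives $\Pr_v[B\mid D]\le\Pr_v[v \text{ reaches such a leaf}]\le\Pr_v[D(v)=\text{reject}]$, and averaging over $D\sim T$ this is at most $c$ because every $v\in V\subseteq L$ is a lattice vector. Hence $T'$ accepts $t$ with probability at most $c+s$, establishing $T'(\eps,0,c+s,q)$.

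The step I expect to be the main obstacle is the local-witness property invoked in fact (i): that a partial assignment on $I$ consistent with every dual vector of $L$ supported on $I$ extends to a point of $L$. Full-rankness of $L$ is essential here, as it lets one zero-extend any dual vector of the projected lattice $\pi_I(L)$ to a vector of $L^{\bot}$, so that the dual of $\pi_I(L)$ is exactly $\{x_I : x\in L^{\bot}_I\}$, after which the lattice double-duality $(\pi_I(L)^{*})^{*}=\pi_I(L)$ closes the argument. The only other delicate point is that the per-leaf comparison $\Pr_v[t'\text{ reaches }\ell]\le\Pr_v[v\text{ reaches }\ell]$ must be applied solely at the leaves contributing to $B$, since precisely those leaves are guaranteed to be hit by a lattice vector and hence to satisfy $a\in H_I$.
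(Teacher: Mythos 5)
Your proof is correct and follows essentially the same plan as the paper's: shift the input by a random $v\in V=L\bmod d$, run the given decision tree on the shifted input, and then use the canonical (``optimal'') relabelling of leaves by the dual-witness rule. The completeness argument, the reduction to Proposition~\ref{prop:no-dual-witness-implies-lattice-point-exists} (your local-witness property, requiring full-rankness), and the use of the group structure of $V$ to compare acceptance probabilities on $t+V$ versus $V$ all match the paper. The only presentational difference is in the soundness accounting: the paper proves the abstract inequality $\rho^{T'}_x \le \rho^{T'}-\rho^{T}+\rho^{T}_x$ (Claim~\ref{prob ineq}) by relabelling one leaf at a time and comparing $|V_l|$ with $|V_l^x|$, whereas you bound $\Pr[B]\le c$ directly via the per-leaf inequality $\Pr_v[t'\text{ reaches }\ell]\le\Pr_v[v\text{ reaches }\ell]$ on the relabelled-to-accept leaves (which holds because absence of a dual witness forces $a\in H_I$) and then sum. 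The two arguments are equivalent in content; yours is slightly more streamlined, bypassing the intermediate claim $|V_l|=|V_l^x|$ (the paper's Claim~\ref{equal size}) in favor of the one-sided inequality, which is all that is needed.
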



\begin{restatable}{lem}{AdaptiveToNonAdaptive}
\label{lemma:adaptive-to-non-adaptive}
Suppose a full-rank lattice $L\subseteq \Z^n$ with $d\Z^n\subseteq L$ for $d\in \Z_+$ has  an adaptive linear $\ell_p$-tester $T(\epsilon, 0, s, q)$  for inputs from the domain ${\cal Z}_d^n$. Then $L$ has a non-adaptive linear $\ell_p$-tester $T'(\epsilon, 0, s, q)$ for inputs from the domain ${\cal Z}_d^n$.
\end{restatable}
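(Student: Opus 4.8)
The plan is to follow the sketch in Section~\ref{subsection:proof-overviews}: run the adaptive query‑generation of $T$ on a \emph{random element of $V=L\bmod d$} rather than on the real input, and then let the canonical local‑witness rule decide acceptance of the real input on the coordinates that were touched. Concretely, given the adaptive linear $(\eps,0,s,q)$‑tester $T$, define $T'$ as follows. On input $x\in\calz_d^n$, sample a uniform $v\in V$ together with the internal randomness $\rho$ of $T$; simulate the run of $T_\rho$ as if its input were $v$, which produces a query set $J=J(T_\rho,v)\subseteq[n]$ with $|J|\le q$ that does not depend on $x$; then query $x$ on the coordinates in $J$ and accept iff $\inprod{x}{u}\in\Z$ for every $u\in L^\bot_J$. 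By construction $T'$ is non‑adaptive (the set $J$ is fixed before any coordinate of $x$ is read), is a linear tester, and reads at most $q$ coordinates. Completeness with zero error is immediate: if $x\in L$ then $\inprod{x}{u}\in\Z$ for every $u\in L^\bot\supseteq L^\bot_J$, so $T'$ accepts with probability $1$.

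The content is soundness: if $d_p(x,L)\ge\eps\|1^n\|_p$ then $T'$ must reject with probability at least $1-s$. Two structural facts about $V$, both already available, are what make this plausible. First, $V$ is a subgroup of $(\Z/d\Z)^n$ and $V\subseteq L$ (Proposition~\ref{prop:mod-preserves-distance}(2)); since also $d\Z^n\subseteq L$, for every $v\in V$ and every set $J$ we have $\inprod{(x+v)\bmod d}{u}\equiv\inprod{x}{u}\pmod{\Z}$ for all $u\in L^\bot_J$, so $x$ carries a local witness on $J$ exactly when $(x+v)\bmod d$ does. Second, by Proposition~\ref{prop:mod-preserves-distance}(4), $(x+v)\bmod d$ is again $\eps$‑far from $L$ for \emph{every} $v\in V$, so $T$'s soundness guarantee applies to each of them: $T$ rejects each $(x+v)\bmod d$ with probability at least $1-s$, which — since a linear tester rejects precisely when the pattern it reads carries a local witness — means that with probability at least $1-s$ the coordinate set touched by a run of $T$ on $(x+v)\bmod d$ carries a local witness for $(x+v)\bmod d$, equivalently (by the first fact) for $x$. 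The crux is then to argue that running $T$ on the random element $v\in V$, rather than on the $\eps$‑far vector $(x+v)\bmod d$, still yields a touched coordinate set that carries a local witness for $x$ with probability at least $1-s$; a useful intermediate step is to first symmetrize $T$ — replace it by the adaptive linear $(\eps,0,s,q)$‑tester that adds a uniformly random $v'\in V$ to its input before simulating $T$ — after which, because $V$ is a group, the distribution of the touched coordinate set depends on the input only through its coset modulo $V$.

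I expect this last step to be the main obstacle. An adaptive tester chooses which coordinate to probe next based on the \emph{values} it has already read, not merely on the coset of its input modulo $V$, so running the query‑process on a random $v\in V$ (a lattice point) need not directly reproduce the distribution of touched coordinates that $T$'s soundness controls on an $\eps$‑far input; bridging this gap is the delicate point of the argument, and is the lattice analogue of the decision‑tree relabeling used for linear error‑correcting codes by Ben-Sasson \etal~\cite{BHR05}. The remaining pieces — preservation of local witnesses and of distances to $L$ under adding elements of $V$ and reducing modulo $d$, and the $1$‑sidedness and query‑complexity claims for $T'$ — are immediate from $d\Z^n\subseteq L$, Proposition~\ref{prop:mod-preserves-distance}, and the construction.
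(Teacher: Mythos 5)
Your construction of $T'$ is exactly the paper's: sample $v\in V$ together with $T$'s randomness, run $T$'s decision tree on $v$ to get a query set $J$ that is independent of $x$, then accept $x$ iff there is no local witness in $L^\bot_J$. Non-adaptivity, linearity, query bound, and perfect completeness are all correctly dispatched, as are the supporting facts that adding $v\in V$ and reducing modulo $d$ preserve both distances to $L$ and the local-witness status of any fixed coordinate set. However, you explicitly stop at the soundness argument and flag it as unresolved, and that flagged step really is the content of the lemma, so the proof is incomplete as written.

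The symmetrization idea you sketch does not close the gap: it makes the \emph{distribution of the touched set $J$} depend only on the coset of the input modulo $V$, but $x$ lives in a different coset from $V$ (precisely because it is far from $L$), so running $T$ on a random $v\in V$ still does not reproduce the $J$-distribution that $T$'s soundness controls on $(x+V)\bmod d$. The paper avoids trying to match these $J$-distributions and instead does a leaf-by-leaf accounting. Fix a decision tree $\Gamma$ and, for each leaf $l$, let $V_l$ be the lattice vectors in $\calz_d^n$ consistent with the path to $l$ and $V_l^x$ the vectors of $(x+V)\bmod d$ consistent with it. Three facts are proved: (i) if $V_l$ and $V_l^x$ are both nonempty then $|V_l|=|V_l^x|$, because both are translates of the subgroup $U$ of $V$ supported off $\var(l)$; (ii) for $l$ labeled $1$, $V_l^x\neq\emptyset$ iff there is no local witness for $x$ on $\var(l)$ (this uses Proposition~\ref{prop:no-dual-witness-implies-lattice-point-exists} together with the coset-invariance of local witnesses, your ``first fact''); and (iii) consequently $|V_l^x|=I_l^x\,|V_l|$, where $I_l^x$ indicates absence of a local witness for $x$ on $\var(l)$. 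Summing over accepting leaves then gives the exact identity
\[
\Pr[T'(x)=1]
=\sum_{\Gamma}D_T(\Gamma)\sum_{l\in l_1(\Gamma)}\frac{|V_l|}{|V|}\,I_l^x
=\sum_{\Gamma}D_T(\Gamma)\sum_{l\in l_1(\Gamma)}\frac{|V_l^x|}{|V|}
=\operatorname*{avg}_{v\in V}\Pr\bigl[T((x+v)\bmod d)=1\bigr],
\]
and the right-hand side is at most $s$ when $x$ is $\eps$-far, since every $(x+v)\bmod d$ is also $\eps$-far. This counting identity, not a coupling of query-set distributions, is the missing ingredient; without it (or an equivalent), the soundness claim for $T'$ is unproven.
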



\begin{restatable}{lem}{BoundedTargets} \label{lemma:bounded-targets}
Let $L\subseteq \Z^n$ be a full-rank lattice with $d\Z^n\subseteq L$ for $d\in \Z_+$. Then, $L$ has a non-adaptive linear $\ell_p$-tester $T(\epsilon, 0, s, q)$  for inputs from the domain ${\calz}_d^n$ if and only if $L$ has a non-adaptive linear $\ell_p$-tester $T'(\epsilon,0, s, q)$ for inputs from the domain $\Z^n$.
\end{restatable}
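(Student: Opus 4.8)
\textbf{Proof plan for Lemma~\ref{lemma:bounded-targets}.}
The claim is an equivalence, so I will argue both directions. The harder and more interesting direction is to build a tester for inputs in $\Z^n$ out of a tester for inputs in the bounded domain $\calz_d^n$. The key tool is Proposition~\ref{prop:mod-preserves-distance}, which tells us that for any $v\in\Z^n$ we have $d_p(v,L)=d_p(v \mmod d, L)$, and moreover $v\in L$ iff $(v\mmod d)\in V\subseteq L\cap\calz_d^n$. So the plan is: given a linear non-adaptive tester $T$ for domain $\calz_d^n$, define $T'$ on input $t\in\Z^n$ to pick the \emph{same} (random) query set $J\subseteq[n]$ that $T$ would pick, read the coordinates $t_j$ for $j\in J$, replace each by $t_j\bmod d$, and then run the decision rule of $T$ on this reduced local view. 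Because $T$ is a \emph{linear} tester, its decision rule on query set $J$ is exactly ``accept iff $\langle t,x\rangle\in\Z$ for all $x\in L^\bot_J$'', and since $d\Z^n\subseteq L$ we have $L^\bot\subseteq(1/d)\Z^n$ is actually contained in $\Z^n$ (indeed $L^\bot\subseteq\Z^n$ because $d\Z^n\subseteq L$ forces every dual vector to have integer coordinates); hence $\langle t,x\rangle\bmod 1$ only depends on $t\bmod d$ along the support of $x$, so the reduced local view produces the same accept/reject outcome as running $T$ directly on $t\bmod d$. Completeness and soundness for $T'$ on $\Z^n$ then follow immediately from those of $T$ on $\calz_d^n$ together with the distance-preservation clause (4) and the membership clause (1) of Proposition~\ref{prop:mod-preserves-distance}: if $t\in L$ then $t\bmod d\in V\subseteq L$ so $T$ accepts with probability~$1$; if $d_p(t,L)\ge\eps\|1^n\|_p$ then $d_p(t\bmod d,L)\ge\eps\|1^n\|_p$ as well, so $T$ rejects with probability $\ge 1-s$.

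For the converse direction (a tester for $\Z^n$ yields one for $\calz_d^n$), this is essentially trivial since $\calz_d^n\subseteq\Z^n$: restricting a tester $T'$ for domain $\Z^n$ to inputs from $\calz_d^n$ gives a tester with identical query complexity and identical completeness/soundness guarantees, because the required behavior on $\calz_d^n$ inputs is a special case of the required behavior on $\Z^n$ inputs (note $d_p(t,L)$ is the same quantity regardless of which domain we declare). The query set $J$ is chosen from the same distribution and the linear decision rule is unchanged, so $T'$ restricted to $\calz_d^n$ is still a non-adaptive linear tester with the same parameters $(\eps,0,s,q)$.

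The main obstacle, such as it is, is making sure the reduction in the forward direction genuinely preserves the \emph{linear} and \emph{non-adaptive} structure, rather than merely producing some tester. Non-adaptivity is free because $J$ is chosen obliviously. Linearity requires the observation above: the accept/reject verdict of a linear tester with query set $J$ is a function only of the residues $\langle t,x\rangle\bmod\Z$ for $x\in L^\bot_J$, and since $L^\bot\subseteq\Z^n$ these residues are unchanged when we reduce the queried coordinates modulo $d$; thus ``run $T$ on $t\bmod d$ restricted to $J$'' is literally the same as ``apply the linear decision rule of $T$ on $J$ to $t$ itself,'' so $T'$ is exactly the linear tester associated with the same query distribution, which certifies it is a legitimate non-adaptive linear tester for $\Z^n$ inputs. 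I would present the argument by first recording this ``residues only'' remark, then defining $T'$, then verifying completeness and soundness via Proposition~\ref{prop:mod-preserves-distance}, and finally noting the converse is immediate by domain restriction.
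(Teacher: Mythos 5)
Your proof takes essentially the same route as the paper: the forward (harder) direction is obtained by reducing coordinates modulo $d$ and invoking Proposition~\ref{prop:mod-preserves-distance} for both completeness and soundness, and the converse is just domain restriction. The one thing you add — an explicit check that the resulting tester remains a \emph{linear} tester — is a worthwhile observation that the paper leaves implicit, but it contains a slip. From $d\Z^n\subseteq L$ you can only conclude $L^\bot\subseteq(1/d)\Z^n$, not $L^\bot\subseteq\Z^n$; for example $L=2\Z^n$ has $L^\bot=(1/2)\Z^n\not\subseteq\Z^n$. (In fact, $L\subseteq\Z^n$ forces $\Z^n\subseteq L^\bot$, the opposite inclusion.) Fortunately the property you actually need survives: if $t-t'\in d\Z^n$ and $x\in L^\bot\subseteq(1/d)\Z^n$, then $\langle t-t',x\rangle=\langle (t-t')/d,\,dx\rangle$ is an inner product of two integer vectors and hence an integer, so $\langle t,x\rangle\equiv\langle t',x\rangle\pmod 1$. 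So replace the incorrect containment $L^\bot\subseteq\Z^n$ with $L^\bot\subseteq(1/d)\Z^n$ and the argument for linearity goes through unchanged; everything else in your proof is correct and matches the paper's approach.
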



\begin{restatable}{lem}{RealTargets}\label{lemma:integer-targets}
Suppose a full-rank lattice $L\subseteq \Z^n$ has a non-adaptive $\ell_p$-tester $T(\epsilon, c, s, q)$ for inputs from the domain ${\Z}^n$.  
Then there exists a non-adaptive $\ell_p$-tester $T'(\epsilon,c, s, q')$ for inputs in $\R^n$ with query complexity $q'=q(\eps/2,c,s) + O((1/\eps^p)\log{(1/s)})$. Moreover, if $T$ is a linear tester, then so is $T'$.
\end{restatable}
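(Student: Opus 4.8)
The plan is to build $T'$ as the conjunction of two tests: first run a membership tester for the integer lattice $\Z^n$ to reject inputs that are too far from $\Z^n$, and then simulate the given tester $T$ on (a rounding of) the input with halved distance parameter $\eps/2$. The intuition is that any $t\in\R^n$ which is $\eps$-far from $L$ must either be noticeably far from $\Z^n$ — caught by the integrality tester — or be close to an integer point $\round{t}$ which is itself $(\eps/2)$-far from $L$, and hence caught by $T$ run on $\round{t}$. Crucially, $T$ only ever queries coordinates, so the algorithm can compute $\round{t_i}$ locally for each queried coordinate $i$ without any extra queries beyond those $T$ already makes.

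First I would establish the integrality tester: there is a non-adaptive $\ell_p$-tester for $\Z^n$ with query complexity $O((1/\eps^p)\log(1/s))$, perfect completeness, and soundness $s$. This is exactly the (one-sided, non-tolerant) analogue of Lemma~\ref{lemma:tolerant-tester-for-Z}: sample $O((1/\eps^p)\log(1/s))$ coordinates uniformly, reject if $\sum_{i\in I}|t_i-\round{t_i}|^p$ exceeds roughly $(\eps^p/2)|I|$; completeness is immediate since integer inputs contribute $0$, and soundness follows from a Chernoff bound since $d_p(t,\Z^n)\ge \eps\|1^n\|_p$ means $\tfrac1n\sum_i|t_i-\round{t_i}|^p\ge\eps^p$. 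Note this test is itself a linear test for $\Z^n$ in the sense of the paper, since $(\Z^n)^\perp=\Z^n$ and checking $\langle t,e_i\rangle\in\Z$ for $i\in I$ is exactly checking $t_i\in\Z$; this gives the ``moreover'' clause essentially for free, as the overall $T'$ accepts iff both linear tests accept.

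Next I would define $T'$ formally: on input $t\in\R^n$, run the $\Z^n$-tester above with parameters $(\eps/2, 0, s, O((1/\eps^p)\log(1/s)))$, and independently run $T$ with parameters $(\eps/2, c, s, q(\eps/2,c,s))$ but feeding it the queried coordinates of $\round{t}$ rather than of $t$; accept iff both accept. The query complexity is $q(\eps/2,c,s)+O((1/\eps^p)\log(1/s))$ as claimed, and the queried coordinate set is the union of the two (non-adaptive) sets. For completeness, if $t\in L$ then $t\in\Z^n$ so the integrality test accepts with probability $1$, and $\round{t}=t\in L$ so $T$ accepts with probability $\ge 1-c$; union bound gives acceptance probability $\ge 1-c$. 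For soundness, suppose $d_p(t,L)\ge\eps\|1^n\|_p$. By the triangle inequality $d_p(t,L)\le d_p(t,\round{t})+d_p(\round{t},L)=d_p(t,\Z^n)+d_p(\round{t},L)$, so at least one of the two summands is $\ge \tfrac{\eps}{2}\|1^n\|_p$; in the first case the integrality tester rejects with probability $\ge 1-s$, in the second case $\round{t}$ is $(\eps/2)$-far from $L$ and $T$ (which is valid on integer inputs) rejects with probability $\ge 1-s$. Either way $T'$ rejects with probability $\ge 1-s$.

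The only mild subtlety — and the step I'd be most careful about — is the bookkeeping that $T$ is only guaranteed correct on inputs in $\Z^n$, which is why we must hand it $\round{t}\in\Z^n$ and not $t$ itself; feeding $\round{t}$ is legitimate precisely because rounding is coordinatewise, so each queried coordinate of $\round{t}$ is computable from the corresponding queried coordinate of $t$, preserving non-adaptivity and the query count. The other point to check is that the error parameters compose correctly: the completeness error stays $c$ (the integrality test has perfect completeness, so only $T$'s $c$ contributes), and the soundness error stays $s$ rather than $2s$, because the two "bad" events are mutually exclusive given which summand is large — so no union bound is needed on the soundness side. I expect no genuine obstacle here; it is a routine composition of a distance-estimation test for $\Z^n$ with the given tester via a triangle inequality.
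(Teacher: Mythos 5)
Your overall plan matches the paper's: compose a one-sided tester for $\Z^n$ with the given tester $T$ (run at halved distance parameter), split the soundness analysis via $d_p(t,L)\le d_p(t,\Z^n)+d_p(\round{t},L)$, and avoid a union bound because the two ``far'' cases are handled by separate subtests. That part is correct. The problems are with the ``moreover'' clause on linearity.

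First, the $\Z^n$-tester you describe --- sample coordinates uniformly, reject if $\sum_{i\in I}|t_i-\round{t_i}|^p$ exceeds a threshold --- is \emph{not} a linear tester in the paper's sense: it will accept an input all of whose queried coordinates are slightly non-integral, even though some $e_i\in (\Z^n)^{\bot}_I$ then has $\langle t,e_i\rangle\notin\Z$. You then say ``checking $\langle t,e_i\rangle\in\Z$ for $i\in I$ is exactly checking $t_i\in\Z$,'' which describes a \emph{different} test than the thresholded one. The paper's Lemma~\ref{lemma:integer-lattice-tester} uses the simple test ``query $O((1/\eps^p)\log(1/s))$ coordinates uniformly and reject iff any is non-integral''; this one is linear and already achieves the needed soundness because an $\eps$-far input has at least $\eps^p n$ non-integral coordinates. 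You should use that version.

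Second, feeding $\round{t}$ to $T$ does not yield a linear test, and the reasoning ``$T'$ accepts iff both linear tests accept'' does not establish linearity. A linear test for $L$ must accept $t$ iff $\langle t,x\rangle\in\Z$ for all $x\in L^{\bot}_J$, where $J$ is the queried set; but your step~2 depends on inner products of $\round{t}$ with dual vectors, not of $t$. Moreover, a conjunction of two linear tests on $I_1$ and $I_2$ checks only $L^{\bot}_{I_1}\cup L^{\bot}_{I_2}$, not all of $L^{\bot}_{I_1\cup I_2}$, so it is not literally the canonical linear test at $J=I_1\cup I_2$. The paper's construction avoids the $\round{t}$ issue: it runs $T$ directly on $t$ and rejects immediately if a queried coordinate is non-integral. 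When $T$ is a linear tester this makes the rejection condition an inner-product condition on $t$ itself (the ``reject on non-integer'' step is subsumed by the dual witnesses $e_i$), so every rejection of $T'$ is certified by a dual vector supported on the queried coordinates, and one can pass to the canonical linear test at the same query distribution with no loss in soundness or completeness (this is the same argument as in Remark~\ref{remark:linearD}). Your write-up needs both corrections to justify the ``moreover'' clause.
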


~

The proof of Lemma \ref{lemma:integer-targets} uses the following tester for integer lattices which is based on querying a random collection of coordinates and verifying whether all of them are integral. 

\begin{restatable}{lem}{IntegerLatticeTester}\label{lemma:integer-lattice-tester} 
For every $0 < \eps \leq 1$ and every $0 < s \leq 1$, there exists a non-adaptive linear $\ell_p$-tester $T_p(\epsilon,0,s, q_Z)$ for $\Z^n$ with query complexity  
\[q_Z=O\left(\frac{1}{\eps^p} \log{\frac1s}\right).\]
\end{restatable}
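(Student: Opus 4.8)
\textbf{Proof proposal for Lemma~\ref{lemma:integer-lattice-tester}.}

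The plan is to exhibit the tester explicitly: pick a multiset $J$ of $q_Z = O((1/\eps^p)\log(1/s))$ coordinates of $[n]$ independently and uniformly at random, query $t$ on those coordinates, and accept if and only if every queried coordinate $t_j$ is an integer. To see that this is a \emph{linear} tester in the sense of the definition, note that for the lattice $\Z^n$ we have $(\Z^n)^\perp=\Z^n$, and $(\Z^n)^\perp_J$ is spanned by $\{e_j : j\in J\}$; accepting iff $\langle t,e_j\rangle=t_j\in\Z$ for all $j\in J$ is exactly the rule above, so the test is linear (and hence one-sided, with perfect completeness: any $t\in\Z^n$ is accepted with probability $1$).

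For soundness, suppose $d_p(t,\Z^n)\ge \eps\|1^n\|_p = \eps n^{1/p}$, i.e. $\sum_{i=1}^n \mathrm{frac\text{-}dist}(t_i)^p \ge \eps^p n$, where I write $r_i := \min_{z\in\Z}|t_i-z| = |t_i-\round{t_i}| \in [0,1/2]$ for the distance from $t_i$ to the nearest integer, so $\sum_i r_i^p \ge \eps^p n$. Since each $r_i\le 1/2\le 1$, we have $r_i^p\ge r_i\cdot(1/2)^{p-1}$ only when $p\ge 1$; more directly, because $r_i\in[0,1]$ and $p\ge 1$ implies $r_i^p\le r_i$, we get $\sum_i \mathbf{1}[r_i>0] \ge \sum_i r_i \ge \sum_i r_i^p \ge \eps^p n$. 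Hence at least $\eps^p n$ coordinates are non-integral, so a single uniformly random query lands on a non-integral coordinate with probability at least $\eps^p$. The probability that none of the $q_Z$ independent queries is non-integral is at most $(1-\eps^p)^{q_Z}\le e^{-\eps^p q_Z}$, which is at most $s$ once $q_Z \ge (1/\eps^p)\ln(1/s)$. Thus the tester rejects with probability at least $1-s$, establishing soundness; the test is non-adaptive by construction since $J$ is chosen before any query is answered.

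The only subtlety — and the ``main obstacle'', though it is mild — is the passage from the $\ell_p$ far-ness hypothesis $\sum_i r_i^p\ge \eps^p n$ to a lower bound on the \emph{number} of non-integral coordinates: one must use that each $r_i\le 1$ (equivalently, that the nearest-integer distance is at most $1/2$) together with $p\ge 1$ to replace $r_i^p$ by the indicator $\mathbf{1}[r_i>0]$, since $r_i^p\le r_i\le \mathbf{1}[r_i>0]$ on $[0,1]$. (If one worried about $p<1$ the same bound $r_i^p\le \mathbf{1}[r_i>0]$ still holds on $[0,1]$, so the argument is robust, but the paper only needs $p\ge 1$.) Everything else is a routine Chernoff-free union/independence computation, and the linearity claim is immediate from the description of $(\Z^n)^\perp_J$.
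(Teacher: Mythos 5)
Your proposal is correct and coincides with the paper's proof: both sample $O\bigl((1/\eps^p)\log(1/s)\bigr)$ uniformly random coordinates, accept iff every queried coordinate is integral, and observe that this is a linear test because the standard basis vectors $e_j\in(\Z^n)^\perp_J$ furnish the dual witnesses. You supply the small step the paper leaves implicit (that $\ell_p$ far-ness forces at least $\eps^p n$ non-integral coordinates, via $r_i^p\le\mathbf{1}[r_i>0]$ for $r_i\in[0,1]$); the parenthetical claim ``$r_i^p\ge r_i(1/2)^{p-1}$'' is in the wrong direction but is abandoned immediately and does not affect the argument.
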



\subsection{2-sided to Linear Tester}
In this section, we prove Lemma~\ref{lemma:2-sided-to-1-sided-linear}.  
Given a 2-sided adaptive tester $T(\epsilon, c, s, q)$ for inputs $x$ from the domain ${\cal Z}_d^n$, we build an adaptive linear (thus one-sided) test $T'(\epsilon, 0, c+s,q)$ for inputs from the same domain $\calz_d^n$ with the same query complexity as that of $T$ in this section. 

For an index set $J\subseteq [n]$ and a vector $w\in {\cal Z}_d^n$, let $X(w,J):=\{x\in {\cal Z}_d^n \mid (\forall\ i\in J)\ x_i=w_i\ \}$. For a subset of coordinates $J\subseteq [n]$ and a vector $w\in \calz_d^n$, we say that \emph{there exists a dual witness for $X(w,J)$} if there exists $\alpha \in L^{\bot}_J$ such that $\inprod{\alpha}{w}\notin \Z$. That is, a dual witness $\alpha$ is a dual vector entirely supported on $J$ that proves none of the vectors in $X(w,J)$ (and thus $w$) can be in the lattice.
Recall that $V := L \mmod d$ where $d\Z^n \subseteq L$.

If the input vector $x$ is from the domain ${\cal Z}_d^n$, then each coordinate of the input has $d$ possible choices. Thus, any 2-sided adaptive tester $T(\epsilon, c, s,q)$ for inputs from the domain $\mathbb{Z}_d^n$, can be viewed as a distribution over deterministic decision trees with each leaf being labeled $1$ if accepting and $0$ if rejecting. Therefore we will express the tester as $T = (\Upsilon_T, D_T)$, where $\Upsilon_T$ is the set of all decision trees (with at most $q$ queries) 
and $D_T$ is a distribution over $\Upsilon_T$. 

Let $l$ be a leaf of a decision tree. We denote the coordinates queried along the path to $l$ by $\var(l)$. We denote the vector that is consistent with the queried coordinates along the path to $l$ and has zeros in the non-queried coordinates by $s_l$. 
Let us define $V_l$ to be the set of lattice vectors $u$ which are consistent with the queries along the path to $l$. Similarly, let $V_l^x$ be the set of vectors in $(x+ V) \mmod d$ which are consistent with the queries along the path to $l$, i.e., $V_l  = X(s_l,\var(l)) \cap V$ and $V_l^x = X(s_l,\var(l)) \cap ((x + V) \mmod d)$. We need the following claim about the sizes of $V_l$ and $V_l^x$.

\begin{claim}
\label{equal size}
For every leaf $l$ in the decision tree $\Gamma$, if both $V_l$ and $V_l^x$ are non-empty, then $\lvert V_l \rvert = \lvert V_l^x \rvert $.
\end{claim}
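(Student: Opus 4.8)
The plan is to produce an explicit bijection between $V_l$ and $V_l^x$ by shifting along a single base point, using that $V$ is a group. First I would record the group structure: $V = L \mmod d$ is the image of the subgroup $L \le \Z^n$ under the reduction map $\Z^n \to \Z^n / d\Z^n$, hence a subgroup of $\calz_d^n$ under coordinate-wise addition followed by reduction modulo $d$ --- this is essentially the content of part~3 of Proposition~\ref{prop:mod-preserves-distance} together with the fact that $L$ is closed under addition and negation. Consequently, for any $x$, the set $(x+V)\mmod d$ is a coset of $V$: adding (mod $d$) an element of $V$ to an element of $(x+V)\mmod d$ returns an element of $(x+V)\mmod d$, and the coordinate-wise difference (mod $d$) of any two elements of $(x+V)\mmod d$ lies in $V$.

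Next, assuming both $V_l$ and $V_l^x$ are non-empty, I would fix arbitrary representatives $w_0 \in V_l$ and $u_0 \in V_l^x$ and define $\phi\colon V_l \to \calz_d^n$ by $\phi(w) = (w - w_0 + u_0)\mmod d$, together with the candidate inverse $\psi\colon V_l^x \to \calz_d^n$, $\psi(u) = (u - u_0 + w_0)\mmod d$. The verification splits into two parts. For membership in the right ambient set: $(w-w_0)\mmod d \in V$ since $w, w_0 \in V$, so $\phi(w) \in (x+V)\mmod d$, and symmetrically $\psi(u)\in V$, using that $(u-u_0)\mmod d\in V$ because $u,u_0$ lie in the same coset. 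For consistency with the decision-tree path: on every queried coordinate $i \in \var(l)$ we have $w_i = (w_0)_i = (u_0)_i = (s_l)_i$, and $(s_l)_i \in \calz_d$, so $\phi(w)_i = (s_l)_i$, i.e. $\phi(w) \in X(s_l,\var(l))$; hence $\phi(w) \in V_l^x$, and symmetrically $\psi(u)\in V_l$. Since every vector occurring here already lies in $\calz_d^n$, reduction mod $d$ acts as the identity on it, so $\psi\circ\phi = \mathrm{id}_{V_l}$ and $\phi\circ\psi=\mathrm{id}_{V_l^x}$; thus $\phi$ is a bijection and $\lvert V_l\rvert = \lvert V_l^x\rvert$.

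The argument has no deep obstacle; the one thing to be careful about is the ``mod $d$'' bookkeeping --- confirming that $(x+V)\mmod d$ genuinely behaves as a coset of the group $V$ so that the shifts defining $\phi$ and $\psi$ land in the asserted sets, and that the queried coordinates of $s_l$ already lie in $\calz_d$ (which holds because the inputs are drawn from $\calz_d^n$) so that the reductions there are trivial. Both points follow directly from Proposition~\ref{prop:mod-preserves-distance} and the domain restriction, so the main content is just the two routine verifications above.
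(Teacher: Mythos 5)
Your proof is correct and uses essentially the same underlying idea as the paper --- the coset structure of $V = L \bmod d$ inside $\calz_d^n$. The paper factors through the subgroup $U$ of lattice vectors in $\calz_d^n$ vanishing on $\var(l)$, showing $V_l$ and $V_l^x$ are both cosets of $U$ (hence both of size $\lvert U\rvert$), whereas you build a direct bijection $w \mapsto (w-w_0+u_0)\bmod d$; the two arguments are interchangeable and of comparable length.
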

\begin{proof}
Let $U$ denote the set of all the lattice vectors in ${\cal Z}_d^n$ which have all $0$'s in the positions queried along the path to $l$. We know that $U$ is non-empty because the all zeros vector is in $U$. 

For every $v \in V_l$ and $u\in U$, we have that $(v + u) \mmod d$ is also in $V_l$ since we are only adding $0$'s at the queried coordinates. Similarly, for every vector $v' \in V_l^x$ and $u\in U$, we have that $(v' + u) \mmod d$ is also in $V_l^x$. Therefore, we know that $(U + v) \mmod d \subseteq V_l$ for every $v \in V_l$ and similarly, $(U + v') \mmod d \subseteq V_l^x$ for every vector $v' \in V_l^x$. 

Further, for every two vectors $u, v \in V_l$, we have that $(u-v) \mmod d$ is in $U$ and since $u$ and $v$ are both consistent along the path to $l$, the vector $u-v$ has all zeros at the queried coordinates. So, $(u-v) \mmod d \in U$. Therefore, $(V_l - v) \mmod d \subseteq U$ for every $v\in V_l$ and hence $V_l \subseteq (U + v) \mmod d$ for every $v\in V_l$. 
Similarly, for every vector $v' \in V_l^x$, we have that $(V_l^x - v') \mmod d \subseteq U$ and hence $V_l^x \subseteq (U + v') \mmod d$.  

Therefore, if $V_l$ and $V_l^x$ are non-empty, then $(U + v) \mmod d = V_l $ for every vector $v \in V_l$ and $(U + v') \mmod d = V_l^x$ for every vector $v' \in V_l^x$. Hence, $ \lvert V_l \rvert = \lvert U \rvert = \lvert V_l^x \rvert$. 
\end{proof}

We now show that if a linear test accepts, then there exists a lattice vector that is consistent with the queried coordinates. In other words, if there is no dual witness then there is a lattice vector that is accepted by the test. 

In the following, let $\proj_J(u)\in \R^{|J|}$ denote the projection of vector $u$ to the coordinates in $J$ and $\proj_J(S)$ denote the set of vectors obtained by projecting the vectors in $S$ to the coordinates in $J$.
We note that the projection of a rational lattice to a set of coordinates gives a lattice again.

\begin{prop}\label{prop:no-dual-witness-implies-lattice-point-exists}
Let $J\subseteq [n]$, $w\in \calz_d^n$. If $\inprod{\alpha}{\proj_J(w)}\in \Z$ for every $\alpha \in \proj_J(L^{\bot}_J)$, then $V\cap X(w,J)\neq \emptyset$.
\end{prop}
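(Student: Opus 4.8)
The plan is to show the contrapositive in spirit: assuming $\proj_J(w)$ has integer inner product with every dual vector supported on $J$, I want to produce a lattice point agreeing with $w$ on $J$. First I would pass to the projected lattice $L_J := \proj_J(L)$ in $\R^{|J|}$, which is again a rational lattice. The key observation is that the dual of this projected lattice, $(L_J)^\bot$, is exactly (the relevant embedding of) $L^\bot_J = \{x \in L^\bot : \supp(x) \subseteq J\}$ — restricting the support of a dual vector to $J$ corresponds precisely to dualizing after projecting onto $J$. This is the standard projection/section duality for lattices and is where I'd want to be careful: I need that $\proj_J(L^\bot_J)$, viewed inside $\R^{|J|}$, equals $(L_J)^\bot$ as sublattices of $\R^{|J|}$, or at least spans the same space so that the integrality condition transfers.

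Given that, the hypothesis says $\proj_J(w)$ has integral inner product with every vector of $(L_J)^\bot$. By the elementary characterization of lattice membership via the dual — namely $(L_J^\bot)^\bot = L_J$, so $y \in L_J$ iff $\langle y, z\rangle \in \Z$ for all $z \in L_J^\bot$, which is stated in the excerpt — this forces $\proj_J(w) \in L_J = \proj_J(L)$. Hence there exists a lattice vector $v \in L$ with $\proj_J(v) = \proj_J(w)$, i.e. $v$ agrees with $w$ on all coordinates in $J$. Now reduce $v$ modulo $d$: by Proposition~\ref{prop:mod-preserves-distance}(1), $v \mmod d \in V$, and since $w \in \calz_d^n$ (so each coordinate of $w$ already lies in $\{0,\dots,d-1\}$) and $v$ agrees with $w$ on $J$, taking $v \mmod d$ leaves the coordinates in $J$ equal to $w_i$ for $i \in J$. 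Therefore $v \mmod d \in V \cap X(w,J)$, which is what we wanted.

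The main obstacle I anticipate is the duality bookkeeping: making precise that the integrality test against $\proj_J(L^\bot_J)$ is genuinely equivalent to membership in the projected lattice $\proj_J(L)$. One has to be slightly careful because $L^\bot_J$ may not span all of $\R^{|J|}$ in general, but since $L$ is full-rank, $\proj_J(L)$ is full-rank in $\R^{|J|}$, and dually $L^\bot_J$ spans $\R^{|J|}$ as well; this rank condition is exactly what makes $(\proj_J(L))^\bot = \proj_J(L^\bot_J)$ hold and the argument go through. Everything else — the reduction mod $d$, the agreement on $J$ — is routine given Proposition~\ref{prop:mod-preserves-distance}.
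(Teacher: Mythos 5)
Your proposal is correct and follows essentially the same route as the paper: the paper's own proof proves exactly the projection/section duality $(\proj_J(L))^{\bot}=\proj_J(L^{\bot}_J)$ as a sub-proposition, uses $(M^{\bot})^{\bot}=M$ to conclude $\proj_J(w)\in\proj_J(L)$, and then lifts to a lattice point agreeing with $w$ on $J$. If anything, you are slightly more careful than the paper in two small places: you make the mod-$d$ reduction explicit (the paper asserts ``there exists $x\in X(w,J)\cap L$'' without noting that one must reduce the lifted lattice vector modulo $d$ to land in $\calz_d^n$), and you flag that full-rankness of $L$ is what guarantees $\proj_J(L^{\bot}_J)$ spans $\R^{|J|}$, which is needed for the step ``integral inner products with the dual imply membership'' — a hypothesis the paper uses implicitly since the whole section assumes a full-rank lattice.
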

\begin{proof}

 We recall that the dual of a projection of a lattice is the set of vectors in the projected space which have integral dot products with all points in the projected lattice. The following proposition shows that the dual of a projected lattice is the projection of the set of vectors in the dual lattice whose support is contained in the projection. 
 
\begin{prop}\label{prop:projection-of-dual-is-dual-of-projection}
Let $J\subseteq [n]$. Then 
\[
\left(\proj_J(L)\right)^{\bot}=\proj_J\left(L^{\bot}_J\right).
\]
\end{prop}
\begin{proof}
Let $\alpha_J\in \proj_J(L^{\bot}_J)$. Let us extend the vector $\alpha_J$ to $\alpha\in \R^n$ by setting the coordinates that are not in $J$ to zero. We note that $\alpha\in L^{\bot}_J$. Hence $\inprod{\alpha}{x}\in \Z$ for every $x\in L$. Therefore $\inprod{\alpha_J}{x_J}\in \Z$ for every $x_J\in \proj_J(L)$. Thus, $\alpha_J\in (\proj_J(L))^{\bot}$.

Let $\alpha_J \in (\proj_J(L))^{\bot}$. Then for every $v_J\in \proj_J(L)$, we have $\inprod{\alpha_J}{v_J} \in \Z$. Consequently for every $v\in L$, we have $\inprod{\alpha_J}{\proj_J(v)}\in \Z$. Let us extend the vector $\alpha_J$ to $\alpha\in \R^n$ by setting the coordinates that are not in $J$ to zero. Then $\inprod{\alpha}{v}\in \Z$ for every $v\in L$. Therefore $\alpha\in L^{\bot}$ and hence $\alpha_J\in \proj_J(L^{\bot}_J)$. 
\end{proof}

We have that $\inprod{\alpha}{\proj_J(w)}\in \Z$ for every $\alpha\in \proj_J(L^{\bot}_J)$. Therefore $\proj_J(w)\in (\proj_J(L^{\bot}_J))^{\bot}$. By Proposition \ref{prop:projection-of-dual-is-dual-of-projection}, we have that $\proj_J(w)\in \proj_J(L)$. Hence, there exists $x\in X(w,J)\cap L =X(w,J)\cap V$.
\end{proof}

 Note that it is possible to determine if there exists a dual witness for $X(w,J)$ and if so, find one efficiently as shown in Proposition \ref{prop:dual-witness-can-be-found-efficiently}.
\begin{prop}\label{prop:dual-witness-can-be-found-efficiently}
Given $w\in {\cal Z}_d^n$ and $J\subseteq [n]$, we can find a dual witness for $X(w,J)$ if one exists or confirm that no dual witness for $X(w,J)$ exists in time $O(|J|^{\omega})$, where $O(m^{\omega})$ is the time to compute the inverse of a $m\times m$ real matrix. 
\end{prop}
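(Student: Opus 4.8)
The plan is to give a linear-algebraic algorithm that, given $w \in \calz_d^n$ and $J \subseteq [n]$, either produces a dual witness $\alpha \in L^\perp_J$ with $\inprod{\alpha}{w} \notin \Z$, or certifies that no such witness exists. First I would compute a basis for the sublattice $L^\perp_J = \{x \in L^\perp : \supp{x} \subseteq J\}$. Given a basis $B^\perp$ for the dual lattice $L^\perp$ (which is obtained from a basis of $L$ by a matrix inversion, hence the $O(n^\omega)$-type cost), the vectors of $L^\perp$ supported on $J$ form a sublattice whose basis can be extracted by a Hermite-Normal-Form / kernel computation restricted to the coordinates outside $J$: write each dual basis vector in terms of its coordinates, and solve for integer combinations that vanish on $[n]\setminus J$. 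Equivalently, after restricting attention to the coordinates in $J$, we work with the projected lattice $\proj_J(L)$ and its dual $\proj_J(L^\perp_J) = (\proj_J(L))^\perp$ (Proposition~\ref{prop:projection-of-dual-is-dual-of-projection}); all the relevant linear algebra then happens in dimension $|J|$.

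Next, with a basis $\alpha^{(1)}, \ldots, \alpha^{(r)}$ for $\proj_J(L^\perp_J)$ in hand, I would simply test whether $\inprod{\alpha^{(i)}}{\proj_J(w)} \in \Z$ for each $i$. If all of these inner products are integral, then by linearity $\inprod{\alpha}{\proj_J(w)} \in \Z$ for every $\alpha \in \proj_J(L^\perp_J)$, so by Proposition~\ref{prop:no-dual-witness-implies-lattice-point-exists} we have $V \cap X(w,J) \neq \emptyset$ and we correctly report that no dual witness exists. If some $\inprod{\alpha^{(i)}}{\proj_J(w)} \notin \Z$, then extending $\alpha^{(i)}$ by zeros on $[n]\setminus J$ gives an element of $L^\perp_J$ with non-integral inner product with $w$ — a dual witness — which we output. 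Correctness is immediate from the two propositions already proved; the only content is the effective basis computation.

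For the running-time bound I would argue that every step reduces to a constant number of operations on integer matrices of size at most $|J| \times |J|$ (after the initial pass that reads the relevant $O(|J|)$ coordinates of the given bases): computing the dual basis is one matrix inversion, extracting the supported sublattice / passing to the projected dual is a kernel computation followed by an HNF reduction, and the final integrality checks are $r \le |J|$ inner products. Each of these is dominated by the cost of inverting an $|J| \times |J|$ matrix, which is $O(|J|^\omega)$ by definition of $\omega$, giving the claimed $O(|J|^\omega)$ bound. (Strictly speaking, HNF and kernel computations over $\Z$ carry bit-complexity factors, but the statement is phrased in terms of arithmetic operations / the matrix-inversion exponent, so these are absorbed.)

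The main obstacle is the bookkeeping around passing from $L$ and $L^\perp$ — which live in $\R^n$ — to the sublattice supported on $J$ and its projection, while keeping everything at size $|J|$ rather than $n$: one must be careful that $\proj_J(L^\perp_J)$ is genuinely the dual of $\proj_J(L)$ (this is exactly Proposition~\ref{prop:projection-of-dual-is-dual-of-projection}) and that a basis for it can be read off in time depending only on $|J|$, not $n$. Once that reduction is set up cleanly, the rest is a routine sequence of standard lattice-basis manipulations, and the time bound follows by charging each to a matrix inversion of dimension $|J|$.
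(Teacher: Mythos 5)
Your proposal takes essentially the same route as the paper's proof: project the lattice to the coordinates in $J$, invoke Proposition~\ref{prop:projection-of-dual-is-dual-of-projection} to identify $(\proj_J(L))^{\bot}$ with $\proj_J(L^{\bot}_J)$, compute a basis for that dual by a $|J|\times|J|$ matrix inversion, and test integrality of the inner products $\inprod{\alpha^{(i)}}{\proj_J(w)}$, with a witness produced exactly when some inner product is non-integral (using Proposition~\ref{prop:no-dual-witness-implies-lattice-point-exists} in the other direction). You are in fact slightly more careful than the paper in flagging that the projected generating set must be reduced (via an HNF/kernel step) before one can speak of a genuine basis of $\proj_J(L)$ — a detail the paper silently elides — but the decomposition, the two propositions invoked, and the cost accounting are the same.
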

\begin{proof}
A basis for $\proj_J(L)$ can be obtained by projecting the basis for $L$. Now a basis for the dual of the projected lattice, namely $\proj_J(L)^{\bot}=\proj_J(L^{\bot}_J)$, can be computed in time $O(|J|^{\omega})$. We observe that for every $\alpha\in L_J^{\bot}$, we have $\inprod{\alpha}{w}\in \Z$ if and only if for every basis vector $b$ of $\proj_J(L^{\bot}_J)$, we have $\inprod{b}{\proj_J(w)}\in \Z$. Hence it is sufficient to only verify the inner product of $\proj_J(w)$ with the basis vectors of $\proj_J(L^{\bot}_J)$.
\end{proof}

We now have the ingredients needed to prove Lemma \ref{lemma:2-sided-to-1-sided-linear}.

\TwoSidedToLinear*
\begin{proof}
We first relabel the decision tree according to the rule required for a linear test: Given a decision tree $\Gamma$ for the tester $T$, we say that it is \emph{optimally labeled} if the label of any leaf $l$ is $0$ 
whenever there exists a dual witness  
for $X(s_l,\var(l))$ 
and $1$ otherwise. 
We denote the tree obtained from $\Gamma$ by optimally relabeling to be $\Gamma_{OPT}$ (the relabeling for a given leaf of a tree $\Gamma$ can be done efficiently by Proposition \ref{prop:dual-witness-can-be-found-efficiently}). 
We build a tester $T'$ as follows:
\begin{enumerate}
\item On input $x \in {\cal Z}_d^n$, choose a tree $\Gamma$ according to $D_{T}$.
\item Choose a uniformly random vector $v$ in $V$ (recall that $V := L \mmod d$).
\item Answer according to the relabeled decision tree $\Gamma_{OPT}$ on input $(x + v)\mmod d$.
\end{enumerate}

It is clear that $T'$ is a linear test and has the same query complexity as that of $T$. 
We now show that the probability of acceptance by $T'$ of any vector $w$ which is $\epsilon$-far from $L$, does not exceed $c+s$. Let us define the following for a tester $\bar{T}$:
\begin{align*}
\rho^{\bar{T}} &:= \underset{y \in V} {\operatorname{avg} } ~Pr[ \bar{T}(y) = 1], \\
\rho^{\bar{T}}_x &:= \underset{y \in (x+V) \mmod d }{\operatorname{avg} } Pr[ \bar{T}(y) = 1].
\end{align*}
Due to the randomness in the choice of the tester $T'$, we have  
\begin{align*}
\rho^{T'} &= Pr[ T'(x) = 1 \mid x \in V],   \\
\rho^{T'}_x &= Pr[ T'(x) = 1].
\end{align*}

Since $T'$ is a $1$-sided tester, we have that $\rho^{T'} = 1$. Since $T$ accepts lattice vectors with probability at least $1-c$, we have $\rho^{T} \geq 1  - c$. Let $x \in {\cal Z}_d^n$ be $\epsilon$-far from $L$. For every $v\in V$, we have that $(x+v) \mmod d$ is also $\epsilon$-far from $L$ by Proposition \ref{prop:mod-preserves-distance}. Therefore, $\rho^{T}_x \leq s$. Using Claim~\ref{prob ineq}, we have 
\[ \rho^{T'}_x  \leq \rho^{T'} - \rho^{T} + \rho^{T}_x  \leq 1 - (1 - c) + s = c+s. \]
\end{proof}

\begin{claim}
\label{prob ineq}
For every $x\in {\cal Z}_d^n$,
\[ \rho^{T'}_x  \leq \rho^{T'} - \rho^{T} + \rho^{T}_x. \]
\end{claim}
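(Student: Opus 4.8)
For every $x \in \mathcal{Z}_d^n$, $\rho^{T'}_x \le \rho^{T'} - \rho^{T} + \rho^{T}_x$.

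The plan is to compare, leaf by leaf of the (random) decision tree, the acceptance behavior of the optimally relabeled tree $\Gamma_{OPT}$ (which underlies $T'$) with that of the original tree $\Gamma$ (which underlies $T$), and to track where the relabeling can only \emph{increase} acceptance. The key structural observation is that relabeling from $\Gamma$ to $\Gamma_{OPT}$ changes a leaf label only by turning it to $1$ (accept), since $\Gamma_{OPT}$ accepts at leaf $l$ unless there is a dual witness for $X(s_l, \var(l))$, and by Proposition~\ref{prop:no-dual-witness-implies-lattice-point-exists} the absence of a dual witness means $V \cap X(s_l, \var(l)) \ne \emptyset$, i.e. $V_l \ne \emptyset$. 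So on any leaf $l$ with $V_l = \emptyset$ the new label is $0$; on any leaf with $V_l \ne \emptyset$ the new label is $1$.

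First I would set up the accounting. Fix the random choice of tree $\Gamma \sim D_T$; it suffices to prove the inequality with $T,T'$ replaced by the deterministic tests $\Gamma, \Gamma_{OPT}$ for this fixed tree and then average over $D_T$ (all three quantities $\rho^{T'}_x, \rho^{T'}, \rho^{T}_x, \rho^{T}$ are $D_T$-averages of the corresponding per-tree quantities, and the inequality is linear, so averaging preserves it). For a fixed tree, $\rho^{\Gamma}$ is the fraction of $y \in V$ that reach an accepting leaf of $\Gamma$, and $\rho^{\Gamma}_x$ is the fraction of $y \in (x+V)\bmod d$ that reach an accepting leaf. Partition $V$ by the leaf reached: $y \in V$ reaches leaf $l$ iff $y \in V_l$, so $\rho^{\Gamma} = \frac{1}{|V|}\sum_{l \text{ accepting in }\Gamma} |V_l|$. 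Similarly $\rho^{\Gamma}_x = \frac{1}{|V|}\sum_{l \text{ accepting in }\Gamma} |V_l^x|$ (note $|(x+V)\bmod d| = |V|$ since translation mod $d$ is a bijection). And because $\Gamma_{OPT}$ accepts exactly the leaves $l$ with $V_l \ne \emptyset$, we get $\rho^{\Gamma_{OPT}}_x = \frac{1}{|V|}\sum_{l : V_l \ne \emptyset} |V_l^x|$ and $\rho^{\Gamma_{OPT}} = \frac{1}{|V|}\sum_{l : V_l \ne \emptyset}|V_l| = \frac{1}{|V|}\sum_l |V_l| = 1$.

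Now I would run the leaf-by-leaf comparison. For each leaf $l$ consider its contribution $\frac{1}{|V|}$ times a coefficient to each of the four sums. Rearranged, the inequality to prove is
\[
\sum_{l : V_l \ne \emptyset} |V_l^x| \;\le\; \sum_{l : V_l \ne \emptyset} |V_l| \;-\; \sum_{l \text{ accepting in }\Gamma} |V_l| \;+\; \sum_{l \text{ accepting in }\Gamma} |V_l^x|.
\]
Split leaves into three types. (i) $V_l = \emptyset$: contributes nothing to the LHS or to the first/second sums on the RHS; if it is accepting in $\Gamma$ it contributes $-|V_l| + |V_l^x| = |V_l^x| \ge 0$ to the RHS, which only helps. (ii) $V_l \ne \emptyset$ and $l$ accepting in $\Gamma$: the $|V_l^x|$ terms cancel (LHS vs. last RHS sum) and the $|V_l|$ terms cancel (second vs. third RHS sum), net zero. (iii) $V_l \ne \emptyset$ and $l$ rejecting in $\Gamma$: contributes $|V_l^x|$ to the LHS and $|V_l|$ to the RHS (via the second sum only). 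So it remains to show that for every leaf $l$ with $V_l \ne \emptyset$ we have $|V_l^x| \le |V_l|$. By Claim~\ref{equal size}, if $V_l^x$ is also non-empty then $|V_l^x| = |V_l|$, and if $V_l^x = \emptyset$ then trivially $|V_l^x| = 0 \le |V_l|$; either way $|V_l^x| \le |V_l|$. Summing the per-leaf inequalities gives the displayed inequality, hence $\rho^{\Gamma_{OPT}}_x \le \rho^{\Gamma_{OPT}} - \rho^{\Gamma} + \rho^{\Gamma}_x$, and averaging over $\Gamma \sim D_T$ finishes the claim.

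The main obstacle — and the only genuinely delicate point — is making the leaf-by-leaf bookkeeping precise: one must be careful that the label $0/1$ in $\Gamma_{OPT}$ is \emph{exactly} the indicator of $V_l \ne \emptyset$ (this is where Proposition~\ref{prop:no-dual-witness-implies-lattice-product-exists}, i.e. Proposition~\ref{prop:no-dual-witness-implies-lattice-point-exists}, is invoked, to rule out the bad case ``no dual witness yet $V_l = \emptyset$''), and that the only leaves where $\Gamma_{OPT}$ and $\Gamma$ differ are those relabeled from reject to accept, never the reverse. Once that monotonicity of relabeling is nailed down, the inequality is a routine nonnegative-term comparison using Claim~\ref{equal size}.
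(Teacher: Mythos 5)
Your proof is correct, and it rests on the same two ingredients the paper uses—Claim~\ref{equal size} and the characterization of $\Gamma_{OPT}$'s accepting leaves via Proposition~\ref{prop:no-dual-witness-implies-lattice-point-exists}—but you present it as a direct leaf-by-leaf summation rather than the paper's induction, which relabels one leaf at a time and verifies the inequality is preserved after each step. The non-inductive version is a bit more transparent: conditioning on the tree, all four $\rho$-quantities collapse to sums over leaves, and the only interesting leaves are those that are rejecting in $\Gamma$ yet have $V_l\ne\emptyset$, where $|V_l^x|\le|V_l|$ is exactly what is needed. One slip worth flagging: you assert that relabeling from $\Gamma$ to $\Gamma_{OPT}$ ``changes a leaf label only by turning it to $1$'' and later refer to ``that monotonicity of relabeling.'' This is false in general—if a leaf $l$ is accepting in $\Gamma$ but admits a dual witness (equivalently $V_l=\emptyset$), then $\Gamma_{OPT}$ relabels it from $1$ to $0$; this is precisely the paper's Case~(i). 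Fortunately your actual case analysis already handles it (your type~(i) clause ``if it is accepting in $\Gamma$'' is exactly this situation, and you correctly observe it contributes $|V_l^x|\ge 0$ to the RHS), so the proof goes through; only the surrounding prose is misleading. Two further cosmetic slips in the bookkeeping do not affect the arithmetic: in type~(ii) the cancelling $|V_l|$ terms come from the first and second RHS sums (you wrote ``second vs.\ third''), and in type~(iii) the $|V_l|$ contribution comes from the first RHS sum (you wrote ``via the second sum only'').
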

\begin{proof}
Let $x$ be a vector in ${\cal Z}_d^n$. We analyze the effect of relabeling a single leaf $l$ of the decision tree $\Gamma$. We show that relabeling $l$ optimally preserves the claim and hence by repeated relabeling, we can deduce the claim.

Case (i). There exists a dual witness for $X(s_l,\var(l))$. Then the leaf $l$ is relabeled from $1$ to $0$. 
If input $y\in X(s_l,\var(l))$, then $y$ cannot be a lattice vector (if $y$ is a lattice vector, then there cannot exist a dual witness for $X(s_l,\var(l))$).  
Therefore, the probability of acceptance of lattice vectors is not changed due to relabeling, i.e., 
$\rho^{T'} = \rho^{T}$. If the leaf $l$ is reached for input $y\in {\cal Z}_d^n\setminus V$, then $T'$ rejects. Thus, relabeling does not increase the probability of acceptance of non-lattice vectors, i.e., 
$\rho^{T'}_y \leq  \rho^{T}_y $. Therefore, $ \rho^{T'}_x  \leq \rho^{T'} - \rho^{T} + \rho^{T}_x $ holds for this case. 

Case (ii). There does not exist a dual witness for $X(s_l,\var(l))$. Then the leaf $l$ is relabeled from $0$ to $1$.

The set of vectors in $V_l \cup V_l^x$ were rejected by $T$ and, after optimal relabeling of the leaf $l$, are now accepted by $T'$. The rest of the vectors in $V$ and $(x+V) \mmod d$ are rejected/accepted equally by both $T$ and $T'$.

Now, if $y$ was a lattice vector, then the probability of accepting a lattice vector increases because of the relabeling of $l$. Among the vectors in $V$, the vectors in $V_l$ are precisely the ones which were rejected before relabeling and are now accepted after relabeling. Since we average over all possible vectors $y\in V$ in the definition of $\rho^{T}$, the fractional change in the acceptance probability given that $T'$ and $T$ chose the decision tree $\Gamma$ is exactly $|V_l|/|V|$. Therefore, 
\[ \rho^{T'} = \rho^{T} + D_{T}(\Gamma) \frac{ \lvert V_l \rvert }{ \lvert V \rvert}. \]

Among the vectors in $(x+V) \mmod d$, the vectors in $V_l^x$ are the only vectors which were rejected before relabeling and are now accepted after relabeling. Thus, the fractional change in the acceptance probability of $(x+v)\mmod d$ given that $T'$ and $T$ chose the decision tree $\Gamma$ is exactly $ \lvert V_l^x \rvert/ \lvert V \rvert$. Therefore, 
\[ \rho^{T'}_x = \rho^{T}_x + D_{T}(\Gamma) \frac{  \lvert V_l^x \rvert }{\lvert V \rvert }. \]

Combining the two equations, we get
 \[ \rho^{T'}_x  =\rho^{T'}  - \rho^{T} +\rho^{T}_x  +  \frac{D_{T}(\Gamma)}{ \lvert V \rvert} ( \lvert V_l^x \rvert - \lvert V_l \rvert ). \]
Using Claim~\ref{equal size}, we know that $\lvert V_l^x \rvert \leq \lvert V_l \rvert$ if $V_l$ is non empty. Since there does not exist a dual witness for $l$, by Proposition \ref{prop:no-dual-witness-implies-lattice-point-exists}, we have that $V_l$ is non-empty. Hence the claim follows. 
\end{proof}


\subsection{Adaptive to Non-adaptive}
In this section we show that given an adaptive linear tester for a lattice, we can construct a non-adaptive linear tester from it without increasing the query complexity or the acceptance probability of non-lattice vectors. 

\AdaptiveToNonAdaptive*
\begin{proof}
Let $T(\epsilon, 0, s,q)$ be an adaptive linear tester for inputs from the domain ${\cal Z}_d^n$ with query complexity $q$. We construct a non-adaptive linear tester $T'(\epsilon, 0, s,q)$ for inputs from the domain ${\cal Z}_d^n$ as follows:
\begin{enumerate}
\item On input $x \in {\cal Z}_d^n$, choose a random vector $v \in V$. 
\item Run $T$ on input $v$. Let $J$ denote the set of coordinates that are queried.
\item Query $x$ on all the coordinates in $J$.
\item Reject if and only if there exists a dual witness for $X(x,J)$.
\end{enumerate}

We note that $T'$ is a linear test and the query complexity of $T'$ is the same as the query complexity of $T$. Since the queries depend only on a random $v\in V$ and not on the input $x$, the test $T'$ is non-adaptive. It remains to bound the acceptance probability of non-lattice vectors by $T'$. We will show that there is no dual witness for $X(x,J)$ if and only if there exists a vector $y\in (x+V)\mmod d$ that is consistent with the queried coordinates of $v$. 
As a consequence, we will show that the probability that $T'$ accepts $x$ is identical to the average acceptance probability of $x+v$ for random vectors $v\in V$ by $T$. Before analyzing the acceptance probability, we introduce a few notations and observations. 

For a decision tree $\Gamma\in \Upsilon_{T}$, we denote the set of leaves of $\Gamma$ which are labeled $1$ by $l_1(\Gamma)$. For a leaf $l$ of $\Gamma$ and a vector $x \in {\cal Z}_d^n$,  
let $I_l^x$ be a boolean (indicator) variable which takes a value of $1$ if and only if $\inprod{\alpha}{x}\in \Z$ for every $\alpha \in L^{\bot}_{var(l)}$. 

Let $\bar{\Gamma}$ be the decision tree chosen by the tester $T'$ on input $x$. The random vector $v\in V$ chosen by $T'$ corresponds to a leaf labeled $1$ in $\bar{\Gamma}$. This is because $T$ is a linear test and hence a lattice vector $v$ cannot have any dual witness. Therefore, $v\in V_{\bar{l}}$ for some $\bar{l}\in l_1(\bar{\Gamma})$. Since $T'$ is a linear test it is clear that $T'$ accepts $x$ if and only if $I_{\bar{l}}^x=1$.

\begin{claim}
\label{coset constraints}
Let $l$ be a leaf of a decision tree $\Gamma\in \Upsilon_{T}$, $x\in {\cal Z}_d^n$ and $y\in (x+V)\mmod d$.We have that $I_l^x=1$ if and only if $I_l^y=1$.
\end{claim}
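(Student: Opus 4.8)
\textbf{Proof proposal for Claim~\ref{coset constraints}.} The plan is to show that for every dual vector $\alpha \in L^{\bot}_{\var(l)}$, the inner products $\inprod{\alpha}{x}$ and $\inprod{\alpha}{y}$ differ by an integer; since $I_l^x=1$ iff $\inprod{\alpha}{x}\in\Z$ for all such $\alpha$ (and likewise for $I_l^y$), this immediately yields the equivalence.

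First I would unwind the definition $y\in(x+V)\bmod d$: there is a vector $v\in V$ and an integer vector $m\in\Z^n$ with $y = x + v - dm$ (the subtraction of $dm$ accounts for the coordinate-wise reduction modulo $d$). Then for any fixed $\alpha\in L^{\bot}_{\var(l)}$,
\[
\inprod{\alpha}{y} = \inprod{\alpha}{x} + \inprod{\alpha}{v} - \inprod{d\alpha}{m}.
\]
Next I would argue that the last two terms are integers. For $\inprod{\alpha}{v}$: by Proposition~\ref{prop:mod-preserves-distance}(2) we have $V = L\cap\calz_d^n\subseteq L$, and $\alpha\in L^{\bot}$, so $\inprod{\alpha}{v}\in\Z$. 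For $\inprod{d\alpha}{m}$: since $d\Z^n\subseteq L$, each scaled standard basis vector $de_i$ lies in $L$, hence $\inprod{\alpha}{de_i}=d\alpha_i\in\Z$, i.e.\ $d\alpha\in\Z^n$; as $m\in\Z^n$ we get $\inprod{d\alpha}{m}\in\Z$. Therefore $\inprod{\alpha}{y}-\inprod{\alpha}{x}\in\Z$, so $\inprod{\alpha}{x}\in\Z$ if and only if $\inprod{\alpha}{y}\in\Z$. Since this holds for every $\alpha\in L^{\bot}_{\var(l)}$, we conclude $I_l^x=1 \iff I_l^y=1$.

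The only slightly delicate point—the main obstacle, such as it is—is handling the reduction term $dm$: one must observe that $d\alpha$ is an \emph{integral} vector even though $\alpha$ itself need not be, which is exactly what the hypothesis $d\Z^n\subseteq L$ buys us. Everything else is a one-line consequence of $v$ being a lattice point and $\alpha$ a dual vector.
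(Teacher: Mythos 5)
Your proof is correct and follows essentially the same route as the paper. The paper observes directly that $y\in(x+V)\bmod d$ implies $x-y\in L$ (because $v\in V\subseteq L$ and $dm\in d\Z^n\subseteq L$), and then concludes that $\inprod{\alpha}{x}$ and $\inprod{\alpha}{y}$ differ by an integer for every $\alpha\in L^\perp$. You reach the same conclusion but split $y-x=v-dm$ and handle the two terms separately; in particular, your argument that $\inprod{d\alpha}{m}\in\Z$ via $d\alpha\in\Z^n$ is a slightly roundabout restatement of the simpler observation that $dm\in d\Z^n\subseteq L$ already gives $\inprod{\alpha}{dm}\in\Z$. The extra detail is harmless but not a different idea.
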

\begin{proof}
If $y\in (x+V) \mmod d$, then $x-y\in L$. If $x, y\in \Z^n$ belong to the same coset of $L$, then for every $a\in L^{\bot}$, we have that $\inprod{x}{a}\in \Z$ if and only if $\inprod{y}{a}\in \Z$. Therefore, there exists $\alpha\in L^{\bot}_{var(l)}$ such that $\inprod{\alpha}{x}\notin \Z$ if and only if there exists $\alpha\in L^{\bot}_{var(l)}$ such that $\inprod{\alpha}{y} \notin \Z$. Hence $I_l^x=1$ if and only if $I_l^y=1$ for every $y\in (x+V) \mmod d$.
\end{proof}

\begin{claim}
\label{equal size 1}
Let $x\in {\cal Z}_d^n$ and $l$ be a leaf of a decision tree $\Gamma\in \Upsilon_{T}$ such that $l\in l_1(\Gamma)$. Then  
$\lvert V_l^x \rvert = I_l^x \lvert V_l \rvert$.
\end{claim}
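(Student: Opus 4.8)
The plan is to case on the value of the indicator $I_l^x\in\{0,1\}$, and in each case reduce to facts already in hand — chiefly Claim~\ref{equal size}, together with the dual-projection identity of Proposition~\ref{prop:projection-of-dual-is-dual-of-projection} and the coset invariance of $I_l^\cdot$ from Claim~\ref{coset constraints}. First I would unpack the hypothesis $l\in l_1(\Gamma)$: since $T$ is a linear tester, a leaf $l$ carries label $1$ exactly when there is no dual witness for $X(s_l,\var(l))$, i.e. $\inprod{\alpha}{s_l}\in\Z$ for every $\alpha\in L^{\bot}_{\var(l)}$. Applying Proposition~\ref{prop:no-dual-witness-implies-lattice-point-exists} with $J=\var(l)$ and $w=s_l$ then gives $V_l=X(s_l,\var(l))\cap V\neq\emptyset$, so $\lvert V_l\rvert\ge 1$; this is what lets us safely invoke Claim~\ref{equal size} later.

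For the case $I_l^x=0$ I would show $V_l^x=\emptyset$, which immediately gives $\lvert V_l^x\rvert=0=I_l^x\lvert V_l\rvert$. Indeed, any $y\in V_l^x$ agrees with $s_l$ on $\var(l)$, and since every $\alpha\in L^{\bot}_{\var(l)}$ is supported on $\var(l)$ we get $\inprod{\alpha}{y}=\inprod{\alpha}{s_l}\in\Z$ by the previous paragraph, i.e. $I_l^y=1$. But $y\in(x+V)\mmod d$, so Claim~\ref{coset constraints} forces $I_l^x=I_l^y=1$, a contradiction.

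For the case $I_l^x=1$ I want $\lvert V_l^x\rvert=\lvert V_l\rvert$; since $\lvert V_l\rvert\ge 1$ it suffices to exhibit a single element of $V_l^x$ and then quote Claim~\ref{equal size}. Combining $I_l^x=1$ with $l\in l_1(\Gamma)$ gives $\inprod{\alpha}{s_l-x}\in\Z$ for every $\alpha\in L^{\bot}_{\var(l)}$, hence $\proj_{\var(l)}(s_l-x)\in\bigl(\proj_{\var(l)}(L^{\bot}_{\var(l)})\bigr)^{\bot}=\proj_{\var(l)}(L)$ by Proposition~\ref{prop:projection-of-dual-is-dual-of-projection} (using that a projection of a rational lattice is a lattice, so double-dual returns it). Choosing $z\in L$ with $\proj_{\var(l)}(z)=\proj_{\var(l)}(s_l-x)$ and setting $y:=(x+z)\mmod d\in{\cal Z}_d^n$, one checks $y-x\in L$ because $z\in L$ and $d\Z^n\subseteq L$, so $y\in(x+V)\mmod d$ by Proposition~\ref{prop:mod-preserves-distance}, while $y_i=(x_i+z_i)\mmod d=(s_l)_i$ for $i\in\var(l)$ since $(s_l)_i\in{\cal Z}_d$ there. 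Thus $y\in V_l^x$, and Claim~\ref{equal size} finishes the case.

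I expect the only genuinely delicate point to be producing the witness $y$ in the last case: it is not enough to know that the coset $x+L$ meets $X(s_l,\var(l))$ — one must reduce the witness modulo $d$ and verify that the reduction still lies in $(x+V)\mmod d$ and still agrees with $s_l$ on the queried coordinates, which is exactly where $d\Z^n\subseteq L$ and $(s_l)_i\in{\cal Z}_d$ enter. Everything else is routine bookkeeping with the definitions of $V_l$, $V_l^x$, and $I_l^x$.
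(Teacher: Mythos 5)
Your proof is correct and takes essentially the same route as the paper: establish $\lvert V_l\rvert\geq 1$ from the leaf label via Proposition~\ref{prop:no-dual-witness-implies-lattice-point-exists}, reduce the claim to showing $V_l^x\neq\emptyset$ iff $I_l^x=1$, use Claim~\ref{coset constraints} for the ``only if'' direction, use the dual-projection machinery to build a witness in the ``if'' direction, and close with Claim~\ref{equal size}. The only cosmetic differences are that you case on $I_l^x$ (contrapositive form) rather than proving the biconditional directly, and you construct the witness $y$ from a single lattice vector $z$ matching $s_l-x$ on $\var(l)$, whereas the paper combines a $v\in V\cap X(x,\var(l))$ with a $u\in V_l$; both are instances of the same argument.
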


\begin{proof}
We know that for every leaf $l$ which is labeled $1$, the set $V_l$ is non-empty since $T$ is a linear tester (using Proposition \ref{prop:no-dual-witness-implies-lattice-point-exists}). By Claim~\ref{equal size} we know that $| V_l^x | = |V_l |$ if $V_l^x$ is also non-empty.  Therefore it is sufficient to show that $I_l^x = 1$ if and only if $V_l^x$ is non-empty 

If $V_l^x$ is non-empty, then by definition, there is a vector $y \in (x+V) \mmod d$ which is consistent with all the queries along the path to $l$. Since $l$ is labeled $1$, we know that $T$ accepts $y$. Since $T$ is a linear tester, this implies that there does not exist an $\alpha\in L^{\bot}_{\var(l)}$ such that $\inprod{\alpha}{x}\notin \Z$. Hence $I_l^y=1$. By Claim~\ref{coset constraints}, we know that $I_l^x$ is also $1$.

If $I_l^x=1$, then for every $\alpha\in L^{\bot}_{\var(l)}$, we have $\inprod{\alpha}{x}\in \Z$. By Proposition \ref{prop:no-dual-witness-implies-lattice-point-exists}, there exists a vector $v\in V\cap X(x,\var(l))$. Hence, we have a vector $v\in V$ whose entries are identical to that of $x$ at the coordinates in $\var(l)$. We observe that the vector $(x-v)\mmod d$ has all $0$ entries at the coordinates in $\var(l)$. Further, $V_l$ is non-empty since $l$ is labeled $1$. Let $u\in V_l$. Then $((x-v)+u)\mmod d$ is consistent with all queries along the path to $l$, and is in $(x+V)\mmod d$. Therefore $V_l^x$ is non-empty.
\end{proof}

We now show that the acceptance probability of $T'$ is equal to the average acceptance probability of $T$.  Let 
\[ \rho_x := \underset{v \in V }{\operatorname{avg} } \Pr[ T((x + v)\mmod d) = 1]. \]
We note that this quantity is $1$ if $x \in V$ and is at most $s$ if $x$ is $\epsilon$-far from the lattice $L$.  The following claim shows that $T'$ accepts an input vector $x$ with probability $1$ if $x\in V$ and with probability at most $s$ if $x$ is $\epsilon$-far from the lattice $L$.
\end{proof}

\begin{claim}
\label{main}
Let $x \in {\cal Z}_d^n$. Then
$\Pr[T'(x) = 1] = \rho_x$.
\end{claim}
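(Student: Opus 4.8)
The plan is to expand both $\Pr[T'(x)=1]$ and $\rho_x$ as a double average over the decision tree $\Gamma\sim D_T$ and the uniform random vector $v\in V$, and then match the two expressions leaf-by-leaf via Claim~\ref{equal size 1}.

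First I would unwind $\Pr[T'(x)=1]$. On input $x$, $T'$ picks $\Gamma\sim D_T$ and $v$ uniformly from $V$, runs $T$ with tree $\Gamma$ on $v$ to reach a leaf $\bar l=\bar l(\Gamma,v)$, queries $x$ on $J=\var(\bar l)$, and accepts iff there is no dual witness for $X(x,J)$, i.e.\ iff $I_{\bar l}^x=1$. Since $T$ is a linear test, $v$ (a lattice vector modulo $d$) admits no dual witness and hence reaches a leaf labeled $1$; grouping the sum over $v$ by the leaf it reaches, and noting that $V_l=\{v\in V: v\text{ reaches }l\}$, gives
\[
\Pr[T'(x)=1]=\sum_{\Gamma}D_T(\Gamma)\,\frac{1}{\lvert V\rvert}\sum_{l\in l_1(\Gamma)}\lvert V_l\rvert\, I_l^x .
\]
Next I would unwind $\rho_x=\operatorname{avg}_{v\in V}\Pr[T((x+v)\mmod d)=1]$. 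For a fixed $\Gamma$, the map $v\mapsto(x+v)\mmod d$ is a bijection from $V$ onto $(x+V)\mmod d$, and $T$ with tree $\Gamma$ accepts $y=(x+v)\mmod d$ exactly when the unique leaf reached by $y$ carries label $1$, i.e.\ when $y\in V_l^x$ for that $l$. Counting these $y$ by their leaf yields
\[
\rho_x=\sum_{\Gamma}D_T(\Gamma)\,\frac{1}{\lvert V\rvert}\sum_{l\in l_1(\Gamma)}\lvert V_l^x\rvert .
\]

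To conclude, I would apply Claim~\ref{equal size 1}, which gives $\lvert V_l^x\rvert=I_l^x\,\lvert V_l\rvert$ for every $l\in l_1(\Gamma)$; substituting this into the second display reproduces the first, so $\Pr[T'(x)=1]=\rho_x$. The step I would be most careful about is the combinatorial bookkeeping: that a decision tree partitions its input domain according to the answers on the queried coordinates, that linearity of $T$ forces every lattice vector onto a $1$-labeled leaf (and, via Proposition~\ref{prop:no-dual-witness-implies-lattice-point-exists}, that $V_l\neq\emptyset$ whenever $l\in l_1(\Gamma)$), and that these facts let one restrict both sums to $l_1(\Gamma)$ and identify the fiber sizes with $\lvert V_l\rvert$ and $\lvert V_l^x\rvert$. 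Everything else is routine counting.
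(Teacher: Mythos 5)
Your proof is correct and follows the same route as the paper's: both expand $\Pr[T'(x)=1]$ and $\rho_x$ as averages over the decision tree $\Gamma\sim D_T$ and the random $v\in V$, group each by the leaf reached to get the two displayed sums over $l\in l_1(\Gamma)$, and then equate them via Claim~\ref{equal size 1} (using linearity of $T$ to confine lattice vectors to $1$-labeled leaves). The bookkeeping you flag as the delicate part is exactly what the paper verifies, so there is no gap.
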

\begin{proof}
The average acceptance probability of $T$ can be viewed as follows: we pick a decision tree $\Gamma$ according to $D_T$. Then we pick a leaf $l$ labeled $1$ with probability proportional to the fraction of vectors in $x+V\mmod d$ that are consistent with the queries along the path to $l$. Therefore, 
\[ \rho_x = \sum_{\Gamma \in \Upsilon_{T}} D_{T}(\Gamma) \left ( \sum_{l \in l_1(\Gamma)} \frac{| V_l^x | }{ | V |} \right ).\] 

We have seen that $T'(x)  = 1$ if and only if for the random vector $v\in V$ chosen by $T'$, and a leaf $\bar{l}\in l_1(\bar{\Gamma})$ such that $v\in V_{\bar{l}}$, we have $I_{\bar{l}}^x=1$. Thus the execution of $T'$ can be treated as follows: First, pick a decision tree $\Gamma\in \Upsilon_T$ according to $D_T(\Gamma)$, then choose a leaf $l$ labeled $1$ in $\Gamma$ with probability proportional to the fraction of vectors in $V$ that are consistent with the queries to the coordinates in $l$. Finally, query $x$ on the variables in $\var(l)$ and accept if and only if $I_l^x=1$. Therefore, the acceptance probability of $T$ is given by
\[ \Pr[T(x) = 1] = \sum_{\Gamma \in \Upsilon_{T}} D_{T}(\Gamma) \left ( \sum_{l \in l_1(\Gamma)} \frac{| V_l| }{ | V |} \cdot I_l^x \right ). \]

By Claim \ref{equal size 1}, we see that $\rho_x=\Pr[T'(x)=1]$.
\end{proof}


\subsection{Handling real-valued inputs}
In this section, we build a tester for real-valued inputs using a tester for bounded integral inputs. We first show how to handle all integral inputs using a tester for integral inputs from a bounded domain.

\BoundedTargets*
\begin{proof}
If we have a tester $T'(\epsilon,c, s,q)$ for integral inputs, then the same tester can be applied to inputs in ${\cal Z}_d^n$ with the same completeness and soundness parameters and the same query complexity.  %
Given a tester $T(\epsilon, c, s,q)$ for inputs from the domain ${\cal Z}_d^n$, we construct the tester $T'(\epsilon, c, s, q)$ for arbitrary integral inputs as follows: On input $x \in \Z^n$ run 
$T(\epsilon, c, s, q)$ on $w := x \mmod d$, and output the result.

If $x$ is a lattice vector, then from Proposition~\ref{prop:mod-preserves-distance}, we know that $w$ is also a lattice vector, and therefore $T'$ accepts $x$ with probability at least $1-c$. If $x$ is $\epsilon$-far from the lattice, then again from Proposition~\ref{prop:mod-preserves-distance}, we know that $w$ is also $\epsilon$-far from the lattice and $T'$ will accept $x$ with probability at most $s$. We note that the query complexity of $T'$ is identical to that of $T$.
\end{proof}
To address the case of real inputs, we will design a tester for the integer lattice. 

\IntegerLatticeTester*
\begin{proof}
The test queries $O((1/\eps^p)\log(1/s))$ coordinates of the input uniformly at random and accepts iff all the queried coordinates are integral. 

If the input is in the lattice, then all the queried coordinates will be integral, and hence the tester will accept. If the input $w$ is at $\ell_p$ distance at least $\eps \cdot \|1^n\|_p$, then at least $\eps^p n$ coordinates of the input are non-integral. Thus the tester will reject with probability at least $1-s$. 

We note that the tester is a linear test: the test described can be viewed as picking independent uniform random standard basis vectors $e_i\in \Z^n\subseteq L^{\perp}$  (where $e_i$ is the indicator vector of the index $i$), for $i\in [n]$, and  testing if the input $w$ satisfies $\langle w, e_i\rangle\in \Z$.

\end{proof}

\RealTargets*
\begin{proof}
Suppose we have a $\ell_p$-tester $T(\epsilon,c, s,q)$ for integer inputs. We can build a tester $T'$ for real valued inputs as follows:
\begin{enumerate}
\item On input $x\in \R^n$, run the $\ell_p$-tester $\bar{T}(\eps/2,0,s,q_Z)$ for $\Z^n$ from Lemma \ref{lemma:integer-lattice-tester} on input $x$. If the tester $\bar{T}$ rejects, then reject.
\item Else, run $T(\eps/2, c,s,q')$ on $x$ where $q'=q(\eps/2,c,s)$ and reject immediately if any of the coordinates queried are not integers; otherwise output the result of $T$. 
\end{enumerate}
If $x$ is a lattice vector, then the acceptance probability of $T'$ is the same as that of $T$ since the tester used in step 1 is a linear tester. If $d_p(x,L)\ge \epsilon \cdot \|1^n\|_p$, then $d_p(x,\round{x})+d_p(\round{x},L)\ge d_p(x,L)\geq  \epsilon \cdot \|1^n\|_p$ and therefore either $d_p(x,\round{x})$ or  $d_p(\round{x},L)$ is at least $\frac12\eps \cdot \|1^n\|_p$. If $d_p(x,\round{x})\ge \frac{1}{2}\eps \cdot \|1^n\|_p$, then $d_p(x,\Z^n)=d_p(x,\round{x})\ge \frac12\eps \cdot \|1^n\|_p$ and therefore step 1 rejects with probability at least $1-s$. If $d_p(\round{x},L)\ge \frac12\eps \cdot \|1^n\|_p$, then step 2 rejects with probability at least $1-s$. 

The number of queries made by the tester $T'$ is $q(\eps/2,c,s) + O((1/\eps^p)\log{(1/s)})$. We note that since the tester used in step 1 is a non-adaptive linear tester, $T'$ would be a non-adaptive linear tester if $T$ is a non-adaptive linear tester.
\end{proof}


\begin{remark}\label{remark:linearD}
We note that the test described in the proof of Theorem \ref{thm:constrD:test} is not a linear test by definition. We now describe a linear test for the code-formula lattice which is equivalent to the test described in 
Section~\ref{subsec:upper-bound-code-formula}. 

Let $T_p$ denote the tester for $\Z^n$. We assume that each code tester $T_i$ for the code $C_i$ is  linear \cite{BHR05} (i.e $T_i$ queries the input $t_i \in \{0,1\}^n$ at $I_i= \{ i_1, \cdots, i_q \} \subseteq [n]$ coordinates according to some distribution and accepts it if and only if $\inprod{t_i}{v} \equiv 0 \mmod 2$ for every $v \in C^{\bot}_{I_i}$). Consider the following variant of the test, that we call $T_{linear}$, which by definition is a linear test: 
\begin{enumerate}
\item Let each $T_i$ query $I_i \subseteq [n]$ coordinates and let $T_p$ query $I_p$ coordinates. 
\item Let $I = \cup_i I_i \cup I_p$.
\item Accept $t$ if $\inprod{t}{x} \in \Z$ for all $x \in (L^{\bot})_I$ 
\item Reject otherwise. 
\end{enumerate}
Note that the query complexity of $T_{linear}$ is upper bounded by the query complexity of $T$.

If the input is a lattice vector; i.e., $t \in L$, then by definition, the inner product of $t$ with every dual lattice vector would be an integer. Therefore, the test is 1-sided.

We now show that $T_{linear}$ rejects all inputs $t$ which are rejected by $T$ and hence, $T_{linear}$ performs at least as well as $T$. 
If $T$ rejects $t$, then there is some $i \in \{0, 1, \ldots, m-1 \}$ such that $t_i$ is rejected by $T_i$ or $t$ is rejected by $T_p$. We note that each code tester $T_i$ and also $T_p$ are linear. Therefore, if $T_i$ rejects $t_i$, then there are no codewords of $C_i$ which agree with $t_i$ on the coordinates $I_i$ queried by $T_i$. So, for the set $I$ which contains $I_i$, there are no codewords of $C_i$ which agree with $t_i$ on the coordinates in $I$. If $T_p$ rejects $t$, then there is some non-integral coordinate in $I_p$ and hence in $I$.
By definition of the code formula construction, $t$ is a lattice vector if and only if for each $i = 0, \ldots m-1$, $t_i$ is a codeword in $C_i$ and $t \in \Z^n$.
Hence, no lattice vector of $L$  agrees with $t$ on those set of coordinates. Therefore, there exists a dual lattice vector supported on $I$, which does not have an integral inner product with $t$. Therefore, $T_{linear}$ also rejects $t$ (and thus, has at least as good a soundness as the original test $T$).
\end{remark} 

\section{Testing membership of inputs outside the span of the lattice}
\label{sec:inputs-outside-span}

In this section we prove Theorems \ref{thm:lower-bound-for-outside-span}, \ref{thm:upper-bound-for-outside-span}, Corollary~\ref{cor:knapsack-lower-bound-outside-span} and Theorem~\ref{thm:knapsack-tester}. We first recall the definitions.
Let $L $ be a rank $k$ lattice in $\Z^n$. Let $S$ denote the $span(L)$ and
$S^{\bot}$ be the subspace orthogonal to $S$.
Let $U = [u_1, \cdots, u_{n-k}]^T \in \R^{(n-k) \times n}$ be an orthonormal basis for $S^{\perp}$. 
Let $P\subseteq [n]$ be the set of coordinates that support the vectors in $S^{\perp}$ i.e.,
\[ P := \bigcup\limits_{i \in [n-k]} supp(u_i). \]

\thmLowerBoundForOutsideSpan*
\begin{proof}

To show the $\Omega(\lvert P \rvert)$  lower bound, we use Yao's principle: we setup a distribution ${\cal D}$ 
on far inputs such that every deterministic algorithm requires $\Omega(|P|)$ queries to distinguish whether the input is $0\in L$ or is far from $L$.
We define ${\cal D}$  as follows: pick $j$  uniformly at random from $P$, and set $t^{(j)}:= De_j$, where 
$D \geq  \frac{\eps \cdot \|1^n\|_p}{\min \limits_{i, j: u_{i,j}\neq 0} \lvert u_{i,j} \rvert} $. The following claim shows that the distance of each such $t^{(j)}$ from $L$ is at least $\eps \cdot \|1^n\|_p$.

\begin{claim}
$d_p(t^{(j)}, L) \geq \eps \cdot \|1^n\|_p$ for every $j \in P$.
\end{claim}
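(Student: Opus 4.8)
The plan is to fix an arbitrary $j\in P$ and an arbitrary lattice vector $w\in L$, and to lower-bound $\|t^{(j)}-w\|_p$ by a quantity independent of $w$; taking the infimum over $w$ then yields the claim. The only structural input is that every lattice vector lies in $span(L)$ and is therefore orthogonal to each row of $U$, i.e. $\langle u_i,w\rangle=0$ for all $i\in[n-k]$ and all $w\in L$.

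First I would extract a ``dual certificate'' for being far. Since $j\in P=\bigcup_{i}supp(u_i)$, there is an index $i^\star\in[n-k]$ with $u_{i^\star,j}\neq 0$, and hence $|u_{i^\star,j}|\ge \mu:=\min_{i,\ell\,:\,u_{i,\ell}\neq 0}|u_{i,\ell}|$. For this $i^\star$ and any $w\in L$,
\[
\langle u_{i^\star},\,t^{(j)}-w\rangle \;=\; D\langle u_{i^\star},e_j\rangle-\langle u_{i^\star},w\rangle \;=\; D\,u_{i^\star,j},
\]
because $\langle u_{i^\star},w\rangle=0$. Thus $|\langle u_{i^\star},t^{(j)}-w\rangle| = D\,|u_{i^\star,j}|\ge D\mu\ge \eps\cdot\|1^n\|_p$, where the last inequality is exactly the defining lower bound on $D$.

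Second I would turn this inner-product bound into an $\ell_p$ bound. Since $u_{i^\star}$ has unit $\ell_2$ norm, orthogonal projection onto the line spanned by $u_{i^\star}$ is a contraction in $\ell_2$, so $\|t^{(j)}-w\|_2 \ge |\langle u_{i^\star},t^{(j)}-w\rangle| \ge \eps\cdot\|1^n\|_p$; equivalently, one may invoke H\"older's inequality together with $\|u_{i^\star}\|_q\le\|u_{i^\star}\|_2=1$. Combining with the norm comparison $\|v\|_p\ge\|v\|_2$ then gives $\|t^{(j)}-w\|_p\ge\eps\cdot\|1^n\|_p$, and since $w\in L$ was arbitrary, $d_p(t^{(j)},L)\ge\eps\cdot\|1^n\|_p$. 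The one place to be careful is this last norm comparison: $\|v\|_p\ge\|v\|_2$ is valid for $p\le 2$, which covers the $\ell_1$ and $\ell_2$ settings relevant here; for $p>2$ the same argument goes through after replacing $\|1^n\|_p$ by $\|1^n\|_2$ in the definition of $D$ and using $\|v\|_p\ge n^{1/p-1/2}\|v\|_2$ instead. I expect this bookkeeping around the norm inequality, rather than the structural step, to be the only real obstacle, since orthogonality of lattice vectors to $span(L)^{\bot}$ makes the rest immediate.
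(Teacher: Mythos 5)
Your proof is correct for $p\le 2$ (which covers the $\ell_1$ and $\ell_2$ settings relevant to the paper), and it takes the same structural approach as the paper: pick a row $u_{i^\star}$ of $U$ with $u_{i^\star,j}\neq 0$, exploit orthogonality of $span(L)$ to $u_{i^\star}$ to get $\langle u_{i^\star},\,t^{(j)}-w\rangle = Du_{i^\star,j}$, and convert this into a distance bound. The difference lies in the norm bookkeeping, and yours is the cleaner one. The paper routes through the $\ell_2$-orthogonal projection $t^{(j)\perp}$ and asserts $\|t^{(j)\perp}\|_p^p = \sum_\ell |\langle t^{(j)},u_\ell\rangle|^p$, an identity valid only at $p=2$; it then implicitly uses $d_p(t^{(j)}, S)\geq\|t^{(j)\perp}\|_p$, which also fails for general $p$ (a two-dimensional example already gives a counterexample at $p=1$: take $S^\perp$ spanned by $(1,2)/\sqrt{5}$, $t^\perp=(1,2)$, and shift by $a=(-1,1/2)\in S$ to get $\|a+t^\perp\|_1=5/2<3=\|t^\perp\|_1$). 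Your argument --- one dual-certificate inner product, Cauchy--Schwarz in $\ell_2$, and the monotonicity $\|\cdot\|_p\geq\|\cdot\|_2$ for $p\leq 2$ --- sidesteps both issues. You are also right that for $p>2$ the definition of $D$ would need to absorb the lossy factor $n^{1/p-1/2}$ coming from the reversed $\ell_p$/$\ell_2$ comparison; the paper's choice of $D$ does not account for this. In short: same idea, more careful execution, and you have identified a small gap in the paper's write-up for $p>2$.
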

\begin{proof}
It is sufficient to show that $t^{(j)}$ is far from $S$, since $L\subseteq S$. 
Let $t^{(j)} = t^{(j) \parallel} + t^{(j) \perp}$, 
where $t^{(j) \parallel}$ is the component of $t^{(j)}$ in $S$ and 
$t^{(j)\perp}$ is the component of $t$ in $S^{\perp}$. 
By definition, 
\[t^{(j) \perp} = proj_{S^{\perp}}(t^{(j)}) = \sum\limits_{\ell \in [n-k]} \langle t^{(j)}, u_\ell \rangle u_\ell .\]
Since $U$ is an orthonormal basis of $S^{\perp}$,
\[ \| t^{(j) \perp} \|_p^p = \sum\limits_{\ell \in [n-k]} \lvert \langle t^{(j)}, u_\ell \rangle \rvert^p = \sum\limits_{\ell \in [n-k]} (Du_{\ell, j})^p \geq (\eps \cdot \|1^n\|_p)^p. \]
The last inequality follows from the choice of $D$ and the fact that there exists at least one $u_{\ell,j} \neq 0$ since $j \in P$. 
Therefore, the distance of $t^{(j)}$ from $S$ and hence from $L$, is at least $\eps \cdot \|1^n\|_p$.
\end{proof}

By the choice of the distribution, every deterministic test fails on inputs drawn from ${\cal D}$ with probability $1/|P|$. Thus any randomized test requires $\Omega(|P|)$ queries in order to succeed with constant probability.
\end{proof}

\thmUpperBoundForOutsideSpan*
\begin{proof}
Let $T(\eps, c, s, q)$ be an $\ell_p$-tester for $L$ for inputs in $span(L)$ with query complexity $q=q(\eps)$. We now design a tester $T'(\eps', c', s', q')$ for $L$ for inputs $t = (t_1, t_2, \cdots, t_n) \in \R^n$. 
By making an additional $|P|$ queries, $T'$ can compute the coordinates of the projection of $t$ onto $S$. If $t$ is far from $L$, then either (i) $t$ is far from $S$ or (ii) $t$ is close to $S$ but far from $L$. The coordinates in $P$ would identify if $t$ is far from $S$ and enable rejection. If $t$ is close to $S$ but far from $L$, the tester $T$ would reject the projection of $t$ onto $S$ thus enabling rejection. We now formalize this intuition.

Let $t = (t_1, t_2, \cdots, t_n) \in \R^n$ be the input to the tester $T'$. We compute the projection of $t$ on $span(L)$ by querying all the coordinates in $P$. Let $t^{\perp}$ be the projection of $t$ onto $S^{\perp}$. 
Since $U$ is an orthonormal basis for $S^{\perp}$, we have
\[ t^{\perp} = \sum\limits_{\ell \in [n-k]} \langle t, u_\ell \rangle u_\ell . \]
Each inner product in this expression can be computed using only the coordinates in $P$ and therefore, $t^{\perp}$ can be computed from $t$ by querying just $\lvert P \rvert$ coordinates.  If $\|t^{\perp}\|_p \geq \eps'/2 \cdot \|1^n\|_p$, then $T'$ rejects $t$ immediately.  So we now assume $\|t^{\perp}\|_p < \eps'/2 \cdot \|1^n\|_p$
The projection of $t$ onto $S$ is: 
\[ 
t_{j}^{\parallel} = \left\{
	\begin{array}{ll}
	t_j & \mbox{ if  } j \notin P \\
t_j - t_{j}^{\perp} & \mbox{ if }  j \in P
\end{array}
\right.
\]
Now we run the tester for $T$ on input $t^{\parallel}$ for distance parameter $\eps = \eps'/2$ and accept $t$ if and only if $T$ accepts. 

If $t \in L$, then $t^{\perp} = 0$ and $T$ would accept with probability at least $1-c$. If $d_p(t, L) \geq \eps' \cdot \|1^n\|_p$, then 
\[ d_p(t^{\parallel}, L) \geq \eps' \cdot \|1^n\|_p - d_p(t^{\perp}, S) \geq \eps'/2\cdot \|1^n\|_p = \eps \cdot \|1^n\|_p.  \]

Therefore, $T$ would reject with probability at least $1-s$. 
Finally, note that $q'(\eps') \leq q(\eps'/2)+ |P|$. 
\end{proof}

\subsection{Testing Knapsack Lattices}\label{sec:knapsack-tester}

\CorKnapsackLowerBoundOutsideSpan*
\begin{proof}
We note that $L$ has rank $n-1$ and the vector $(a_1, a_2, \ldots, a_{n-1}, -1)$ generates the subspace orthogonal to $span(L)$, hence the set $P$ of elements in the support of this space has size $|P|=n$, and the lower bound follows from Theorem \ref{thm:lower-bound-for-outside-span}.
\end{proof}

We now prove Theorem  \ref{thm:knapsack-tester}, namely that knapsack lattices can be tested with a constant number of queries if   the inputs come from the span of the lattice. In fact, we will show that testing such lattices simply reduces to testing membership in $\Z^n$.

\ThmKnapsackTester*
\begin{proof}
Let $L=L_{a_1,\ldots,a_{n-1}}$. Let $w\in \text{span}(L)$ denote the input. Any vector $w \in span(L)$ is of the form \[ w = \left(\alpha_1, \cdots, \alpha_{n-1}, \sum_{i=1}^{n-1} a_i \alpha_i \right) \]
for some real values $\alpha_1,\ldots, \alpha_{n-1}$.
Let $w'\in \R^{n-1}$ denote the projection of $w$ on the first $n-1$ coordinates. Let $T_p(\eps',0,s',q')$ denote the $\ell_p$-tester for $\Z^{n-1}$, where $q'= O\left(\left( \frac{1}{\eps'^p}\right) \log\frac{1}{s'}\right)$. 

The tester proceeds as follows: Run the tester $T_p(\eps'=\eps/(M+1)^{1/p},0,s,q=O\left(\left( \frac{M}{\eps^p}\right) \log\frac{1}{s}\right) )$ on input $w'$. Accept if and only if the tester $T_p$ accepts. 

The query complexity of the tester is immediate.  
If $w \in L$, then each coordinate is integral. Therefore the test accepts $w$ with probability 1. We use the following claim to analyze the soundness of the test.

\begin{claim}\label{claim:distance-L-knapsack}
Let $w \in \text{span}(L)$, and $w' = (w_1, \cdots, w_{n-1}) \in \R^{n-1}$ then,
\[ d(w, L)^p \leq (M+1) \cdot d(w', \Z^{n-1})^p \]
\end{claim}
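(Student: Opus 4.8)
The plan is to exhibit an explicit lattice vector close to $w$ by rounding the first $n-1$ coordinates of $w$ to their nearest integers, and then bound the resulting $\ell_p$-distance coordinate by coordinate. Concretely, let $\alpha_i = w_i$ for $i \in [n-1]$, so that $w = (\alpha_1,\dots,\alpha_{n-1},\sum_{i=1}^{n-1} a_i \alpha_i)$ since $w \in \mathrm{span}(L)$. Set $\beta_i := \round{\alpha_i} \in \Z$, and consider the vector $u := \sum_{i=1}^{n-1} \beta_i b_i = (\beta_1,\dots,\beta_{n-1},\sum_{i=1}^{n-1} a_i \beta_i) \in L$. This is the natural candidate: its first $n-1$ coordinates are exactly the nearest-integer rounding of $w'$, and the last coordinate is forced by the knapsack structure.

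Next I would compute $\|w - u\|_p^p$ directly. For each $i \in [n-1]$ the contribution is $|\alpha_i - \beta_i|^p = |\alpha_i - \round{\alpha_i}|^p$, and summing these over $i$ gives exactly $d(w',\Z^{n-1})^p$ (taking $\beta$ to be the nearest-integer vector, which realizes the $\ell_p$-distance of $w'$ to $\Z^{n-1}$). The last coordinate contributes $\bigl|\sum_{i=1}^{n-1} a_i(\alpha_i - \beta_i)\bigr|^p$. The one step requiring a small inequality is bounding this last term: I would argue
\[
\left|\sum_{i=1}^{n-1} a_i(\alpha_i - \beta_i)\right|^p \le M \sum_{i=1}^{n-1} |\alpha_i - \beta_i|^p = M \cdot d(w',\Z^{n-1})^p,
\]
where $M = \max_i |a_i|^p$. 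This follows because each $|\alpha_i - \beta_i| \le 1/2 \le 1$, so for the tuple $(a_i(\alpha_i-\beta_i))_i$ one has $\bigl|\sum_i a_i(\alpha_i-\beta_i)\bigr| \le \sum_i |a_i|\,|\alpha_i-\beta_i| \le \sum_i |a_i|\,|\alpha_i-\beta_i|^{1/p}$ when $|\alpha_i-\beta_i|\le 1$ (since raising a number in $[0,1]$ to a larger power only decreases it, and $1/p \le 1$); then one can bound $\bigl(\sum_i |a_i| |\alpha_i-\beta_i|^{1/p}\bigr)^p$ via the power-mean/monotonicity argument against $M \sum_i |\alpha_i-\beta_i|^p$. (Alternatively, and more cleanly, I would simply write $\bigl|\sum_i a_i(\alpha_i-\beta_i)\bigr|^p \le \bigl(\max_i |a_i|\bigr)^p \bigl(\sum_i |\alpha_i-\beta_i|\bigr)^p$ and then use that $|\alpha_i-\beta_i|\le 1$ implies $\bigl(\sum_i |\alpha_i-\beta_i|\bigr)^p$ can be compared to $\sum_i|\alpha_i-\beta_i|^p$ — but actually the honest bound I want is just that each error term is at most $1$ in absolute value so $p$-th powers behave well; I would nail down the cleanest form in the write-up.) Combining, $d(w,L)^p \le \|w-u\|_p^p \le d(w',\Z^{n-1})^p + M\cdot d(w',\Z^{n-1})^p = (M+1)\,d(w',\Z^{n-1})^p$.

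The main obstacle is getting the inequality on the last coordinate exactly right with the constant $M$ (rather than something like $M(n-1)^{p-1}$ that a naive H\"older application would give); the key observation that saves a dimension factor is precisely that the rounding errors $|\alpha_i - \beta_i|$ lie in $[0,1/2]$, so that $|\alpha_i-\beta_i|^p \le |\alpha_i-\beta_i|$ and more generally the errors do not "spread out" the way arbitrary real vectors would. Once Claim~\ref{claim:distance-L-knapsack} is established, the soundness of the tester in Theorem~\ref{thm:knapsack-tester} follows immediately: if $d(w,L) \ge \eps\|1^n\|_p = \eps\, n^{1/p}$, then $d(w',\Z^{n-1})^p \ge \eps^p n/(M+1) \ge (\eps')^p(n-1)$ for $\eps' = \eps/(M+1)^{1/p}$ (using $n \ge n-1$), so $w'$ is $\eps'$-far from $\Z^{n-1}$ and the tester $T_p$ rejects with probability at least $1-s$; completeness and query complexity are as stated in the proof.
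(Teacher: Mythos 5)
Your proposal takes the same route as the paper: choose the candidate lattice vector $v=(\round{w_1},\dots,\round{w_{n-1}},\sum_{i=1}^{n-1} a_i\round{w_i})\in L$ and bound $\|w-v\|_p^p$ coordinate by coordinate, with the only nontrivial work on the last coordinate. You are right to flag that step as the main obstacle, and in fact it cannot be closed for $p>1$. The paper passes over it with the step
\[
\Bigl|\textstyle\sum_{i=1}^{n-1} a_i\bigl(w_i-\round{w_i}\bigr)\Bigr|^p \;\le\; \sum_{i=1}^{n-1}\bigl|a_i(w_i-\round{w_i})\bigr|^p,
\]
i.e.\ $|\sum_i x_i|^p\le\sum_i|x_i|^p$; this is the triangle inequality for $p=1$ but is \emph{false} for $p>1$ (e.g.\ $x_1=x_2=1$, $p=2$ gives $4>2$). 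Your two attempted patches also push the inequality in the wrong direction: for $p\ge 1$ one has $(\sum_i|\delta_i|)^p\ge\sum_i|\delta_i|^p$, not $\le$, and the observation $|\delta_i|\le 1/2$ (hence $|\delta_i|^p\le|\delta_i|$) controls individual terms but not the $p$-th power of the sum. So neither the paper's one-line justification nor your sketch establishes the key inequality when $p>1$.

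Moreover, for $p>1$ the claim itself fails, not just the proof. Take $n-1=2$, $a_1=a_2=1$ (so $M=1$), $p=2$, and $w=(1/3,1/3,2/3)\in\mathrm{span}(L)$. Then $w'=(1/3,1/3)$ gives $d(w',\Z^2)^2=2/9$, so $(M+1)\,d(w',\Z^2)^2=4/9$; but every lattice point $(\beta_1,\beta_2,\beta_1+\beta_2)$ with $\beta\in\Z^2$ has $\|w-v\|_2^2\ge 6/9$ (attained at $\beta\in\{(0,0),(1,0),(0,1)\}$), so $d(w,L)^2=6/9>4/9$. Thus both proofs are correct only for $p=1$, where the triangle inequality gives $|\sum_i a_i\delta_i|\le\sum_i|a_i||\delta_i|\le M\sum_i|\delta_i|$ directly. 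For $p>1$, the rounding argument only yields $d(w,L)^p\le\bigl(1+M(n-1)^{p-1}\bigr)\,d(w',\Z^{n-1})^p$ via Jensen's inequality, and Claim~\ref{claim:distance-L-knapsack} as stated would need a correspondingly weaker constant (and Theorem~\ref{thm:knapsack-tester} a correspondingly larger query complexity).
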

\begin{proof}
Consider the following vector $v \in L$: 
\[ v = (\round{w_1}, \cdots, \round{w_{n-1}}, \sum_{i=1}^{n-1} a_i \round{w_i} ) \]
where $\round{w_i}$ denotes the rounding of $w_i$ the nearest integer. We now upper bound the distance of $w$ from $L$ using this lattice vector $v$. 
\begin{align*}
d(w, L)^p & \leq d(w, v)^p = \| w - v \|_p^p  \\
& = \sum\limits_{i=1}^{n-1} \lvert w_i - \round{w_i} \rvert^p + \lvert w_n - \sum_{i=1}^{n-1} a_i \round{w_i}\rvert^p \\
& = \sum\limits_{i=1}^{n-1} \lvert w_i - \round{w_i} \rvert^p + \lvert \sum_{i=1}^{n-1} a_i w_i - \sum_{i=1}^{n-1} a_i \round{w_i}\rvert^p \\
& \leq \sum\limits_{i=1}^{n-1} \lvert w_i - \round{w_i} \rvert^p +  \sum_{i=1}^{n-1} \lvert a_i (w_i -\round{w_i}) \rvert^p \\
& \leq \sum\limits_{i=1}^{n-1} \lvert w_i - \round{w_i} \rvert^p +  M \sum_{i=1}^{n-1} \lvert w_i -\round{w_i} \rvert^p \\
& = (M+1) \cdot \sum\limits_{i=1}^{n-1} \lvert w_i - \round{w_i} \rvert^p  \\
& = (M+1) \cdot d(w', \Z^{n-1})^p
\end{align*}
\end{proof}

It remains to bound the soundness error probability. If $d(w, L) \geq \eps \|1^n\|_p$, then from Claim~\ref{claim:distance-L-knapsack}, we get that $d(w', \Z^{n-1}) \geq (\eps/(M+1)^{1/p})\|1^n\|_p$. Therefore, the tester $T_{p}$ rejects $w$ with probability at least $1-s$. 
 
\end{proof}

\section{Discussion}
In this paper we defined a notion of local testing for a new family of objects: point lattices. Our results demonstrate connections between lattice testing and the ripe theory of locally testable codes, and bring up numerous avenues for further research (particularly, Questions ~\ref{question-1} and ~\ref{question-2}).

We remark that the notion of being `$\eps$-far' from the lattice may be defined differently than in Definition \ref{defn:local-test}, depending on the application of interest. In particular, 
in applications like IP and cryptography, it is natural to ask for a notion of testing that ensures that scaling the lattice does not change the query complexity. An alternate definition of $\eps$-far based on the \emph{covering radius} of the lattice could be helpful to achieve this property. 
The covering radius of a lattice $L\subseteq \R^n$ (similar to codes) is the largest distance of any vector in $\R^n$ to the lattice. It is trivial to design a tester to verify if a point is in the lattice or at distance more than the covering radius from the lattice (simply accept all inputs). 
In order to have a tester notion where scaling preserves query complexity, we may define a vector as being $\eps$-far from the lattice, if the distance of the vector to every lattice point is at least $\eps$ times the covering radius of the lattice. We note that the covering radius of any {\em integral lattice} is $\Omega(\|1^n \|_p)$. Indeed, the densest possible integral lattice, namely the integer lattice $\Z^n$, has covering radius $(1/2)\|1^n \|_p$, as exhibited by the point $v=(1/2,\ldots, 1/2)\in \R^n$. Thus, by asking the tester to reject points at distance more than $\eps \|1^n \|_p$ in Definition \ref{defn:local-test}, we have settled upon a strong notion of being $\eps$-far from the lattice (i.e., the definition would in particular imply that vectors that are farther than $\eps$ times the covering radius would be rejected by the tester). This definition is essentially equivalent to the current Definition \ref{defn:local-test} if the covering radius of the lattice is $\Theta(n)$. With the modified definition of local testers using covering radius as described above, the equivalent Question 1 is to identify a family of lattices that can be tested using a constant number of queries, achieves constant rate and whose ratio of minimum distance to covering radius is also at least a constant. \\

\noindent{\bf Acknowledgments.}
We thank Chris Peikert for mentioning to us about the potential application to cryptanalysis, and anonymous  reviewers for helpful comments and pointers.

\bibliographystyle{myplain}
\bibliography{PropertyTest-Bib}

\end{document}